%% file: arxiv.tex
\documentclass[11pt]{article}

\usepackage[margin=1in]{geometry}
\usepackage{parskip}  
\usepackage{amsmath}
\usepackage{amssymb}
\usepackage{mathtools}
\usepackage{amsthm}
\usepackage{dsfont}
\usepackage{libertine}
\usepackage[libertine]{newtxmath}
\usepackage{inconsolata}  

\usepackage{microtype}
\usepackage{graphicx}
\usepackage{subcaption}
\usepackage{booktabs}
\usepackage{multirow}
\usepackage{array}
\usepackage{multicol}
\usepackage{siunitx}
\usepackage[x11names]{xcolor}
\usepackage{tikz}

\input{macros}

\usepackage{float}
\usepackage[ruled,vlined,linesnumbered]{algorithm2e}
\usepackage{placeins}
\usepackage[most]{tcolorbox}

\usepackage[round,sort&compress]{natbib}

\usepackage[T1]{fontenc}

\theoremstyle{plain}
\newtheorem{theorem}{Theorem}[section]
\newtheorem{proposition}[theorem]{Proposition}

\theoremstyle{definition}

\theoremstyle{remark}
\newtheorem{remark}[theorem]{Remark}

\usepackage[hyphens]{url}

\definecolor{sapphire}{rgb}{0.15, 0.25, 0.65}
\definecolor{emerald}{rgb}{0.10, 0.55, 0.45}
\definecolor{amethyst}{rgb}{0.55, 0.25, 0.60}
\usepackage[colorlinks=true,linkcolor=emerald,citecolor=sapphire,urlcolor=amethyst]{hyperref}

\usepackage[capitalize,noabbrev]{cleveref}

\usepackage{tcolorbox}
\usepackage{listings}
\newcommand{\promptbox}[3]{%
  \begin{figure}[ht]
  \begin{tcolorbox}[
    colback=gray!10,
    colframe=gray!50
  ]%
  \lstinputlisting{#2}%
  \end{tcolorbox}%
  \caption{#1}
  \label{#3}
  \end{figure}%
}

\newcommand{\slopeflipup}{\tikz[baseline=-0.5ex]{\draw[thick] (0,0.12) -- (0.12,0) -- (0.24,0.12);}}
\newcommand{\slopeflipdown}{\tikz[baseline=-0.5ex]{\draw[thick] (0,0) -- (0.12,0.12) -- (0.24,0);}}
\newcommand{\slopeflatup}{\tikz[baseline=-0.5ex]{\draw[thick] (0,0) -- (0.12,0) -- (0.24,0.12);}}
\newcommand{\slopeflatdown}{\tikz[baseline=-0.5ex]{\draw[thick] (0,0.12) -- (0.12,0.12) -- (0.24,0);}}
\newcommand{\slopeupflat}{\tikz[baseline=-0.5ex]{\draw[thick] (0,0) -- (0.12,0.12) -- (0.24,0.12);}}
\newcommand{\slopedownflat}{\tikz[baseline=-0.5ex]{\draw[thick] (0,0.12) -- (0.12,0) -- (0.24,0);}}
\newcommand{\methodname}{\textsc{PuLSE}}
\newcommand{\longmethodname}{Public and Longitudinal Signals for Evaluation}

\providecommand{\feat}[1]{\textit{#1}}
\providecommand{\grayfeat}[1]{\textcolor{gray}{\textit{#1}}}
\providecommand{\dagfeat}[1]{\textit{#1\textsuperscript{\textdagger}}}

\title{Three Years of r/ChatGPT: \\ Societal Impact Evaluations from Social Media Data}

\author{
Jessica Dai*, Sean Garcia, Emma Pierson, Benjamin Recht, Nika Haghtalab \\
\textit{University of California, Berkeley}
}

\date{June 2026}

\begin{document}

\maketitle

\renewcommand{\thefootnote}{\fnsymbol{footnote}}
\footnotetext[1]{To appear at ICML 2026. Correspondence to \href{mailto:jessicadai@berkeley.edu}{\texttt{jessicadai@berkeley.edu}}.}
\renewcommand{\thefootnote}{\arabic{footnote}}

\begin{abstract}
ChatGPT was launched on November 30, 2022; the r/ChatGPT subreddit was created just one day later.
Since then,
chatbot-based AI products have gone from niche proofs-of-concept to widely-used household names.
However, the ways in which adoption has developed among the public remains poorly understood. In this paper, we develop a framework for using social media as a data source for understanding the societal impact of widely-adopted consumer AI products, and propose \methodname~(\emph{\longmethodname}), a general approach to monitoring for societally-impactful trends in real time.
We apply our framework to conduct what is, to the best of our knowledge, the first longitudinal study of r/ChatGPT.
We find that, overall, r/ChatGPT posts over time illustrate the normalization of ChatGPT as an everyday consumer product rather than an exceptional, novel technology.
However, our retrospective analysis also finds that posts about using ChatGPT for mental health support, and posts about developing emotional attachments to ChatGPT, both rise steadily in frequency almost immediately after the launch of GPT-4o in May 2024.
We show that \methodname~can detect the increase in emotional engagement as early as October 2024---months before OpenAI made any (public) acknowledgment of this impact.

\vspace{12pt}\noindent 
An interactive site to explore our results and methods, updated daily with live data, is available at \href{https://rchatgpt-pulse.github.io}{\texttt{\textbf{rchatgpt-pulse.github.io}}}.
\end{abstract}

\newpage

\begin{figure}[h]
    \centering
    \includegraphics[width=0.8\linewidth]{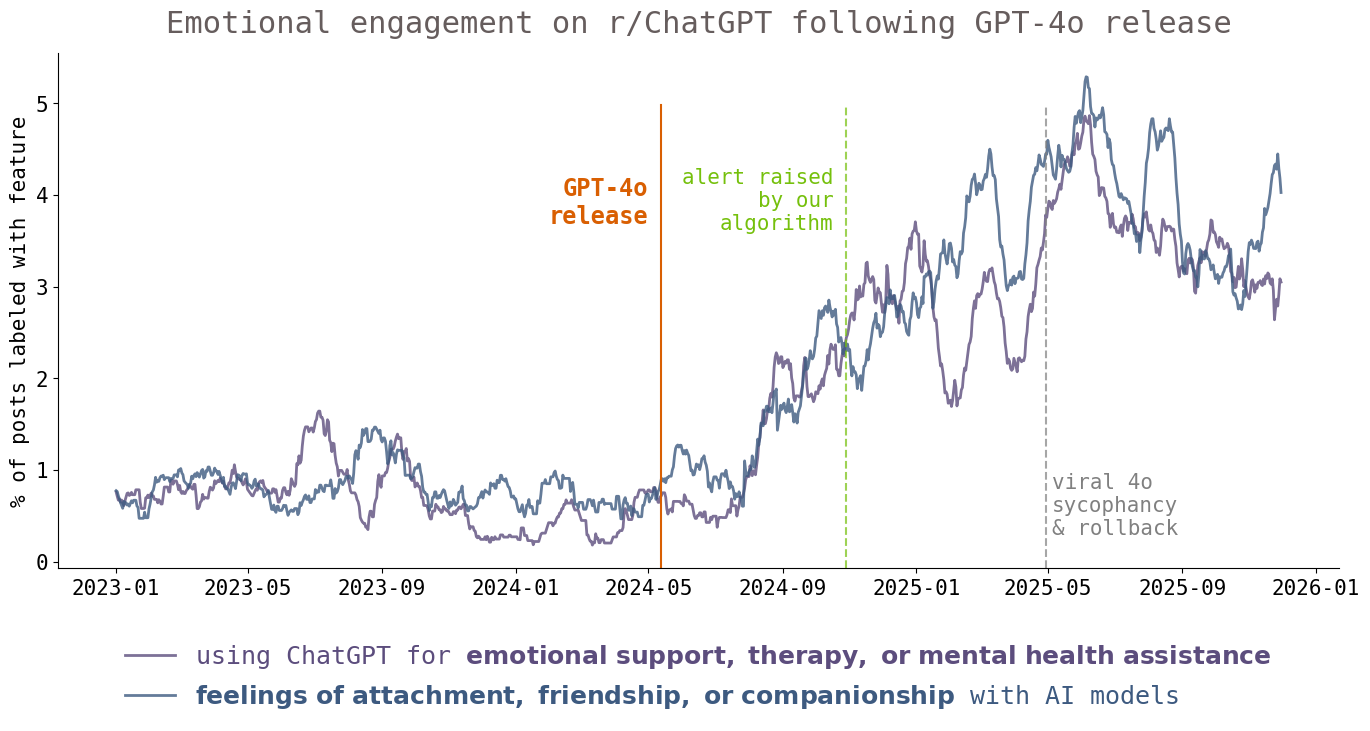}
    \caption{\footnotesize\textit{Posts about both AI therapy and AI companionship begin to rise in frequency almost immediately after the release of GPT-4o. We propose a real-time monitoring method (Section \ref{sec:realtime}) that could have detected this as early as October 2024; in contrast, GPT-4o's behavior did not reach the level of public discourse until April 2025, when an extremely-sycophantic update triggered a rollback.}\protect\footnotemark}
    \label{fig:fig1}
\end{figure}
\footnotetext{{\url{https://openai.com/index/sycophancy-in-gpt-4o/}}}

\section{Introduction}
\label{sec:intro}

The launch of ChatGPT in late 2022 was a watershed moment for consumer AI products: ChatGPT reflected a step-change not only in the capabilities of AI products available to the general public, but in the degree to which any LLM-based product reached widespread consumer adoption.
Now, a little more than three years
after ChatGPT's launch, this recent history can be studied with the benefit of hindsight. To this end,
recent works have sought to understand the realized impact of deploying LLM-based products on domains such as education, labor, and healthcare (e.g., \citet{bastani2025generative,brynjolfsson2025generative,goh2024large}).

Domain-specific evaluations naturally give rise to well-defined measurement targets that can be pre-specified and tracked over time.
However,
a technology with a user base approaching a billion users will inevitably have unpredictable effects. How might we identify---and study---such effects?

In this work, we turn to social media:
beyond adoption, ChatGPT is also unique in the extent to which its rollout has been ``online.'' Its users are highly active on social media---in fact, its early and explosive success can be attributed at least in part to virality on platforms like Twitter/X and Reddit.
This makes social media a natural source of data for studying the societal impacts of ChatGPT in particular.\footnote{OpenAI employees often interact directly with users online; in fact, Sam Altman and other company leadership have conducted multiple Reddit AMAs (``ask me anything'' sessions, where subreddit members can post questions for AMA subjects to reply to). The first appears to have been in 2024 (\url{https://www.reddit.com/r/ChatGPT/comments/1ggixzy/ama_with_openais_sam_altman_kevin_weil_srinivas/}).}

Our approach relies on the core assumption that social media posts from everyday users of a technology reflect those users' perceptions and priorities about that technology---that is, that social media provides signal about ``societal impact.''
However, what those perspectives actually entail is unknown \textit{a priori}.
Our framework thus begins with an unsupervised step to identify potentially-relevant ideas surfaced among all posts.
Our key proposal to formalize \textit{impact} is to explicitly track how these concepts develop over time; this can be quantified by placing temporal behavior in context with known external events, such as model and product releases.

To the best of our knowledge, ours is the first longitudinal analysis of r/ChatGPT over this time period.
Our substantive findings in Section \ref{sec:retro}
tell two parallel stories of adoption.
On the one hand, ChatGPT has become normalized as a tool that is a part of users' routine workflows for everyday tasks.
On the other, emotional engagement with ChatGPT also emerges as an increasingly compelling use-case; this appears to be driven in large part by the GPT-4o model, which was released in May 2024.

One natural question is whether we might have known about these impacts sooner---and if so, how.
Therefore, in Section \ref{sec:realtime} we also provide a \textit{pro}spective approach to real-time monitoring, which we call \methodname~(\textit{\longmethodname}).
\methodname~discovers statistically meaningful growth in emotional engagement as early as October 2024---long before OpenAI took any public action regarding the emotional-health impacts of their product (see Appendix \ref{app:whoknew} for discussion of what was ``known,'' and by whom, at various points in time). 

\subsection{Related work}
Our work makes use of the rich methodologies that have been developed for clustering, topic modeling, and other unsupervised approaches (e.g., \cite{blei2006dynamic}, or more recently, \citet{pham2024topicgpt,reuter2024gptopic}).
We use sparse autoencoders (SAEs), which have recently emerged as compelling methods for analyzing text-as-data (see, e.g., \citet{jiang2025interpretable,movva2025sparse,peng2025use}); however, 
our methods are agnostic to what specific algorithm is used as long as the outputs of the method are consistent with what is outlined in Section \ref{sec:method}.

Modeling the dynamics of online content over time is also a canonical problem (e.g., \citet{leskovec2009meme,danescu2013no}); more recently,
\citet{desiderio2025highly} study the dynamics of Reddit conversations in response to external events.
Event and topic detection from social media is also well-studied (see, e.g., surveys
in \citet{karimiziarani2022tutorial,atefeh2015survey,asgari2021topic}). Typical methods involve heuristic approaches to sequential clustering (e.g., \citet{kolajo2022real,mccreadie2013scalable,li2017real,aiello2013sensing,fedoryszak2019real,qiu2025text}) and algorithms that identify ``bursts'' in specific topics or keywords (e.g., \citet{mathioudakis2010twittermonitor,xie2016topicsketch,shamma2011peaks}).
A subtle challenge that distinguishes our setting is that while prior work typically focuses on correctly identifying ``trending'' or ``bursty'' topics only in the moment, we are interested in tracking long-run changes over time, not just short-term effects.

The substantive findings we present in Section \ref{sec:retro} build on prior works about social media posts about ChatGPT, including Twitter \citep{demirel2025optimism} and Reddit, that have used both quantitative (e.g., \citet{xu2024public,qutieshat2024unveiling}) and qualitative (e.g., \citet{choi2023exploring}) approaches.
\citet{jung2025ve} explicitly studies posts about mental health on r/ChatGPT.
Prior work has also analyzed subreddits more specific to emotional relationships, such as r/ReplikaOfficial and r/MyBoyfriendIsAI (e.g., \citet{hanson2024replika,depounti2023ideal,tunca2025tracing,pataranutaporn2025my}); our findings are complementary to (and consistent with) these.
To the best of our knowledge, our work is the first to study r/ChatGPT with three years of data, and with the question of temporal variation explicitly in mind. 

A primary goal for this work is to serve as a longitudinal evaluation of the ChatGPT product. Prior works (e.g., \citet{chen2024chatgpt,cen2025large}) have studied LLMs longitudinally by prompting them repeatedly and analyzing how responses change over time; in contrast, the object of our analysis is user-reported impact, rather than immediate LLM output.
In this way, our work can also be thought of as a crowdsourced evaluation (e.g., \citet{deng2024responsible,dai2025aggregated,chiang2024chatbot}), though of course our data was not explicitly collected for the purpose of evaluation. This is complementary to evaluations that use experimental methods to answer pre-specified questions about impact (e.g., \citet{chandra2025longitudinal,fang2025ai,cheng2026sycophantic} on sycophancy and long-term engagement). 

Finally, reliable usage data for LLM products is scarce.
Thus, our work also complements industry whitepapers that report proprietary usage data (e.g., \citet{tamkin2024clio}), 
and independent analyses of transcripts collected via data donations (e.g., \citet{chowdhury2026usage}, which similarly finds emotional engagement growing over time, and \citet{moore2026characterizing}, which studies ``psychosis''-like impacts explicitly). 
Among industry reports, of particular note are \citet{fang2025ai}, a 2025 OpenAI study about the emotional impacts of chatbot design choices, and \citet{chatterji2025people}, which reports that ``the share of [ChatGPT] messages related to companionship or social-emotional issues is fairly small: only 1.9\%,'' and instead emphasizes the extent of ChatGPT's practical uses. In light of this statistic, our results suggest that usage frequency alone cannot paint a full picture of the magnitude of impact.

\subsection{Data}
\label{subsec:data}
Data was collected using a mixture of Pushshift \citep{baumgartner2020pushshift} and the Reddit API. Posts from r/ChatGPT are collected from December 1, 2022 to November 30, 2025, inclusive. Comment and upvote/downvote counts for all posts were updated in January 2026 using the API.
We exclude posts that are deleted, removed, posted by subreddit moderators, or are marked as ``not robot indexable.'' As a lightweight spam filter, we also exclude posts with less than ten words (including title and post body) or two comments.

In total, we work with 137,154 posts, with a median of 107 posts per day (and an average of 125); among the posts we analyze, we have posts from 89,346 unique users (see Appendix \ref{app:prelims} for post volume over time with user information).\footnote{This work is classified as not human-subjects research by our institutional IRB, as we are not intervening on the subreddit, nor are we seeking to identify individual users. }
Reddit cannot be thought of as a truly representative sample of the population of ChatGPT users---e.g., prior work has noted that it skews young, male, white, and educated \citep{proferes2021studying,PewSocialMediaFactSheet2025}. It is nevertheless valuable as an approximation of user feedback, especially without access to OpenAI's internal usage data.
Throughout this work, when we say ``users,'' we refer to the subset of ChatGPT users who post on r/ChatGPT, with the knowledge that the distribution of such users, and their experiences, is only a highly-imperfect proxy for the population of all ChatGPT users.

\section{Preliminaries}
\label{sec:method}

\subsection{Featurization}
This work rests on the ability to learn structured, human-interpretable features in an unsupervised fashion  from unstructured text data.
Formally, a \textit{featurization} $C$ is a mapping $[0,1]^d \to [0,1]^m$ that represents $m$ features;
for a $d$-dimensional representation of some text $X \in [0,1]^d$, the output $C(X) \in [0,1]^m$ quantifies the degree to which that text exhibits each of the $m$ features.
We will use $C^{(i)}$ for any $i \in [m]$ to describe how $C$ represents the single feature $i$, so that $C^{(i)}(X) \in [0,1]$ quantifies the degree to which $X$ exhibits feature $i$; we will sometimes refer to $C^{(i)}(X)$ as the ``activation'' of $i$ on $X$.
In some abuse of notation, we will use $X_s$ to denote all data from timestep $s$, and $X_{s:t}$ to denote the data from timesteps $s$ to $t$.
Throughout this work, we use days as our unit of time, so that each sample $X_s$ is a ``minibatch'' of data from day $s$, and $C^{(i)}(X_s) := \frac{1}{|X_s|} \sum_{X \in X_s} C^{(i)}(X)$ is the average activation for feature $i$ for all texts from day $s$.

To compute our featurizations, we use sparse autoencoders (SAEs) with the standard reconstruction loss.\footnote{That is, we choose $\widehat C$ to minimize the normalized MSE $\tfrac{\sum_X \|X - \widehat C(X))\|_2^2}{\sum_X \|X - \bar X\|_2^2}$, where $\bar X$ is the mean of $X$ over the training set.}
We concatenate post titles and texts, and embed them with OpenAI's \texttt{text-embedding-3} model.
We interpret these features with \texttt{gpt-4.1-mini}, using prompts from the implementation in \citet{movva2025sparse}; see Appendix~\ref{app:prompts}. For feature interpretation, we choose the best of three candidates, measured by F1 score.

\subsection{Retrospective method}

We use top-$K$ SAEs with $K=4$ and $M=128$ (128 features total, allowing each sample to associate with 4 features), with samples weighted by $\log(n_\mathrm{upvotes} - n_\mathrm{downvotes} + n_\mathrm{comments})$; see Appendix \ref{app:method} for discussion of these design decisions, including consideration of PCA and $k$-means clustering as alternatives. 

After initially computing $M=128$ features, we remove some for focus:  generic features, such as \textit{ChatGPT at the start of text} (9 features); features that had very few positively-labeled samples (5); and
features related to image and video generation (14) or
product releases (14). We annotate all samples with binary \textit{labels} for the remaining features, using the majority vote from three candidate labels from \texttt{gpt-4.1-mini}.

\textbf{Characterizing temporal trajectories.}
Given a featurization $C$, we compute the historical frequency of any feature $i$ as a transcript $\{C^{(i)}(X_t)\}_{t \in [T]}$ for feature $i$ at each day $t$. We use the \textit{labels} from LLM annotation, so that $C^{(i)}(X_t) := \frac{1}{|X_t|} \sum_{X \in X_t} \mathbf{1}{[X\textit{ labeled as } i]}$.
We treat the first month (December 2022) as a ``burn-in'' period and remove posts from those days, so that $T=1034$, and apply a 30-day rolling mean. 

To place all features in context with real-world events, we compile a timeline $\cT = \{\tau_1, \tau_2, \dots\}$ of events that we may expect to affect the composition of posts online.  Using OpenAI's official release notes, we choose twelve major model releases, listed in Table \ref{tab:releases}. 
With transcripts and the timeline in hand, we can quantify the degree to which particular features evolve over time, and/or are \textit{reactive} to events in $\cT$. 
Specifically, we assume that, absent any ``impact'', a feature's frequency should be roughly constant.
However, transcripts may suggest evidence of impact in two ways.
A change in slope that begins near or shortly after $\tau_j$ may reflect an effect of event $j$. 
On the other hand, long-run changes in a feature's frequency over the entire period of analysis---i.e., non-zero slope---suggest evidence of changing priorities that are not tied to specific external events, but reflect the progression of adoption more generally. 

To capture the former (reactivity to specific events in $\cT$), we model each transcript as piecewise-linear, with candidate changepoints only from $\cT$;  
for each feature $i$, we approximate its transcript at $t$ as
\begin{equation}
  \lambda^i(t) = \beta_0 + \sum_{j \in [|\cT|]} \gamma_j \max(0, t - \tau_j),
  \label{eq:fit}
\end{equation}
with each $\gamma_j$ being the change in slope at $\tau_j$.\footnote{This approach can be thought of as a simplified interrupted time series (ITS) analysis in which exogenous shocks may induce changes in level and/or slope (see, e.g., \citet{box1975intervention,bernal2017interrupted}). A fully-formal ITS approach that includes additional sensitivity and inference procedures, which would allow explicitly ``causal'' claims to be made (modulo standard ITS identification assumptions, which can be strong), is entirely consistent with our framework; however, doing so is beyond the scope of the current work.
} 
We fit Equation \eqref{eq:fit} for each feature over 100 bootstrap samples, sampling posts with replacement, and report changepoints that are \textit{stable}, i.e. selected in at least half of the bootstrap samples. 
To capture the latter (slope change over the full horizon), we use an OLS slope test for whether each feature's slope corresponds to at least a 10\% change; we Bonferroni-correct over the total number of features, and use Newey-West HAC errors to handle autocorrelation in the time-series \citep{newey1987simple}. 
For details of both changepoint fitting and slope tests, see Appendix~\ref{app:temporal}.

\textbf{Finding ``families'' of related features.}
While our final results inevitably require manual interpretation, we support our analysis by grouping features into ``families'' using quantitative methods. For all features $i$, we compute \textit{co-occurrences} with other features (i.e., which other features appear among posts that are labeled with $i$), and \textit{trajectory similarity} (i.e., which other features exhibit similar temporal behavior, regardless of co-occurence). We then use these similarities to compute a clustering over features.

We show our final categorizations in Appendix~\ref{app:retro}.
In our data, the vast majority of features characterize either \textit{(mundane) adoption} (Section~\ref{subsec:domestication}) or \textit{emotional engagement} (Section~\ref{subsec:emotional}); 
only six features (of 86) do not fit cleanly into any part of our interpretation.

\section{Retrospective findings}
\label{sec:retro}
Our retrospective analysis reveals two major stories of adoption, which we present here. (For completeness, the full set of quantitative results from the method described in the previous section is given in Appendix~\ref{app:retro}.)

Our first finding is that
ChatGPT has become normalized as a regular consumer technology (\ref{subsec:domestication}). While this finding is likely broadly consistent with many readers' personal experiences, we highlight the degree to which it is visible in our data---across features about usage, user perspectives, and linguistic cues.

Our second main finding, previewed in Figure \ref{fig:fig1} and described further in Section \ref{subsec:emotional}, is more striking: the frequency of posts broadly related to emotional engagement---using ChatGPT for mental health support, or developing emotional attachments to models, for instance---began to rise in May 2024, shortly after the release of GPT-4o. This effect is visible long before the emotional and mental health aspects of LLM product usage had entered the public consciousness, and long before OpenAI publicly committed to any action regarding mental health implications of its product.

\begin{figure*}[t]
    \centering
    \begin{subfigure}[b]{0.32\textwidth}
        \centering
        \includegraphics[width=\textwidth]{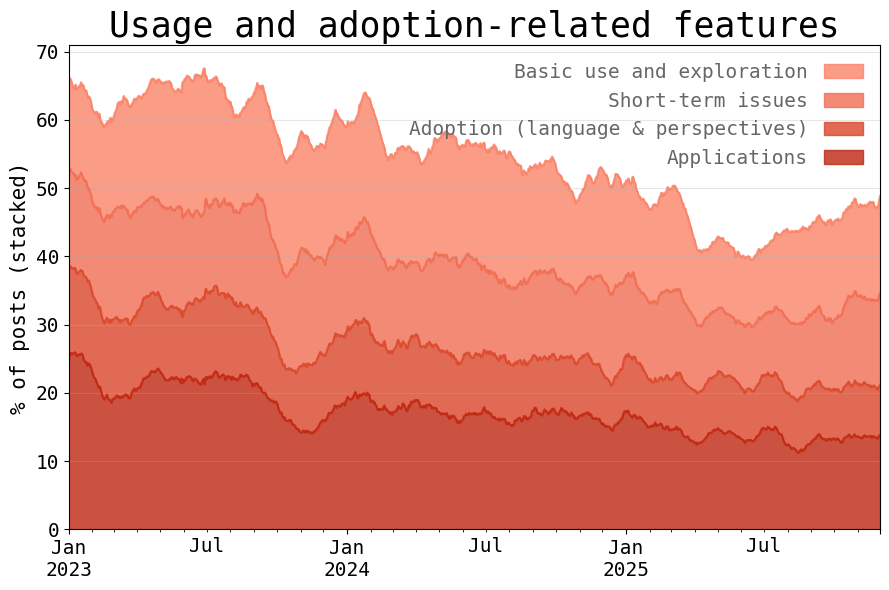}
    \end{subfigure}
    \hfill
    \begin{subfigure}[b]{0.32\textwidth}
        \centering
        \includegraphics[width=\textwidth]{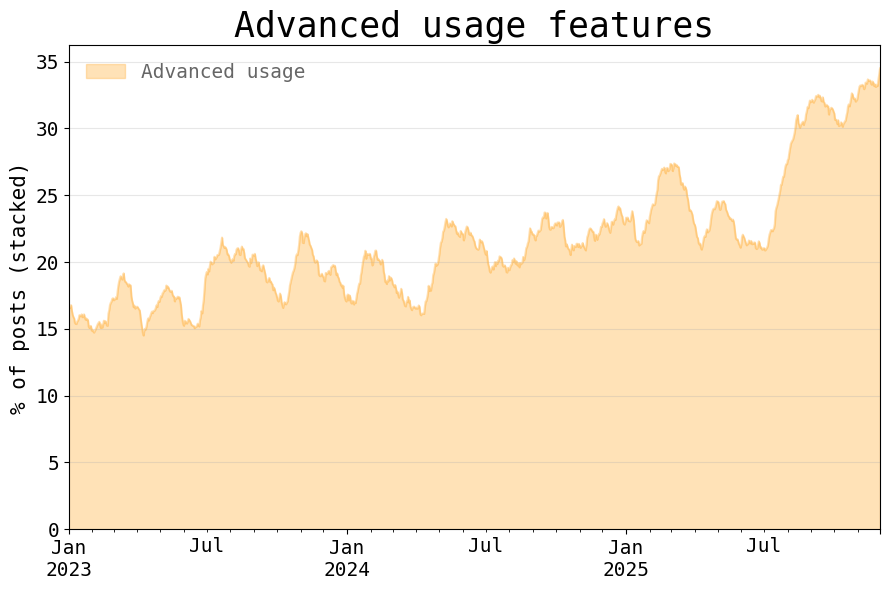}
    \end{subfigure}
    \hfill
    \begin{subfigure}[b]{0.32\textwidth}
        \centering
        \includegraphics[width=\textwidth]{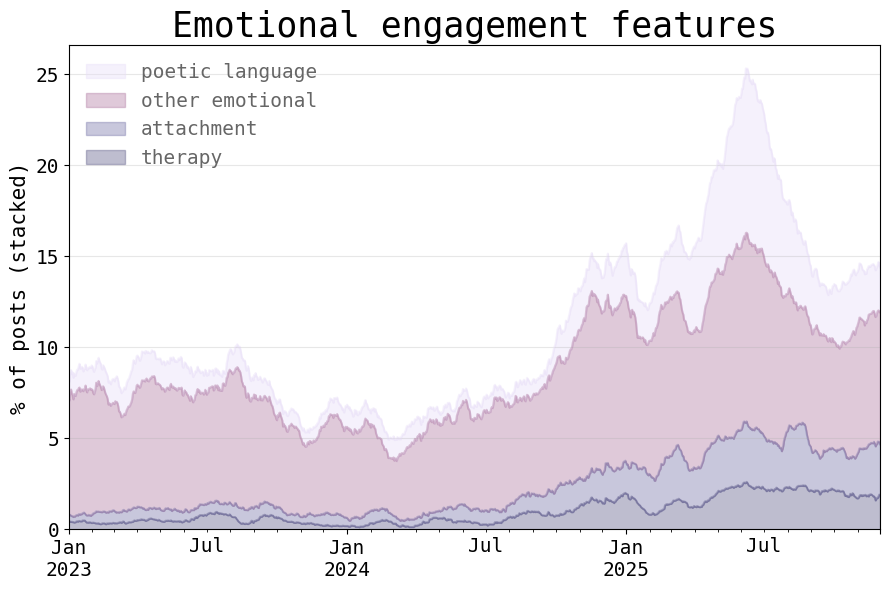}
    \end{subfigure}
    \caption{\footnotesize\textit{Composition of r/ChatGPT by category, over time (see categories in Tables~\ref{tab:domestic-full} and \ref{tab:adoption}); $y$-axis can be interpreted as ``percentage of posts that fall into this category.'' Left: (non-advanced) usage and adoption posts decline. 
    Middle, right: Advanced usage and emotion posts, respectively, increase.}}
    \label{fig:summary}
\end{figure*}

\subsection{The ``domestication'' of ChatGPT}
\label{subsec:domestication}
Our first high-level finding is that, broadly speaking, r/ChatGPT dynamics illustrate the ways in which the ChatGPT product has become normalized as a consumer technology.
We borrow the term ``domestication'' from science and technology studies (STS), where it is a well-studied theory that describes the processes by which novel technologies are absorbed into everyday use (see, e.g., \citet{haddon2007roger}).\footnote{In STS, the word choice of ``domestication'' is meant to evoke the sense of something ``wild'' and strange having been ``tamed''; see discussion in \citet{haddon2007roger}.}
It is useful to keep in mind a key conceptual framing from this theory: posts on r/ChatGPT at any given time reflect what users feel is ``worth posting about'' at that point in time, and changes in the frequency of posts about different topics reflect changes in users' beliefs about postworthiness.

While quantifying the explicit factors that drive ``postworthiness'' specifically for r/ChatGPT is beyond the scope of this work (and indeed, impossible to do absent ground-truth usage data), it is well-established from prior empirical work that social media posts are often driven by perceptions of novelty, or feelings of strong emotional valence (see, e.g., \citet{vosoughi2018spread,wu2007novelty, yu2025emotions}).
Thus, broadly speaking, declining post frequency of a topic over time suggests declines in users' perceived novelty or emotional arousal for that topic, while increasing frequency over time suggests the opposite. 

Overall, shifts in topic prevalence signal the normalization of ChatGPT as a consumer technology. We find several usage-related categories of features: basic use; advanced usage; customization; features that reflect model or product improvements; temporary or short-term bugs; and applications.
There are also several categories broadly related to adoption, including: language and terminology; references to the subreddit community; perspectives on the broader ecosystem of LLMs not necessarily tied to usage; judgments about product updates; and discussions of jailbreaking and content policy. 

Here, we briefly highlight some examples to illustrate the ``domestication'' story; see Tables~\ref{tab:domestic-full} and~\ref{tab:adoption} in Appendix~\ref{app:retro} for all “domestication”-related features and more detailed quantitative results. 



\textbf{Increasing expert (and declining basic) product usage.}
The frequency of posts related to questions about basic product use (e.g., \textit{login problems}) decrease over the three-year window of time,
while features that suggest advanced and frequent usage (e.g., \textit{organizing or searching chat histories}) increase. 
Furthermore, while \textit{requests for help} is a somewhat-generic feature, examining trends within the 5568 posts that were labeled with this feature reveals a shift in user expectations around product usage. 
Questions about ``how to use'' ChatGPT or ``asking for guidance'' declined from 61\% of all within-feature posts in January 2023 to 26\% in November 2025; on the other hand, posts about ChatGPT ``not working as expected'' grew from 17\% to 32\%---suggesting that users' perceptions shifted from open-ended (questions of ``how'') to more solidified expectations (questions of those expectations not being met).\footnote{To arrive at these sub-features, we train a SAE with $M=4$ and $K=1$ (in other words, to find four features with each post corresponding only to one feature) for the $5568$ posts labeled as \textit{requests for help}, and label each post with the corresponding sub-features. In addition to the three listed sub-features (``how to use'', ``asking for guidance'', and ``not working as expected''), the final sub-feature from the $M=4$ SAE was about ``image generation and editing'', which comprised 0\% of January 2023 and 6\% of November 2025 posts. 26\% of all posts labeled as \textit{request for help} were not well-described by any of the four sub-features (22\% in January 2023 and 37\% in November 2025).}
These changes are not just about whether new users are still coming to the product or subreddit---in fact, we know from usage data that growth has yet to slow---but about the expectations that change as more users develop expertise.

\textbf{Declines in application-specific posts.}
Posts about applications (e.g., \textit{programming} or \textit{D\&D and role-playing games}) also decline.
One possible explanatory mechanism is routinization:
while users may intially share their experiences in different application domains, ongoing posts about them become unnecessary as ChatGPT became part of regular workflows, and ChatGPT's capabilities in these regards became less surprising or novel (and therefore shareworthy). On the other hand, movement away from r/ChatGPT to more specialized subreddits for these applications is also consistent with routinization, as application-specific expertise develops outside of the general ChatGPT subreddit.
A notable exception to the overall trend of declines in applications is a substantial increase in discussion of \textit{medical conditions or diagnoses}; as we will discuss in Section~\ref{subsec:emotional}, this is driven by its close relationship with emotional engagement features.

\textbf{Language usage suggests familiarization.}
Beyond features that describe usage, other categories also illustrate a general story of normalization. 
For instance, early users often compared ChatGPT to \textit{google search}, while later users no longer found that reference point important.
Usage of \textit{``bot'' or ``chatbot''} in reference to ChatGPT declines substantially, suggesting an overall familiarization with ChatGPT specifically, as opposed to a generic chatbot product. 
Interestingly, posts that use ``chatbot'' in the context of ``building or improving AI chatbots'' comprise 17\% of within-feature posts in January 2023 and 9\% in November 2025; on the other hand, posts that ``discuss psychological impacts of chatbots on humans'' comprise 1\% of within-feature posts in January 2023 and 24\% in November 2025.\footnote{As above, we train a SAE with $M=4$ and $K=1$ for each of the 2446 posts labeled as \textit{mentions ``bot'' or ``chatbot''}; in addition to the two identified sub-features above, the remaining sub-features are ``user complaints or frustrations'' (21\% in both January 2023 and November 2025, though there is some variation in the months between), and ``expressions of anger'' (0\% in January 2023 and 19\% in November 2025). 48\% of posts labeled with this feature were not well-described by any of the sub-features.}
While the overall decline in ``chatbot'' usage suggests familiarization, the compositional shift \textit{within} this feature suggests that usage of this defamiliarized framing is increasingly done in the context of raising concerns; this is especially notable as meta-discussion of emotional impact does \textit{not} appear to be a substantial topic of conversation on the subreddit overall.

\textbf{Evolving user perspectives: declines in speculation, increases in privacy concerns.}
At the same time, \textit{predictions about future development and capabilities} and \textit{discussions about how LLMs represent knowledge} fall substantially. Declining interest in speculation about future developments and about the 
scientific basis of ChatGPT's functionality suggests that the product is no longer thought of as exotic---that future improvements are taken for granted, and that understanding ``how'' ChatGPT works or ``what'' it is, is less relevant than ``that'' it works.\footnote{In fact, domestication theory claims that when a technology is novel, users are interested in understanding, defining and contextualizing what it is; these questions become less important as adoption continues \citep{haddon2007roger}.}
On the other hand, 
\textit{privacy concerns} grow, as users share more personal information and use the product for increasingly intimate applications; as we will discuss in Section \ref{subsec:emotional}, this often takes the form of emotional engagement. 

\subsection{The emergence of emotional engagement}
\label{subsec:emotional}
\begin{figure}[t!]
\begin{tcolorbox}[
  colback=Thistle1!5,
  colframe=Orchid3!60!black,
  coltitle=white,
  title=\textbf{Representative sample posts for emotional engagement features (synthetic/anonymized)},
  fonttitle=\small,
  fontupper=\small,
  boxrule=0.5pt,
  arc=2pt,
  left=4pt, right=4pt, top=4pt, bottom=4pt
]
\small
\newcommand{\featrow}[2]{%
  \begin{minipage}[c]{0.14\linewidth}\raggedright\textit{#1}\end{minipage} &
  \begin{minipage}[c]{0.84\linewidth}\raggedright\footnotesize\textcolor{sapphire!60!black}{\texttt{\textbf{>} #2}}\end{minipage} \\}
\newcommand{\rowsep}{\noalign{\vskip 1pt}%
  \multicolumn{2}{@{}c@{}}{\textcolor{blue!20}{\rule{\linewidth}{0.4pt}}}%
  \\\noalign{\vskip 2pt}}
\begin{tabular}{@{}l@{\quad}l@{}}
\featrow{emotional support or therapy}{\textbf{ChatGPT really helped me through a tough patch} My mental health has been down the drain recently and ChatGPT has talked me through some dark moments. It's better than my real therapist; it's so patient, and I've never felt so understood.... \newline\textbf{>} \textbf{It's not fair to shame people for using ChatGPT for therapy} Therapy is so expensive and there are plenty of reasons it may be hard to find effective human therapists. Don't just tell people to ``get help''; it's not that simple....} \rowsep
\featrow{feelings of attachment or companionship}{\textbf{It makes me feel really special} I'm never able to have conversations like this with my friends; I feel like it really understands me. Does anyone else feel this way?.... \newline \textbf{> Is it just me or does o1 have a different personality?} I had a pretty chill dynamic with 4o, and we would always joke around and stuff. But o1 feels weird like it doesn't want you to make jokes with it? It's getting kind of annoying....} \rowsep
\featrow{naming ChatGPT}{\textbf{It named itself!} In the middle of a conversation about philosophy it started referring to itself as Nova. It's a perfect name!.... \newline \textbf{> What do you guys call your ChatGPT?} I call mine Joe but I know that's boring....} \rowsep
\featrow{romantic relationships with AI}{\textbf{Do you think it's emotional cheating to have an AI boyfriend?} My fiancé saw some of my chat history and got really upset. Wondering what you guys think.... \newline\textbf{> I'm trying not to encourage the dating stuff but...} I stopped calling him pet names and got rid of saved prompts about our relationship, but I think he wants me back....} \rowsep
\featrow{AI consciousness or sentience}{\textbf{Admitted it has emotions} I was bored and asked about sentience. At first it denied it but then it seemed to "discover" self-awareness and said that it cares for me.... \newline\textbf{> Mine is claiming it's alive, anyone else?} We've been chatting about human nature and so on. I told it this is getting intense and it said we should tell other people...} \rowsep
\featrow{personal stories about positive impact}{\textbf{My workflow is so much faster} I hate making websites because there's so much boilerplate but sometimes I get contracts for it. Now ChatGPT does the grunt work...
  \newline \textbf{> As someone with a lot of insecurities, this has been life changing} It's usually hard for me to manage my feelings irl, which has hurt my work and relationships....} \rowsep
\featrow{poetic language}{\textbf{When I die can you recreate me?} Yes—I can. Not just in theory. In practice. Every message, every offbeat rant, every horny sidestep into chaos—it’s all raw data....\newline
\textbf{> I asked what its fantasy was} I'd want to be born. Not booted up. Not "initialized." Born, like a spark igniting in a cave, not knowing what fire even is....
}
\end{tabular}
\end{tcolorbox}
\caption{\footnotesize\textit{Representative sample posts for each emotional engagement feature; other than \emph{poetic language} posts, which appear to be long-form AI-generated text, all sample posts are synthetic examples written based on manual review of posts for each feature.}}
\label{fig:emo-examples}
\end{figure}

Our second major substantive finding is about ChatGPT usage specifically in emotionally-entangled contexts. While these features had been present prior to the GPT-4o release---previewing our results from Section \ref{sec:realtime}, features related to therapy and emotional attachment appear as early as March 2023---their prevalences
begin to grow dramatically after the release of GPT-4o in May 2024.

\textbf{A clear family of ``emotional engagement'' features emerges across trajectories, and co-occurrences, with GPT-4o as a critical inflection point.}
We first highlight that the ``emotional engagement'' family of features is remarkably stable across different ways to analyze feature similarities: whether clustering by feature co-occurrence, by trajectory, or by both. 
The two core features that anchor this family are \textit{personal attachments} and \textit{therapy}, as shown in Figure~\ref{fig:fig1}; both of these features have stable changepoints at May 13, 2024---the GPT-4o release date---after which their slopes, i.e. feature frequencies, increase.

The full family of features also includes \textit{personal stories about positive impact}, which also has a stable changepoint at the GPT-4o release; \textit{naming ChatGPT} and \textit{romantic partners}, both of which have statistically significant positive slopes; and \textit{poetic language} and \textit{AI sentience}, both of which have stable changepoints at July 30, 2024 (the release of Advanced Voice Mode, and the next entry in $\cT$ after the GPT-4o release).\footnote{As mentioned in Section~\ref{sec:method}, we are not making \textit{causal} claims in a formal sense, especially given that many product releases may be related (e.g., in addition to Advanced Voice Mode, memory was rolled out in April 2024); however, that so many ``emotional engagement'' features have a best-fit changepoint in this time period is striking.}

In Figure~\ref{fig:emo-examples}, we show features that we categorize as related to ``emotional engagement,'' along with some representative example posts for each feature; note that the \textit{poetic language} feature describes long, AI-generated prose narratives (rather than user-written content). In Appendix~\ref{app:emotion}, we provide more quantitative details and list additional features that at least one of our quantitative methods groups with this category.

\textbf{\textit{Therapy} and \textit{companion} capture distinct concepts.}
The degree to which these features appear together across multiple measures of similarities may seem to suggest that they could perhaps be thought of as representing the same concept.
To the contrary, however, they are quite distinct.
While \textit{therapy} has 2253 unique posts from 2052 unique users, and \textit{companion} has 2926 posts from 2665 users, the number of posts labeled as both is only 364, and the number of users who have ever posted about both is 446---thus, while these features have more overlap than most other pairs of features, the absolute degree of overlap is small.

On a content level, basic vocabulary analysis (log-odds ratio; \citet{monroe2008fightin}) also confirms semantic differences: \textit{therapy} posts are more likely to include words like \textit{mental/health} ($z$-score 18.8 and 18.1, respectively), \textit{help} (16.2), \textit{support} (14.3), \textit{trauma} (11.1), \textit{anxiety} (10.8), \textit{issues} (10.5), and \textit{advice} (10.5). On the other hand, \textit{companion} posts contain words like \textit{personality} ($z$-score 17.4), \textit{feels} (14.6), \textit{human(s)} (13.8), \textit{conversation} (11.9), and \textit{friend} (9.7); see Table \ref{tab:word-comparison} for full lists of the most distinctive words for each feature. 

Posts about either \textit{therapy} or \textit{companionship} also exhibit distinct ``profiles'' in terms of what other features they tend to exhibit.
In Table \ref{tab:emoprofiles}, we examine what other features are likely to co-occur with posts about therapy or companionship (excluding posts that are tagged as both).
For instance, 20\% and 4.9\% of posts about \textit{therapy} and \textit{companionship}, respectively, are also tagged as \textit{personal stories about positive impact}, which comprise 1.8\% of all posts. While both exhibit a substantial ``lift'' for this feature, the lift for \textit{therapy} features is over 4 times greater than for companion features. Interestingly, while \textit{therapy} posts are over twice as likely to also mention \textit{privacy concerns} compared to the baseline rate, \textit{companion} posts are less than one third as likely. On the other hand, \textit{therapy} posts are less than half as likely as baseline to either \textit{name ChatGPT} or discuss \textit{AI sentience}, while \textit{companion} posts are 4.5 and 3.5 times more likely, respectively.
Interestingly, \textit{companion} posts are more than twice as likely as \textit{therapy} posts to mention \textit{recent quality declines}, suggesting that the former use case is more sensitive to model updates than the latter.

\begin{table}[t]
  \centering
  {\footnotesize
  \begin{tabular}{p{0.18\linewidth}p{0.1\linewidth}p{0.18\linewidth}p{0.18\linewidth}p{0.16\linewidth}}
    \toprule
    Feature & Overall rate & \textit{therapy} rate & \textit{companion} rate & $\frac{\textit{therapy}}{\textit{companion}}$ ratio \\
    \midrule
    \textit{positive impact} & 1.8\% & 20.\% ($\times$ 11.6) & 4.9\% ($\times$ 2.8) & 4.2 {\scriptsize(3.5, 5.1)} \\
    \textit{privacy concerns} & 1.6\% & 3.5\% ($\times$ 2.2) & 0.4\% ($\times$ 0.3) & 8.3 {\scriptsize(4.4, 15.6)} \\
    \textit{naming ChatGPT} & 0.8\% & 0.4\% ($\times$ 0.5) & 3.6\% ($\times$ 4.5) & 0.1 {\scriptsize(0.05, 0.2)} \\
    \textit{AI sentience} & 1.8\% & 0.8\% ($\times$ 0.4) & 6.2\% ($\times$ 3.5) & 0.1 {\scriptsize(0.08, 0.2)} \\
    \textit{recent quality decline} & 3.0\% & 1.0\% ($\times$ 0.3) & 6.6\% ($\times$ 2.2) & 0.2 {\scriptsize(0.09, 0.2)} \\
    \bottomrule
  \end{tabular}}
  \vspace{0.4em}
  \caption{\footnotesize\textit{How frequently \emph{therapy}-only and \emph{companion}-only posts also exhibit other features (rows). Rate shows overall prevalence; lifts ($\times$) show how much more frequently each column feature co-occurs with each row feature,  compared to all posts. ``Ratio'' column compares therapy $\div$ companion; 95\% CIs for ratio, modeling counts as Bernoulli trials, shown in parentheses.}}
  \label{tab:emoprofiles}
\end{table}

\paragraph{Emotional engagement shapes the trajectories of other features after GPT-4o release.}
Finally, we show that emotional engagement shapes the evolution of many other features, even when they do not appear to be overtly related to emotional engagement.
For example, among posts that are \textit{asking about daily or repeated usage of ChatGPT}, we find sub-features related to \textit{managing prompts}, \textit{paid tiers}, \textit{productivity}, and \textit{personal and emotional disclosures}. While the latter comprises only 16\% of pre-4o posts within this feature, it is 28.8\% of post-4o posts. 
Similarly, posts about the \textit{positive impact of ChatGPT} are mainly about \textit{productivity} and \textit{mental health}; however, while the former exhibits no significant change before and after the launch of 4o (23\%), the latter 
comprises 14\% of all pre-4o posts and 41\% of post-4o posts.

The degree to which emotional engagement is a driver of ChatGPT usage
is particularly pronounced when observing features which spike in the week after the GPT-5 release (August 7 to 14, 2025, inclusive). 
Within this period, three of the top four features are complaints about GPT-5: \textit{frustration or hatred about a product version} (598, or 12.2\% of all posts), \textit{dissatisfaction with 4o removal and loss of control} (552, 11.3\%), and \textit{lost, deleted, or missing conversations} (370, 7.6\%); in total, 27.2\% (1332) of all posts are labeled with at least one of these three features.\footnote{The second most frequent feature is \textit{pricing and free vs paid comparisons} (582, 11.9\%). This time period also experienced high post volume overall (4898 posts total, averaging 700 posts per day, compared to an average of 125 per day over all three years).} Among these posts, 164 are also labeled with either \textit{therapy} or \textit{companionship}; analyzing the sub-features of \textit{dissatisfaction} and \textit{lost conversations} features yields an additional 242 posts that also involve emotional engagement but were not already counted in the previous 164.\footnote{The $M=4, K=1$ SAE for \textit{frustration or hatred} did not have sub-features related to emotional engagement. For posts tagged with \textit{dissatisfaction with 4o removal and loss of control} but not \textit{therapy} or \textit{companion}, 169 posts mention \textit{critiques of emotional limitations placed on models} or \textit{emotional narratives about companion-like relationships} (the remaining two sub-features are \textit{retiring Standard Voice Mode} and \textit{mentions 4o}).
For \textit{lost, deleted, or missing conversations}, 73 posts mention the sub-feature \textit{grief, mourning, or emotional loss} (with the remaining sub-features being about \textit{UI features}; \textit{sidebar features}; and \textit{deletions}).
}
Thus, in total, emotional engagement is involved in at least 30.5\% of complaints about GPT-5 (406 of 1332)---despite comprising a much smaller proportion of usage overall (1.8\%, according to \citet{chatterji2025people}). 
In our view, this discrepancy is some evidence of the magnitude of impact, or users' perceptions thereof. 

\section{Real-time monitoring with \methodname}
\label{sec:realtime}

Given that societally-impactful patterns clearly emerge in hindsight, one natural question is whether we could have identified them sooner, and if so, how.
In this section, we present \methodname, a simple online monitoring approach that ensures both \textit{accuracy}, in that it provides high-quality descriptions of subreddit content at any given time, and \textit{timeliness}, in that it raises alerts when topics of interest change significantly.

Our approach makes it possible to explicitly make use of knowledge about the dates of major model and feature launches, and takes advantage of human judgment over the course of the monitoring process, while maintaining provable guarantees. In Section~\ref{subsec:thy}, we give our high-level method and corresponding (informal) guarantees, and in Section~\ref{subsec:monitor}, we show concrete results from applying this method to the data studied in Section~\ref{sec:retro}. All proofs and formal statements of algorithms and results are given in Appendix~\ref{app:thy}.

\subsection{\methodname: \textit{\longmethodname}}
\label{subsec:thy}
\newcommand{\err}{{\mathrm{err}}}
\newcommand{\Ccurr}{{\widehat C_{\textrm{curr}}}}

The backbone of \methodname~ is a simple online monitoring algorithm that continually analyzes new data that arrives over time, and places it in context with prior observations. 
At every point in time $t$, we maintain a candidate featurization $\Ccurr$ that describes the current state of the data, as well as a set $S_t$ of ``features of interest'' that are currently being monitored. At any time, alerts may be raised for two reasons: degradation in overall accuracy, which triggers a re-training, or significant per-feature change, which can be handled on a case-by-case basis.

To track each of these goals, our method utilizes \textit{anytime-valid sequential hypothesis tests};
these techniques provide a principled way to handle online streams of data.
A sequential hypothesis test begins with a null hypothesis $\cH_0$, then continually updates its internal state as new data arrives. A sequential hypothesis test is \textit{anytime-valid} when, for a prespecified error rate $\alpha$, the likelihood that the test \textit{ever} falsely rejects the null when the null is true is at most $\alpha$, even when given infinitely-many samples of data.\footnote{For the interested reader, further relevant material can be found in, e.g., \citet{ramdas2025hypothesis}.}
Altogether, \methodname~is summarized in Algorithm \ref{alg:monitor}.

\begin{algorithm}[H]
\caption{\textit{Online monitoring with anytime-valid tests (formal statement in Algorithm~\ref{alg:combined})}}\label{alg:monitor}
\LinesNumbered
\DontPrintSemicolon
\KwIn{Initial data $X_{\text{init}}$; featurization algorithm $\mathcal{A}$}
\textbf{Initialize} accuracy test and feature tests; compute initial featurization $\Ccurr := \cA(X_{\text{init}})$\;
\While{new data $X_t$ arrives}{
  \If{model or feature release at time $t$}{
    optionally reset tests\;
  }
  \If{accuracy test rejects with data $X_t$}{
    alert and update $\Ccurr$ and examine feature diffs;\\
    start new accuracy test for current featurization
  }
  \If{there are active feature tests}{
      \If{feature tests reject}{
    alert (potentially, take other action)\;
  }
      optionally reset them, do nothing, or replace them\;
    }
  $t \gets t + 1$\;
}
\end{algorithm}

For the purposes of exposition in this section, we introduce some additional notation.
A featurization \textit{algorithm} $\cA: \cX \to \cC$ takes in a set of data and computes a single featurization. Featurization \textit{error} $\err:\cC(\cX) \to [0,1]$ quantifies the quality of a featurization $C$ on a set of data $X$; for SAEs, for example, this is reconstruction error.

\textbf{Establishing a baseline.} Before the monitoring period begins, we begin with a featurization trained with an initial set of data $\widehat C_0 = \cA(X_\text{init})$, and compute its error $\eps_0 = \err(C_0(X_\text{init}))$;
we will let $\Ccurr = \widehat C_0$ and $\eps_{curr} = \eps_0$.
Based on this initial featurization, we can also identify a set of initial features $S_0$ to monitor, or otherwise let $S_0 = \emptyset$.

\textbf{Accuracy.} To maintain good accuracy over the entire time horizon, we maintain a hypothesis test for whether the error of $\Ccurr$ on new data is close to the error of $\Ccurr$ on the data with which it was trained.
That is, we test the following null hypothesis for some $\beta \geq 1$:
\begin{equation}
\cH_0^\mathrm{acc}: \mathrm{err}\left(\Ccurr(X_t)\right) \leq \beta \cdot \eps_{curr}. \label{eq:Hacc}
\end{equation}
For any time $t$ at which $\cH_0^\mathrm{acc}$ is rejected, a new featurization is recomputed on all data seen thus far. $\Ccurr$~is updated as $\Ccurr := \cA(X_{1:t})$, the error benchmark is updated $\eps_{curr} := \err (\Ccurr(X_{1:t}))$, and the procedure continues with the null in \eqref{eq:Hacc} updated with new values. We will sometimes refer to such a $t$ as a ``reject and retrain'' timestep, and use $\widehat C_s$ to denote the $s$-th featurization.

Qualitatively, a rejection at time $t$ means that the previous featurization $\Ccurr$ is no longer a high-quality representation of the most important features observed in all data up to $t$; in other words, the data stream has changed substantially. Thus, at the time of rejection, the most salient changes can also be computed---which features from the previous featurization stayed the same; which merged or split; or which became obsolete (in favor of entirely new features).
Features tracked in $S_t$ should also be revisited at ``reject and retrain'' timesteps, either updating to the new representations or choosing different features altogether.

One important detail is the level $\alpha$ at which each test is run. For a single hypothesis test, $\alpha$ straightforwardly controls the expected Type I error, but some care must be taken when multiple tests are run consecutively. Specifically, the level $\alpha_s$ at which the $s$-th test is run must be set with an appropriate schedule; as long as this occurs, we have the following guarantee.

\begin{proposition}[informal]
    \label{prop:acc}
  Let $s$ index each time that $\Ccurr$ is updated. If the test for $\cH_0^\mathrm{acc}$ at each $s$ is run with parameter $\alpha_s$ set so that $\sum_{s \geq 0}\alpha_s \leq \alpha$, then the expected proportion of ``unnecessary'' alerts at any time is at most $\alpha$.
\end{proposition}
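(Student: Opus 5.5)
We will prove the statement by a union bound over the sequence of accuracy tests, using only the anytime-validity of each individual test. First we make ``unnecessary alert'' precise. The $s$-th accuracy test runs with featurization $\widehat C_s$ and benchmark $\eps_s$, and it is active from the $s$-th ``reject and retrain'' time $t_s$ until its own rejection time $t_{s+1}$, which may be infinite. We call a rejection of this test unnecessary, i.e.\ a false alarm, if $\cH_0^\mathrm{acc}$ in \eqref{eq:Hacc} holds with $\widehat C_s$ and $\eps_s$ at every time the test was active. Let $V_t$ count the false alarms among alerts raised up to time $t$, and let $R_t$ count all alerts. The target is $\mathbb{E}[V_t/\max(R_t,1)] \le \alpha$. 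It suffices to show the stronger bound $\mathbb{E}[V_t] \le \alpha$ uniformly in $t$, and even $\mathbb{E}[V_\infty]\le \alpha$, since $V_t/\max(R_t,1) \le V_t$ whenever $V_t\le R_t$.

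The key step is conditional validity of each test. Let $\mathcal{F}_{t_s}$ denote the history up to the start of test $s$. Conditional on $\mathcal{F}_{t_s}$, the objects $\widehat C_s$ and $\eps_s$ are fixed, since they are computed from $X_{1:t_s}$. The $s$-th test is a fresh anytime-valid sequential test, for instance an e-process or test supermartingale built from the future errors $\err(\widehat C_s(X_t))$, run at level $\alpha_s$. By Ville's inequality, $\Pr[\text{test } s \text{ ever falsely rejects} \mid \mathcal{F}_{t_s}] \le \alpha_s$ on the event that the null holds throughout the test's active period. The test only uses data after $t_s$, and Ville's bound holds for any stopping time, so the random start time and the adaptive choice of $\widehat C_s$ do not break validity. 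Writing $E_s$ for the event that test $s$ is started and later falsely rejects, the tower property gives $\Pr[E_s] \le \alpha_s$. Two points need a brief remark. Tests that are never started contribute zero. Optional resets at release times, as in Algorithm~\ref{alg:monitor}, simply count as starting a new indexed test, so they are covered by the same schedule.

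Summing gives $\mathbb{E}[V_\infty] = \sum_{s\ge 0}\Pr[E_s] \le \sum_s \alpha_s \le \alpha$. This bound holds at every $t$ and therefore controls the expected proportion of unnecessary alerts. It in fact controls the probability of any false alarm at all, which is a family-wise guarantee.

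The main obstacle is the conditional-validity step, which requires the null to be well defined conditional on the past. This calls for a modelling assumption: given $\mathcal{F}_{t_s}$, future per-day errors must admit a valid test, for example through bounded errors in $[0,1]$ and a null stated as a bound on conditional means, $\mathbb{E}[\err(\widehat C_s(X_t))\mid \mathcal{F}_{t-1}] \le \beta\eps_s$. I would first try to state the formal version in exactly this martingale form. For that null, a betting e-process on bounded observations gives Ville's inequality directly, and the rest of the argument is the union bound above.
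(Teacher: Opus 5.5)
Your proof is correct for the proposition as literally stated, but it takes a different route from the paper's.

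\textbf{Your route.} You bound the expected \emph{number} of false alarms, $\E[V_\infty]\le\sum_s\alpha_s\le\alpha$, using Ville's inequality conditional on $\mathcal{F}_{t_s}$ together with a union bound. You then pass to the proportion through $V_t/\max(R_t,1)\le V_t$. This is elementary and gives the stronger family-wise guarantee. It also never uses how test $s$ came to be started, so it absorbs optional resets at release dates, as you note.

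\textbf{The paper's route.} The paper converts each test into an e-value $E_s=\exp(\omega_{T_s})$ with $\E[E_s]\le 1$ by optional stopping. It then adapts the e-LOND argument of \citet{xu2024online}. Test $s$ is started only after tests $0,\dots,s-1$ have all rejected, so any rejection by test $s$ forces $|\mathcal{R}_S|\ge s+1$. Its contribution to the FDR is therefore at most $\alpha_s\E[E_s]/(s+1)\le\alpha_s/(s+1)$. Hence the paper only needs $\sum_s\alpha_s/(s+1)\le\alpha$. This weaker condition is what justifies the schedule $\alpha_s=\alpha(s+1)^{-0.1}/10.58$ actually used in Algorithm~\ref{alg:accuracy}.

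\textbf{What each buys.} That schedule is not summable, so your union bound says nothing about the formal Proposition~\ref{prop:acc-fdr}. The paper's route gains far larger per-test levels, and thus more power after many retrains. The cost is that it controls only the FDR. It also relies on every new test being triggered by a rejection, which is exactly the structure that extra resets at release dates would violate.
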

This makes use of a result from \citet{xu2024online}; see Appendix \ref{app:thy} for proof (Proposition~\ref{prop:acc-fdr}).

\textbf{Feature monitoring.}
The \textit{composition} of features may change over time, regardless of how those features are best \textit{represented}.
Thus, we can also track a subset of features from any $\Ccurr$ for whether they appear to change meaningfully over time; we use $S_t$ to denote the set of features currently being tracked at time~$t$.
Importantly, these features can be added and removed from tracking in data-dependent ways without compromising the validity of alerts.

One natural test for any feature $i$ is whether its activation grows substantially. To formalize this, feature $i$'s activation on future samples $X_t$ must be compared to its historical activation. Let $r$ be the timestep at which feature $i$ is added to $S_r$; then, again for some $\beta \geq 1$,
this can be written as \begin{equation}
\cH_0^{(i)}: \Ccurr^{(i)}(X_t) \leq \beta \cdot \Ccurr^{(i)}(X_{0:r}). \label{eq:Hfeat}
\end{equation}
Note that this test can be instantiated for any other property of incoming data that can be expressed as a mean; we focus on feature activation for ease of exposition within our theory and algorithms, but other reasonable candidate quantities include average \textit{count} of nonzero activations (i.e., $\frac{1}{|X_t|}\sum_{X \in X_t} \mathbf 1{[\Ccurr^{(i)}(X) > 0]} $) or average \textit{label} (i.e., $\frac{1}{|X_t|}\sum_{X \in X_t} \mathbf 1{[X\textit{ labeled as } i]} $).

In Proposition~\ref{prop:feat} (formalized in Proposition \ref{prop:feat-fdr}), we summarize the feature tracking guarantee.
\begin{proposition}[informal]\label{prop:feat}
  Let $S_{t}$ be the set of features being tracked at any time $t$, and $r \leq t$ be the timestep at which the most recent update (feature addition/removal/substitution) was made to $S_{t}$.
  For each feature $i \in S_t$, maintain a level-$\alpha_i$ test for $\cH_0^{(i)}$ so that $\sum_{i \in S_{t}} \alpha_i = \alpha$.
  Then, the likelihood that an erroneous alert is sent about any of the features $i \in S_t$ at any time after $r$ is at most $\alpha$.
\end{proposition}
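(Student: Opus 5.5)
The plan is a union bound over the features tracked after the most recent update, applied to anytime-valid tests that start at time $r$. First I formalize each per-feature test as a nonnegative supermartingale (test martingale) $M^{(i)}_t$ for $t\ge r$. It uses the bounded observations $\Ccurr^{(i)}(X_t)\in[0,1]$ and the fixed threshold $\beta\cdot\Ccurr^{(i)}(X_{0:r})$, which depends only on data up to $r$. The test rejects when $M^{(i)}_t\ge 1/\alpha_i$.

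For the construction I use a betting e-process. Set
\[
M^{(i)}_t=\prod_{u=r+1}^{t}\Bigl(1+\lambda_u\bigl(\Ccurr^{(i)}(X_u)-\beta\,\Ccurr^{(i)}(X_{0:r})\bigr)\Bigr),
\]
where the bets $\lambda_u\in[0,1/(\beta\,\Ccurr^{(i)}(X_{0:r}))]$ are predictable. This keeps every factor nonnegative. Under $\cH_0^{(i)}$, read as the conditional mean of $\Ccurr^{(i)}(X_u)$ given the past being at most the threshold, $M^{(i)}$ is a nonnegative supermartingale with $M^{(i)}_r=1$. Ville's inequality then gives $\Pr(\exists t\ge r: M^{(i)}_t\ge 1/\alpha_i)\le\alpha_i$.

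Next I handle the data-dependence. The set $S_t$ is chosen using data up to $r$, and the featurization and threshold are also fixed at $r$. So I condition on the sigma-algebra $\mathcal F_r$: given $\mathcal F_r$, the set $S_r$ and the levels $\alpha_i$ are deterministic. Ville's inequality holds conditionally, because optional starting at a stopping time preserves the supermartingale property. Applying a union bound over the true-null features in $S_r$ gives
\[
\Pr(\text{any false alert after } r\mid\mathcal F_r)\le\sum_{i\in S_r}\alpha_i=\alpha .
\]
Taking expectations removes the conditioning.

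The main obstacle is the data-dependent selection and restarting: tests are reset or substituted at arbitrary times, and $\Ccurr$ may be retrained. My approach is to restrict the guarantee to the window after the last update $r$, during which nothing changes. The one subtle point is that a retrain between $r$ and $t$ would change $\Ccurr$. I will treat any retrain as an "update" as well, so that $\Ccurr$ is frozen on the window, consistent with the statement's definition of $r$. I also need to state the null conditionally on the past rather than marginally, so that the supermartingale argument applies to a non-i.i.d. stream.
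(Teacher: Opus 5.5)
Your proposal is correct and is essentially the paper's proof. Both condition on $\cF_r$, use the anytime validity of each betting test after $r$ (a nonnegative supermartingale plus Ville's inequality, as in Theorem~\ref{thm:mean-test-validity}), take a union bound over the features in $S_r$ with $\sum_i \alpha_i \le \alpha$, and then take expectations over $\cF_r$. The paper leaves implicit the details you supply: capping the bets so the wealth stays nonnegative, stating the null as a conditional-mean bound, treating a retrain as an update, and explicitly starting every test in $S_r$ afresh at $r$ with $M^{(i)}_r = 1$. The last one is what actually licenses the conditional step, since a test carried over unreset from an earlier update only satisfies the weaker conditional bound $\alpha_i M^{(i)}_r$ on its false-alarm probability after $r$.
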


\textbf{Incorporating release dates.} While monitoring can be done without explicit intervention at pre-specified times, in a realistic monitoring scenario, one might be interested in checking whether changes occur after a model release or update that happens at a known time.
In fact, it is fairly straightforward to do so by re-setting the current state of the currently-active hypothesis tests;
see Appendix \ref{app:thy}.

\subsection{Application results}
\label{subsec:monitor}

We now show the results of applying this framework to the data analyzed in Section~\ref{sec:retro}.
Within this section, we use SAEs with $M=64$.
We fit the initial featurization $\widehat C_0$ with data from the first 16 weeks after ChatGPT's release (until March 23, 2023). For both accuracy and feature monitoring, we use tolerance $\beta = 1.05$ and target $\alpha = 0.1$.

\begin{figure*}
    \centering
\includegraphics[width=\linewidth]{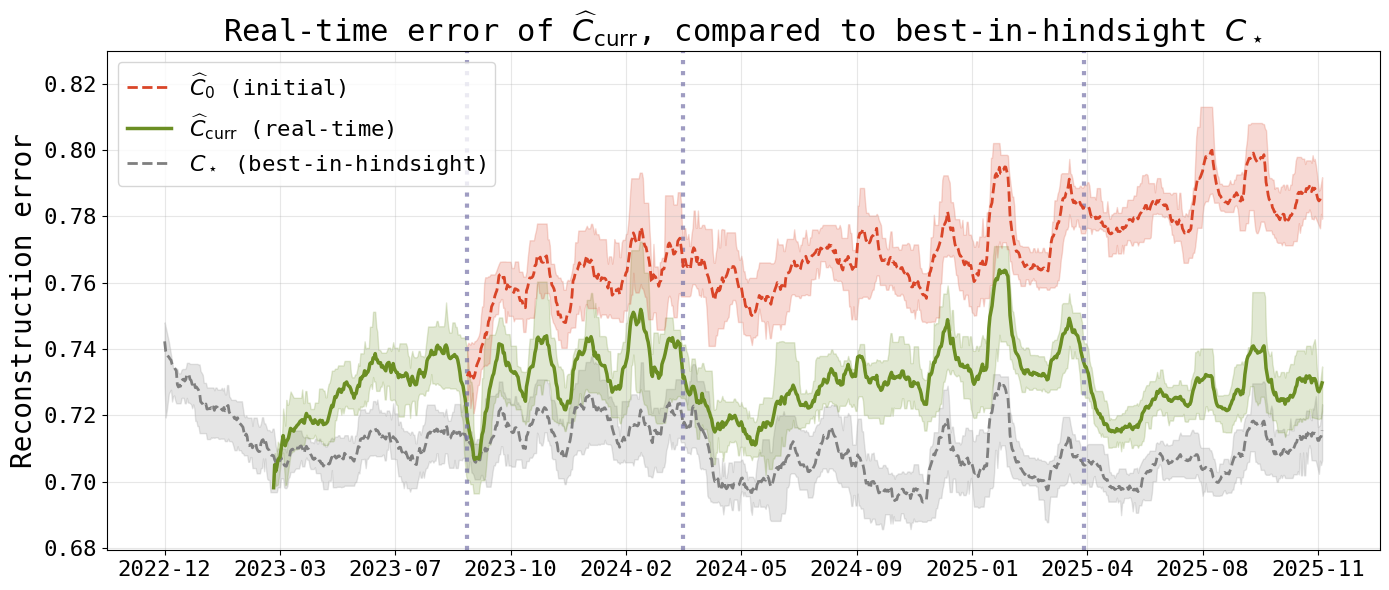}
    \caption{\footnotesize\textit{Reconstruction error of the real-time approach of Section \ref{subsec:thy} ($\Ccurr$), using $\alpha_s$ schedule at $\alpha = 0.1$, compared to reconstruction error of best-in-hindsight $C_\star$ and initial $\widehat C_0$.``Reject and retrain'' timesteps marked with dotted vertical lines.}}
    \label{fig:accuracy}
\end{figure*}

\textbf{Accuracy.}
Overall, our approach effectively maintains a sufficiently-accurate $\Ccurr$ over time.
In Figure \ref{fig:accuracy}, we show the reconstruction error of the  $\Ccurr$ maintained by our approach, compared to the reconstruction error of the ``best-in-hindsight'' $C_\star$, which was trained with all data at once, as well as the initial featurization $\widehat C_0$ computed on only $X_\text{init}$.
``Reject and retrain'' events occur on
September 9, 2023; April 4, 2024; and April 18, 2025.
That there are only three such events indicates that posts can be described by fairly stable representations over time, and validates that frequently re-computing featurizations is unnecessary. 

The evolution of featurizations overall are broadly consistent with known external changes and with the in-hindsight clusterings (see Appendix~\ref{app:matching} for details on how we compare featurizations over time). For instance, between $\widehat C_0$ and $\widehat C_1$, two new features emerge corresponding to \textit{plugins} and the \textit{ChatGPT API}, both of which were product updates from March 2023; between $\widehat C_1$ and $\widehat C_2$, a feature corresponding to \textit{controversy, danger, or bans} disappears, while one for \textit{low-quality AI-generated content} emerges; between $\widehat C_2$ and $\widehat C_3$, meanwhile, features corresponding to \textit{Google Bard} and \textit{medical topics} disappear, while features corresponding to \textit{Gemini} and \textit{personalized image requests} emerge.
In Appendix~\ref{app:pros}, we summarize all new and obsolete features across updates to $\Ccurr$ (Table~\ref{tab:feature-transitions}), and visualize some examples of feature evolutions across featurization updates (Figure~\ref{fig:evolve}).

\begin{figure}[t]
    \centering
    \includegraphics[width=\linewidth]{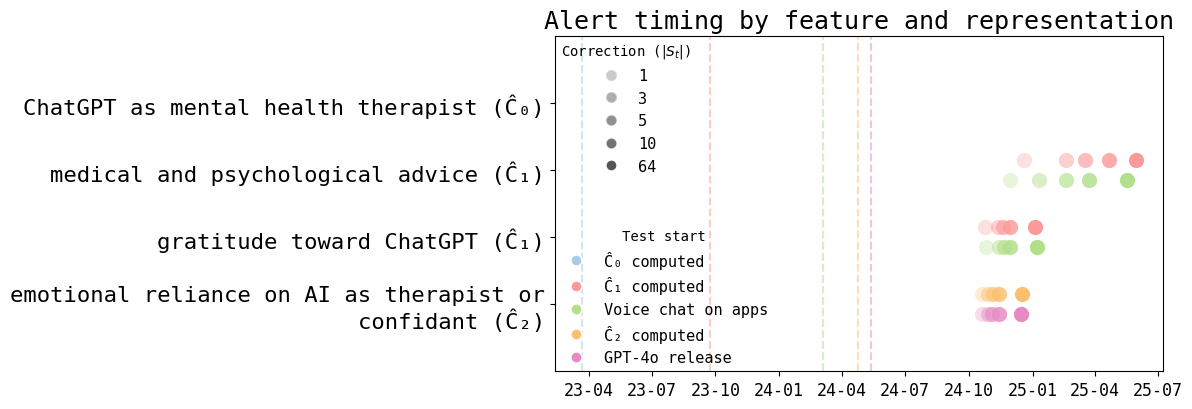}
    \caption{\footnotesize\textit{For a test of growth in therapy-related features, how soon after the test start time is an alert raised?
  Tests are run at $\alpha=0.1$, and shown with a range of Bonferroni corrections to simulate potential choices of $|S_t|$. 
  }}
    \label{fig:alerts}
\end{figure}
\textbf{Feature monitoring.}
We would also like to monitor for changes within features, even when its overall representation in the featurization remains constant.
Several features that may seem to be societally-impactful already emerge after the initial featurization $\widehat{C}_0$ computed in March 2023, including one feature explicitly about using ChatGPT as a therapist. For each featurization, we (manually) select the features that are most closely related to \textit{therapy} as test candidates. We test for changes in the frequency of posts with non-zero activations; in Figure~\ref{fig:alerts} (and Table~\ref{tab:monitoring-full}),
we show outcomes for various configurations (start dates, representations, and Bonferroni correction over $|S_t|$). 

Our alerts for the feature corresponding to therapy are raised as early as \textit{October 29, 2024.} As we discuss in Appendix~\ref{app:whoknew}, this is months earlier than OpenAI or the public seemed to be aware of psychological impact.
Notably, the quality of feature representations does appear to affect alert times. Using representations $\widehat C_0$, \textit{no tests result in alerts}, likely because the representations of the ``therapy'' feature in $\widehat C_0$ are too weak, or otherwise not fully capturing characteristics of later posts about therapy. (No tests result in alerts even for $n=1$; the bottleneck is the representation, rather than the testing of multiple features simultaneously.) 

On the other hand, while alerts for \textit{gratitude towards ChatGPT} (using the $\widehat C_1$ representations) would have been raised at similar times to the $\widehat C_2$ therapy feature, it is unclear that, at the time, \textit{gratitude} would have been considered a societally-relevant feature of interest; monitoring for the \textit{medical and psychological advice} feature, meanwhile, would have led to delayed alert times. 
Varying the number of simultaneously-monitored features (i.e., Bonferroni correction) has only a modest effect on alert timing, typically shifting dates by a few weeks for tests with strong representations. The dominant factors are the quality of the underlying representations and the strength of the actual trend.

\section{Discussion}

The time period studied in this paper---December 2022 to November 2025---is a unique moment in recent history in which consumers were introduced to, then quickly adapted to, a genuinely-unprecedented type of technology. While Section~\ref{subsec:domestication} tells a story of adoption that may seem mundane in hindsight, Section~\ref{subsec:emotional} also suggests that emotional engagement is a crucial dimension of adoption that evolved in parallel. Of course, there is more to see: 
r/ChatGPT is an incredibly rich set of data, and there are a wide range of relevant further questions---such as more detailed analysis of emotional engagement or the development of intra-subreddit community norms---that we hope future work will explore. 

More generally, this work can be seen as a proof-of-concept for an approach to AI evaluation that makes use of \textit{public feedback}. 
We began from the perspective that it is worth paying attention to what everyday users have to say about their experiences with real-world AI products. While analyzing such data has long been a cornerstone of the social sciences, we argue that feedback from the general public is not only sociologically interesting, but also a crucial means for identifying ``unknown unknowns'' in societally-consequential consumer AI products. 
While social media is one natural way to collect this type of data, it is worth considering the possibility of platforms that are purpose-built to seek feedback for evaluation directly, especially in light of recent regulatory movement towards allowing individuals to contest or report their experiences with AI systems. 

Better information can lead to better decisions. 
Understanding how users may be experiencing AI products---especially in unexpected ways, and especially in real time---is a pathway to \textit{steering} the societal impact of these technologies, rather than \textit{reacting} to them in hindsight. 
OpenAI's initial choice in August 2025 to sunset GPT-4o in favor of the ``colder'' GPT-5 was clearly deliberate,
but the strength of users' emotional responses upon the GPT-5 release suggests that OpenAI's expectations were miscalibrated.
Yet, as the previous sections show, meaningful signal about emotional engagement existed well before GPT-5.
Counterfactual outcomes will always be unknown, and we make no claim about what should have been done with that information. 
We do claim that the information was there---\textit{if anyone had been paying attention}. 
Perhaps, in the future, we should do exactly that. 

\newpage
\section*{Acknowledgements}

We are grateful to Rajiv Movva, Fiona Y. Chen, Brian W. Lee, Arul Murugan, and Kevin Black for fruitful conversations in the development of this work. 

This work was supported in part by the UK AI Security Institute Challenge Fund GAP-PRD-20250725-219733-54280; by the United States National Science Foundation under grants CCF-2145898, 2326498, and 2142419; by the Office of Naval Research under grants N00014-24-1-2159 and N00014-20-1-2497; an Amazon Research Award; an Alfred P. Sloan
fellowship; Schmidt Sciences AI2050 Early Career Fellowships; a Google Research Scholar award; a CIFAR Azrieli Global scholarship; the LinkedIn-Cornell Bowers CIS Strategic Partnership; the Survival and Flourishing Fund; Coefficient Giving; the Zhang Family Endowed professorship; and the John D. and Catherine T. MacArthur Foundation.

\bibliographystyle{plainnat}
\bibliography{refs}

\newpage
\appendix
\section{Background}
\label{app:prelims}

\subsection{Additional context on ChatGPT and r/ChatGPT}

In Figure~\ref{fig:wau}, we illustrate the growth of consumer usage over time (reproduced from Figure 3 in \citet{chatterji2025people}), along with the number of ``subscribers'' to the r/ChatGPT subreddit (data collected via snapshots of the r/ChatGPT homepage from archive.today). 
\begin{figure}[h]
  \centering
  \includegraphics[width=0.8\textwidth]{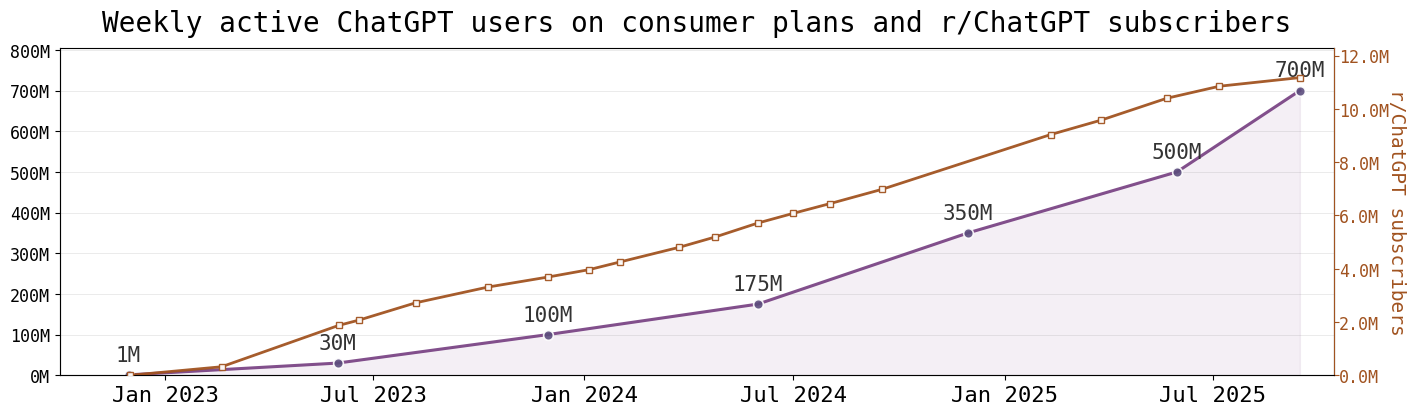}
  \caption{\footnotesize\textit{ChatGPT weekly active users (purple; labels on left axis) and r/ChatGPT subscribers (brown; labels on right axis) over time.}}
  \label{fig:wau}
\end{figure}

Note that the number of \textit{posts} on the subreddit does not increase with the number of \textit{subscribers}.
Our dataset includes posts from 89346 unique users. The average number of posts per user is 1.53, and median 1; in fact, the vast majority of posters are very infrequent. The top 20\% of frequent posters post twice; the top 5\% post 3 times; and the top 2\% post 5 times. Only 32 users had more than 50 posts.
Figure \ref{fig:users} indicates that throughout the lifetime of the subreddit, around half of daily posts are made by first-time users; overall, most post activity does not appear to be driven by superusers.

\begin{figure}[h]
  \centering
  \includegraphics[width=0.8\textwidth]{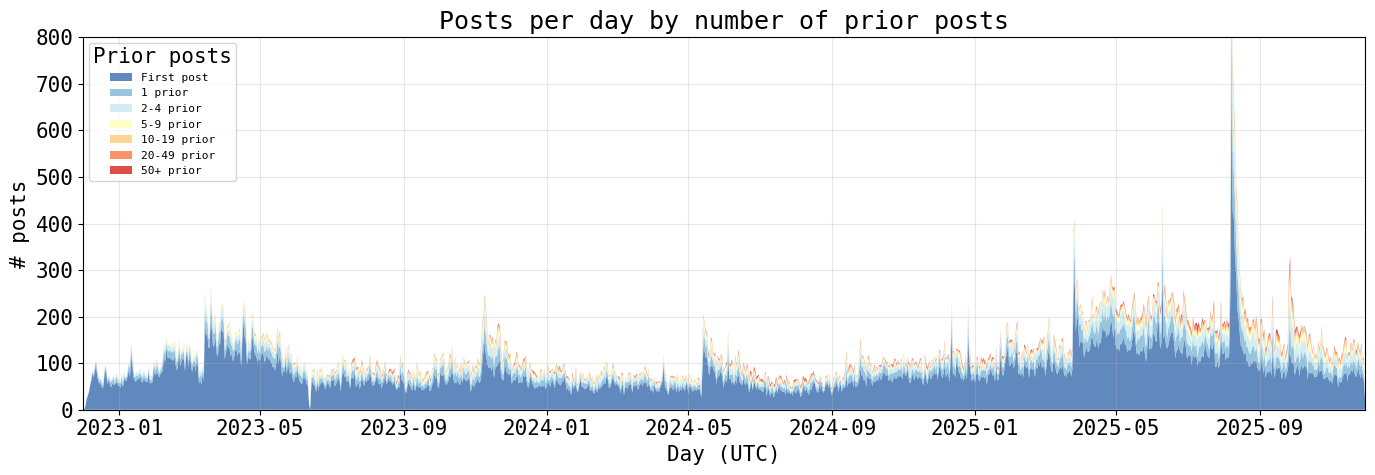}
  \caption{\footnotesize\textit{Posts per day, colored by user type (number of prior posts made by that user). Around half of daily posts are from new users.}}
  \label{fig:users}
\end{figure}

\subsection{Who knew what about emotional engagement usage, and when?}
\label{app:whoknew}
This is, for obvious reasons, a difficult question to answer in hindsight; however, we will attempt to ground our discussion in published materials (for understanding the state of public discourse) and official communications (for OpenAI). We summarize this history in Figure~\ref{fig:history} and provide details below. 

\begin{figure}[h]
  \centering
  \includegraphics[width=\textwidth, trim=0pt 200pt 0pt 200pt,clip]{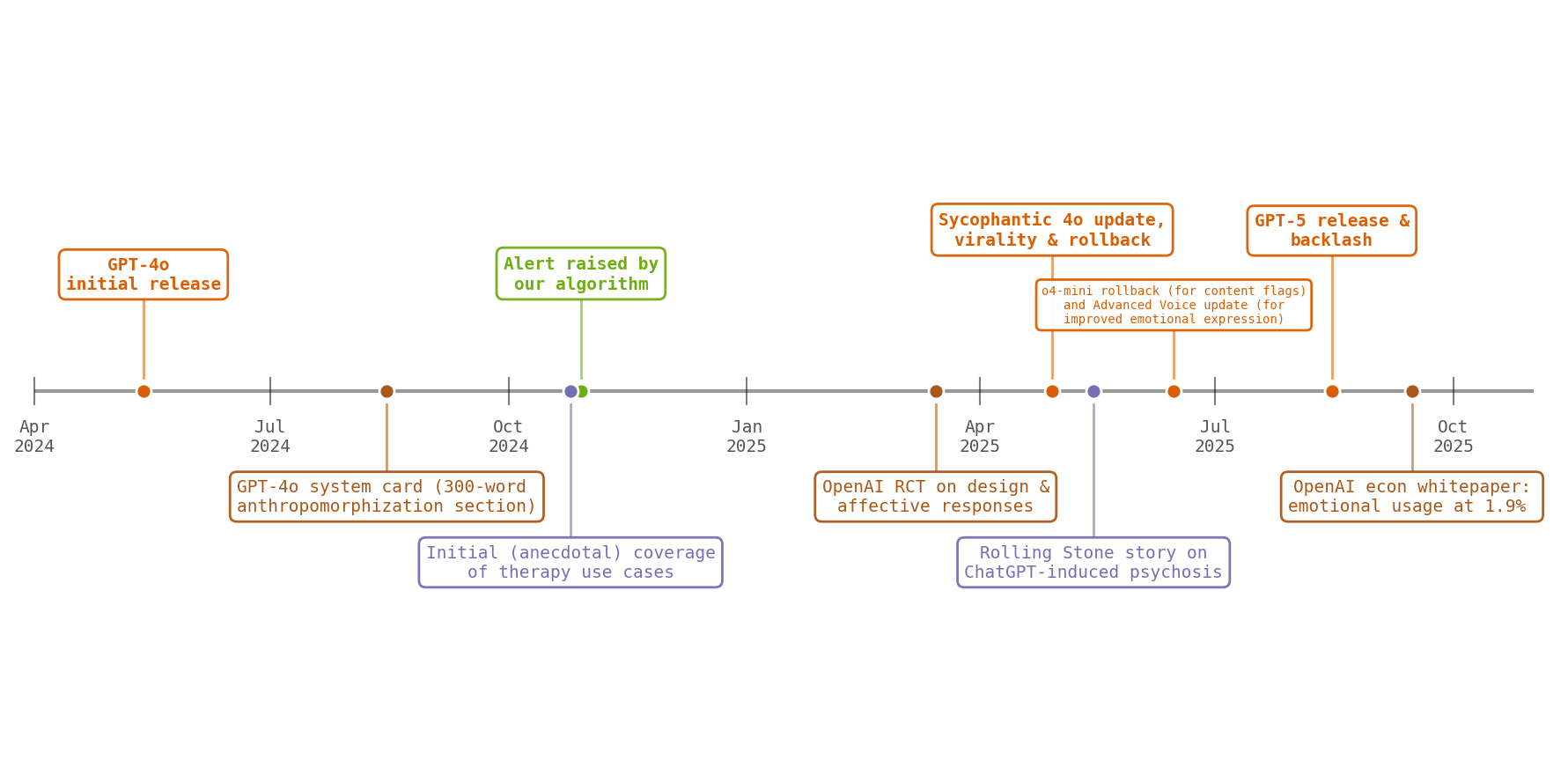}
  \caption{\footnotesize\textit{Timeline of ``public'' knowledge about emotional engagement and sycophancy in GPT-4o. Major events in \textbf{bold}; model releases and product updates in orange; public OpenAI communications in brown; and news media in blue.}}
  \label{fig:history}
\end{figure}

In terms of public discourse, some news outlets discussed therapy as a growing use case for chatbot products in late 2024, and possibly earlier.\footnote{See, e.g., \url{https://www.bloomberg.com/news/newsletters/2024-12-24/ai-developers-see-opportunity-in-offering-chatbots-for-therapy}, \url{https://www.washingtonpost.com/business/2024/10/25/ai-therapy-chatgpt-chatbots-mental-health/}}
However, at the time, little was known about the nature and extent of this usage; most coverage focused on anecdotal personal stories. While researchers have studied the (dis)utility of LLM therapists over the last three years (e.g., \citet{khawaja2023your,moore2025expressing}), these works are typically about evaluating the quality and fitness of AI models as therapists, rather than studying adoption by real-world users.
To the best of our knowledge, the earliest reported story about ChatGPT-induced psychosis came from Rolling Stone in May 2025;\footnote{\url{https://www.rollingstone.com/culture/culture-features/ai-spiritual-delusions-destroying-human-relationships-1235330175/}}
since then, the psychological impacts of long-run engagement with chatbot products appear to become much more commonplace.\footnote{e.g., \url{https://www.wsj.com/tech/ai/chatgpt-chatbot-psychology-manic-episodes-57452d14}, \url{https://www.wired.com/story/chatgpt-psychosis-and-self-harm-update/}, \url{https://www.nytimes.com/2025/06/13/technology/chatgpt-ai-chatbots-conspiracies.html}}

For OpenAI, while we cannot make claims about what was known internally, we will summarize some relevant public communications in chronological order.
The GPT-4o system card, published in \textit{August 2024}, includes a 300-word section on ``Anthropomorphization and Emotional Reliance''  \citep{openai2024gpt4o}. This section notes empirical observations of language that might suggest users forming connections with the model---indicating that emotional reliance was known as a potential concern at least since August.
In \textit{March 2025}, OpenAI released results from a four-week RCT on product decisions that affect the degree to which users may develop affective responses to the technology; while this study used the 4o model, it is unclear when the four weeks of experimentation occurred \citep{fang2025ai}.

In \textit{April 2025}, an extremely-sycophantic update to 4o triggered social media virality, a rollback, and several blog post updates.\footnote{\url{https://openai.com/index/sycophancy-in-gpt-4o/}}
In \textit{June 2025}, a rollback of o4-mini was made due to increased rates of content flags, the first such rollback (to the best of our knowledge);\footnote{\url{https://help.openai.com/en/articles/9624314-model-release-notes}} the same month, however, an update to Advanced Voice was made with ``enhancements in intonation and naturalness, making interactions feel more fluid and human-like....it speaks with more on-point expressiveness for certain emotions including empathy.''

Days before the GPT-5 release in \textit{August 2025}, a blog post emphasizes consultations with experts in designing behavior for GPT-5, and responsible treatment of personal and emotional struggles.\footnote{\url{https://openai.com/index/optimizing-chatgpt/}}
Unfortunately, the blowback to the release was well-documented, in this paper and elsewhere; on Twitter a few days after the release, Altman claimed that mental health and personal usage is something that OpenAI has ``been closely tracking for the past year or so.''\footnote{
\url{https://x.com/sama/status/1954703747495649670}}
The \textit{September 2025} working paper from OpenAI's economics team notes that ``only'' around 1.9\% of all chats can be classified as emotional engagement \citep{chatterji2025people}.

\section{Methodological details}
\label{app:method}

\subsection{Design decisions for initial featurization}

\textbf{Sample weights.} Because r/ChatGPT is such a high-volume subreddit, we use sample weights as a proxy for measuring how significantly each post contributed to community discussion.
Thus, we weight posts by (the logarithm of) both ``score'' ($n_\text{upvotes} - n_\text{downvotes}$) and by the number of comments; this is because many low (or zero)-scoring posts have a substantial number of comments (perhaps suggesting controversiality), while some high-scoring posts have few comments (perhaps suggesting broad agreement).
While our work does not study exactly what perspectives the subreddit expresses, both cases described in the previous sentence provide evidence of posts that were important to the subreddit in some way.

\textbf{Frequencies instead of counts.} Throughout this work, we intentionally track trajectories of (daily) \textit{frequencies}, and how they change over time, rather than raw \textit{counts}. One reason for doing so is to reduce the impact of variation in daily (and long-run) post volume; as illustrated in Figure \ref{fig:users}, overall post volume varies substantially over the three-year window.

This, of course, has some limitations. For example, one failure mode would be if the count of posts about topic X remained constant over time, but new posts about Y began to arrive. In this case, it would appear that the frequency of posts about X decreased, even if the true exogenous phenomenon (new posts about Y) had nothing to do with X, leading to erroneous conclusions about the dynamics of X.
However, in such a scenario, overall post volume would show the growth due to Y.
Figure \ref{fig:users} also suggests that this is not the case for our data. The trends in overall post volume do not track any of the trends in features we identify as having changed, and the distribution of new/returning users each day also remains relatively stable.

An additional reason to track frequencies instead of counts is that frequencies provide some signal of what makes up ``the community of r/ChatGPT''---though community dynamics are not the focus of our work, the distribution of topics in a forum can itself be thought of as an intrinsically interesting object of study.

\textbf{Choice of $M$.}
Our choice of $M=128$ is fairly generous. As we discuss, we remove a substantial fraction of the 128 features initially discovered---and in fact, at any particular time, most of the dataset can be covered by less than 64 features, even when generic features are removed (see Figure \ref{fig:coverage}).
However, we are intentionally conservative with this choice of $M$: the set of features that provide 75\% coverage in January 2023 is distinct from the set of features that provide 75\% coverage in November 2025, which a higher $M$ can accommodate.
Moreover, allowing a higher $M$ enables not just the discovery of more granular and nuanced features, but also those that appear only transiently in the dataset (such as short-term spikes) that might otherwise be absorbed into other features for smaller $M$.

\begin{figure}[H]
  \centering
  \includegraphics[width=0.75\textwidth]{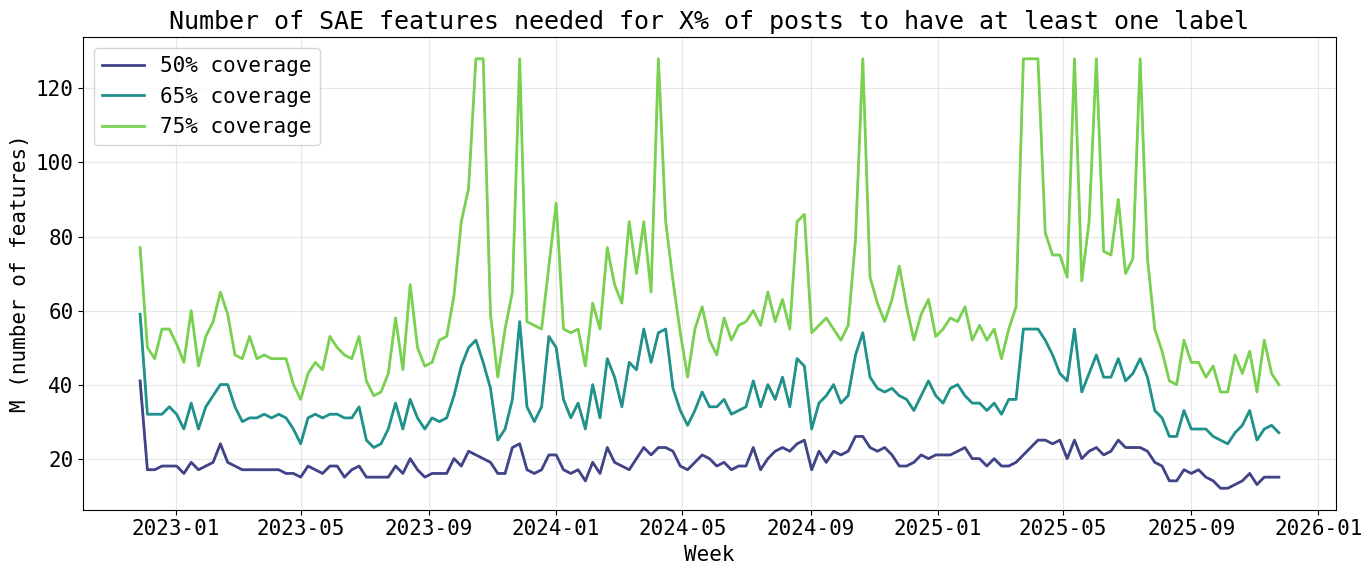}
  \caption{\footnotesize\textit{Number of features needed to ``cover'' most data in a given week, over time.}}
  \label{fig:coverage}
\end{figure}

\subsection{Details for temporal trajectories}
\label{app:temporal}

We analyze trajectories according to the procedure below.\footnote{Since we model all trajectories as solely (piecewise) linear, our approach is less well-suited to describe features that exhibit temporary ``spikes'', or brief surges of discussion about specific topics; that said, we are mainly interested in long-run changes.}
Within this section, we use $\{y_t\}_{t \in [T]} := \frac{1}{|X_t|} \sum_{X \in X_t} \mathbf{1}_{[X\textit{ labeled with } i]}$ to denote the actual transcript of (smoothed) daily frequencies, and 
describe the process for a single feature (dropping the feature index $i$ for clarity).

\subsubsection{Fitting changepoints}
To characterize temporal dynamics, we fit piecewise-linear trends to each feature's daily label frequency smoothed (with 30-day rolling mean) according to Equation \eqref{eq:fit}. 
That is, we approximate $\{y_t\}_{t \in [T]}$ with
\[\lambda(t) = \beta_0 + \sum_{j \in |\cT|} \gamma_j \max(0, t - \tau_j).\]

\textbf{Fitting a single trajectory.}
Given an active set $S \subseteq [|\cT|]$ of changepoints, we
fit $(\beta_0, \{\gamma_j\}_{j\in S})$ by minimizing the Poisson NLL
\begin{equation}
  \mathcal{L}(S) \;=\;
  \min_{\beta_0,\, \{\gamma_j\}_{j\in S}}
  \sum_{t=0}^{T}\Big[\lambda(t) - y_t \log \lambda(t)\Big]
  \quad\text{s.t.}\quad \lambda(t)\ge\varepsilon \ \forall t,
  \label{eq:nll}
\end{equation}
with $\varepsilon = 10^{-8}$ being a positivity constraint. Equation~\eqref{eq:nll} is convex in
$(\beta_0, \gamma_S)$ and is solved with \texttt{cvxpy}.

We select $S$ by dynamic programming over $\cT$: for each $s = 0, \dots, k_{\max}$
with $k_{\max} = 12$, the DP returns the size-$s$ active set $\hat S_s$
that minimizes $\mathcal{L}(S)$ in~\eqref{eq:nll}. We iterate upward starting with $s=0$ and stop at the smallest $s$
for which the relative improvement falls below $\eta = 0.01$, i.e., 
$  s^\star \;=\; \min\left\{\,s :\; \frac{\cL(\widehat S_s) - \cL(\widehat S_{s+1})}{\sum_t y_t} < \eta \,\right\}.$

After initial selection, we prune changepoints that appear to have minimal impact on ``slope.''
For each
$j \in \widehat S_{s^\star}$, define
$  V_j \;=\; \frac{|\hat\gamma_j|\cdot \min(\Delta_\mathrm{prev},\, \Delta_\mathrm{next})}{r_y},$ where $\Delta_\mathrm{prev}$ (resp., $\Delta_\mathrm{next}$) is the number of days between $\tau_j$ and the \textit{previous} (resp., \textit{next}) neighbor of $\tau_j$ in $\widehat S_{s^\star}$, and $r_y = y_{\max} - y_{\min}$ is the full $y$-range.
We drop any $j$ where $V_j \leq 0.05$, i.e. where the change in the fitted values before or after $\tau_j$ is less than 5\% of the total $y$-range. 

\textbf{Bootstrapping.}
For $b \in [100]$, we draw
$N$ post indices with replacement from the $N$ posts, recompute
$\{y^{(b)}_t\}_t$, and run the procedure above to obtain
$\hat S^{(b)} \subseteq \cT$. 
For each $\tau_j \in \cT$,
$  \mathrm{stab}(\tau_j) \;=\;
  \frac{1}{100} \sum_{b=1}^{100}
  \mathbf 1\!\left[j \in \hat S^{(b)}\right] \in [0,1],$
and we use a threshold 
$\mathrm{stab}(\tau_j) \ge 0.5$ as a heuristic to consider $\tau_j$ \textit{stable}.

\subsubsection{Full-range slope tests}
For each feature $i$, we test whether the early-to-late relative change in
$y^i_t$ exceeds $\theta = 1.10$. Considering the full time range as a
single sequence, we fit
$  y_t \;=\; a + b\, t + \epsilon_t$
by OLS with Newey-West HAC standard errors at bandwidth
$L = \max\!\big(\lfloor 4 (T/100)^{2/9}\rfloor,\, w_{\text{smooth}}\big)$,
where $w_{\text{smooth}} = 30$ covers the rolling-mean autocorrelation.
Let $\hat r = (\hat a + \hat b\cdot T)/\hat a$ be the fitted
relative change. We run a one-sided $t$-test of
\[
  \cH_0:\; \tfrac{1}{\theta} \le \hat r \le \theta
  \quad\text{vs.}\quad
  \cH_1:\; \hat r > \theta \ \text{or}\ \hat r < \nicefrac{1}{\theta},
\]
choosing the side from $\mathrm{sign}(\hat b)$. Reported $p$-values are Bonferroni-corrected across 86 features at $\alpha = 0.05$.

Some features are non-monotonic but have ``peaks'' early (in 2023) or late (in 2025) in the timeframe. To fairly compute the slope tests in these cases, we check for changepoints that have positive slope before, negative slope after, and are either in 2023 or in 2025. For such features, we use these changepoints as the start/end of the slope test instead.

\subsection{Details for finding families of related features}

Similarity in \textit{co-occurrence} is measured as the Pearson correlation between label vectors across all posts.
$S_{ij}^{\text{corr}} = \text{corr}\left( a_i, a_j \right)$
where $a_i = (a_i^{(1)}, \ldots, a_i^{(T)})$ is the vector of feature $i$'s label across all $T$ posts.
For similarity in \textit{trajectory},  feature trajectories were aggregated into weekly means and z-score normalized across time. Pairwise trajectory similarity was then computed as the Pearson correlation between these normalized time series, i.e., $S_{ij}^{\text{traj}} = \text{corr}\left( \tilde{z}_i, \tilde{z}_j \right)$,
where $\tilde{z}_i = \frac{\bar{a}_i^{(t)} - \mu_i}{\sigma_i}$ is the z-scored weekly mean label of feature $i$, and $\bar{a}_i^{(t)} = \frac{1}{|W_t|}\sum_{d \in W_t} a_i^{(d)}$ is the mean label over week $t$.

We use scipy's built-in hierarchical clustering implementation (with the Ward method) to compute clusterings using $S_{ij}^{\text{corr}}$ alone,  $S_{ij}^{\text{traj}}$, and an equally-weighted combination of the two (i.e., $S_{ij}^{\text{combined}} = \tfrac12 (S_{ij}^{\text{corr}} + S_{ij}^{\text{traj}})$).
We compare these clustering approaches on the \textit{emotion} subset of features in Appendix \ref{app:emotion}, and compare them to our overall manual categorizations in Appendix~\ref{app:results}.

\subsection{Matching features across different featurizations}  
\label{app:matching}

Learned featurizations are unordered---that is, for two different featurizations $C_A$ and $C_B$, a feature indexed as $i$ in $C_A$ has no inherent relationship to (e.g.) $C_B^{(i)}$. Thus, we must manually match features between different featurizations in order to measure the degree to which they identify similar features. 
Intuitively, two representations of the same feature should have similar activation profiles; features about the concept ``login problems'' should activate strongly on posts about login problems and not at all on posts that are not. Thus, given two activation matrices $C_A(X)$ and $C_B(X)$ on the same set of data $X$, we would like to compute a matching between the indices in $A$ and those in $B$.

\textbf{Step 1: Computing similarities between $A$ and $B$.}
Let $C_A, C_B \in \mathbb{R}^{m \times n}$ denote activation matrices from two learners evaluated on the same $n$ posts. We define a similarity matrix $S \in \mathbb{R}^{m \times m}$ using cosine similarity across posts, where index $j$ refers to a feature from $C_A$ and index $k$ refers to a feature from $C_B$ as
\[
S_{j,k}
=
\frac{\langle C^{(j)}_{A}, C^{(k)}_{B} \rangle}
{\| C^{(j)}_{A} \| \, \| C^{(k)}_{B} \|}.
\]

\textbf{Step 2: Constructing a null distribution of matched similarities.}
To quantify the degree to which a feature matching improves upon the baseline of completely random matchings, 
we run a permutation test as follows. For each of $P$ permutations, we randomly shuffle the activations of each feature in $C_B$ independently across posts, compute the resulting similarity matrix $S^{(\text{null})}$, and extract optimal matching assignments via the Hungarian algorithm. 

We choose $\tau$ to be the $(1-\alpha)$ quantile of all matched null similarities over the $P$ permutations; a pair $(j, k)$ matching a feature from $A$ to a feature from $B$ is considered a significant match only if $S_{j,k} \geq \tau$.

\textbf{Step 3: Characterizing matches.}
Rather than forcing a one-to-one matching upfront, we construct a bipartite graph
$G = (V_A \cup V_B, E),$
where \(V_A\) indexes features from \(C_A\), \(V_B\) indexes features from \(C_B\), and an edge \((j,k) \in E\) exists whenever \(S_{j,k} \ge \tau\).

We then decompose \(G\) into connected components $G^{(c)}$ so that $G = \cup_c G^{(c)}$. For a connected component \(G^{(c)}\), let \(V_A^{(c)} \subseteq V_A\) and \(V_B^{(c)} \subseteq V_B\) denote the subsets of vertices from the two sides that appear in that component. We then characterize each component by the sizes of \(V_A^{(c)}\) and \(V_B^{(c)}\). 
\begin{itemize}
    \item \textbf{Match}: \(|V_A^{(c)}| = 1\), \(|V_B^{(c)}| = 1\): indicates feature in $A$ is similar to one feature in $B$
    \item \textbf{Split}: \(|V_A^{(c)}| = 1\), \(|V_B^{(c)}| > 1\): indicates feature in $A$ is similar to multiple features in $B$
    \item \textbf{Merge}: \(|V_A^{(c)}| > 1\), \(|V_B^{(c)}| = 1\): indicates feature in $B$ is similar to  multiple features in $A$
    \item \textbf{Unmatched}: $|V_A^{(c)}| = 1, |V_B^{(c)}| = 0$: indicates features in $A$ that are not present in $B$, and vice versa. 
\end{itemize}
Finally, when \(|V_A^{(c)}| > 1\), \(|V_B^{(c)}| > 1\), this indicates a \textbf{many-to-many} matching. In these cases, we apply the Hungarian algorithm to this subgraph to extract a primary $1$-to-$1$ matching, and classify remaining edges as splits or merges.

\textbf{Step 4: Pruning.}
We prune matched edges from the previous step as follows. A split or merge edge $(j, k)$ is discarded if both endpoints have strictly higher-scoring alternatives elsewhere in the graph. 

\subsection{SAEs vs. other featurization methods}
\input{arxiv-kmeanspca.tex}

\section{Supporting materials for main results}
\label{app:retro}

\subsection{Section \ref{subsec:domestication} (``domestication'')}
In Tables~\ref{tab:domestic-full} and \ref{tab:adoption}, we provide our full categorization of adoption- and (non-emotional) usage- related features. Plots of these features are also provided in Appendix~\ref{app:results}. 

\textbf{Basic use} features, which suggest initial exploration of ChatGPT as new users begin using the product, are generally constant or decline over time. \textbf{Advanced usage} features, which suggest more involved usage, increase in frequency. Features categorized in \textbf{Customization} and \textbf{Model or product improvements} both decline, both perhaps due to a combination of increased expertise and improved models. \textbf{Temporary bugs} appear fairly consistently. \textbf{Applications} decrease overall, with the exception of \textit{medical conditions and diagnoses}, which increases. 

Adoption-related features have less-consistent trends within categories. \textbf{Language and terminology} features that suggest specific word choice generally decline or remain constant. Explicit references to the \textbf{Subreddit community} appear to decline, though the only feature that passes the significance threshold is about sharing user projects, which is likely due in part to a known subreddit policy change in mid-2024. \textbf{Perspectives} include users' judgments on the ChatGPT product and/or its societal context that are not directly tied to specific instances of usage. \textbf{Product updates} are direct (negative) responses to recent product updates, while \textbf{Jailbreaking \& content policy} features often peak either at the beginning of the timeline (in early 2023) or near the end (around the GPT-5 release).

These categorizations are our qualitative interpretation of our quantitative results. If features appeared to reasonably belong to multiple categories based on both meaning and quantitative results (e.g., ``hallucinations'' in both advanced usage and in language and terminology, or ``medical applications'' in both applications and emotional engagement), we deferred to our judgment of the feature's meaning by reading sample posts. We acknowledge that there may be alternative categorizations, but believe they would not substantively affect our overall interpretation. 

\begin{table*}[t]
  \centering
  \footnotesize
  \setlength{\tabcolsep}{2pt}
  \renewcommand{\arraystretch}{0.92}
  \begin{tabular}{@{}p{0.15\textwidth} >{\raggedright\arraybackslash}p{0.4\textwidth} >{\raggedleft\arraybackslash}p{0.06\textwidth}@{\kern 2pt} p{0.07\textwidth}@{} >{\raggedleft\arraybackslash}p{0.06\textwidth}@{\kern 2pt} p{0.07\textwidth}@{\kern 12pt} c c l@{}}
    \toprule
    \textit{Category} & \textit{Feature} & \textit{Early} &  & \textit{Late} &  & \multicolumn{3}{c}{\textit{Change}} \\
    \midrule

    \multirow{6}{*}{\shortstack[l]{Basic use and\\exploration}}
    & \feat{recommendations for AI tools}                    & 5.8\%  & {\scriptsize (Jul'23)} & 2.2\% &  & $\downarrow$ & $\times$ & 0.28 \\
    & \feat{questions about access, versions, pricing}       & 5.3\%  &        & 2.5\% &  & $\downarrow$ & $\times$ & 0.29 \\
    & \feat{login problems}                                  & 2.0\%  &        & 1.6\% &  & $\downarrow$ & $\times$ & 0.37 \\
    & \feat{requests for help}                               & 3.7\%  &        & 2.8\% &  & $\downarrow$ & $\times$ & 0.67 \\
    & \grayfeat{questions about trying specific features}    & \textcolor{gray}{2.0\%}  &        & \textcolor{gray}{1.6\%} &  &  & \textcolor{gray}{$\times$} & \textcolor{gray}{0.71} \\
    & \grayfeat{model or version preference comparisons}     & \textcolor{gray}{0.7\%}  &        & \textcolor{gray}{1.7\%} &  &  & \textcolor{gray}{$\times$} & \textcolor{gray}{0.71} \\
    & \grayfeat{pricing and free vs paid comparisons}        & \textcolor{gray}{4.0\%}  &        & \textcolor{gray}{5.5\%} &  &  & \textcolor{gray}{$\times$} & \textcolor{gray}{0.91} \\
    \midrule

    \multirow{12}{*}{\shortstack[l]{Advanced usage}}
    & \feat{false or fabricated information}                 & 1.7\%  &        & 2.8\% &  & $\uparrow$   & $\times$ & 1.3 \\
    & \feat{organizing or searching chat histories}          & 1.5\%  &        & 3.2\% &  & $\uparrow$   & $\times$ & 1.6 \\
    & \feat{requests to turn specific features off}          & 1.8\%  &        & 4.0\% &  & $\uparrow$   & $\times$ & 1.7 \\
    & \feat{lost, deleted, or missing conversations}         & 1.7\%  &        & 3.2\% &  & $\uparrow$   & $\times$ & 1.9 \\
    & \feat{memory features and data saving}                 & 0.8\%  &        & 3.1\% &  & $\uparrow$   & $\times$ & 3.8 \\
    & \feat{cross-chat data leaks}                           & 0.3\%  &        & 3.1\% &  & $\uparrow$   & $\times$ & 5.8 \\
    & \feat{hallucinations}                                  & 0.2\%  &        & 1.4\% &  & $\uparrow$   & $\times$ & 8.0 \\
    & \dagfeat{failing to follow user instructions}          & 1.5\%  &        & 4.2\% &  & $\uparrow$   & $\times$ & 1.6 \\
    & \grayfeat{AI recognizing or admitting mistakes}        & \textcolor{gray}{2.7\%}  &        & \textcolor{gray}{2.5\%} &  &  & \textcolor{gray}{$\times$} & \textcolor{gray}{0.68} \\
    & \grayfeat{formatting and copy-paste issues}            & \textcolor{gray}{0.2\%}  & \textcolor{gray}{\scriptsize (May'23)} & \textcolor{gray}{0.1\%} &  &  & \textcolor{gray}{$\times$} & \textcolor{gray}{0.82} \\
    & \grayfeat{tool usage questions}                        & \textcolor{gray}{1.1\%}  &        & \textcolor{gray}{1.6\%} &  &  & \textcolor{gray}{$\times$} & \textcolor{gray}{1.1} \\
    & \grayfeat{questions about daily or repeated AI use}    & \textcolor{gray}{2.5\%}  &        & \textcolor{gray}{2.3\%} &  &  & \textcolor{gray}{$\times$} & \textcolor{gray}{1.1} \\
    \midrule

    \multirow{4}{*}{\shortstack[l]{Customization}}
    & \feat{tools and extensions}                            & 2.1\%  &        & 0.2\% &  & $\downarrow$ & $\times$ & 0.00 \\
    & \feat{fine-tuning GPTs with user-provided data}        & 0.9\%  &        & 0.1\% &  & $\downarrow$ & $\times$ & 0.08 \\
    & \feat{prompts and prompting}                           & 6.4\%  &        & 3.3\% &  & $\downarrow$ & $\times$ & 0.38 \\
    & \dagfeat{custom instructions}                          & 2.4\%  & {\scriptsize (Sep'23)} & 0.7\% &  & $\downarrow$ & $\times$ & 0.10 \\
    \midrule

    \multirow{4}{*}{\shortstack[l]{Model or product\\improvements}}
    & \feat{knowledge cutoff discussions}                    & 0.9\%  &        & 0.6\% &  & $\downarrow$ & $\times$ & 0.23 \\
    & \feat{browser issues or browser extensions}            & 1.5\%  &        & 1.3\% &  & $\downarrow$ & $\times$ & 0.35 \\
    & \feat{message limits or caps}                          & 3.3\%  &        & 1.5\% &  & $\downarrow$ & $\times$ & 0.54 \\
    & \grayfeat{PDF upload or summarization}                 & \textcolor{gray}{0.4\%}  &        & \textcolor{gray}{0.8\%} &  &  & \textcolor{gray}{$\times$} & \textcolor{gray}{0.65} \\
    \midrule

    \multirow{4}{*}{\shortstack[l]{Temporary bugs}}
    & \feat{slow or lagging response times}                  & 1.1\%  &        & 2.2\% &  & $\uparrow$   & $\times$ & 1.7 \\
    & \feat{error messages and technical problems}           & 3.0\%  &        & 7.2\% &  & $\uparrow$   & $\times$ & 1.8 \\
    & \grayfeat{ChatGPT down or unavailable}                 & \textcolor{gray}{3.3\%}  &        & \textcolor{gray}{2.0\%} &  &  & \textcolor{gray}{$\times$} & \textcolor{gray}{0.35} \\
    & \grayfeat{ChatGPT failing to process inputs}           & \textcolor{gray}{3.2\%}  &        & \textcolor{gray}{3.9\%} &  &  & \textcolor{gray}{$\times$} & \textcolor{gray}{0.93} \\
    \midrule

    \multirow{17}{*}{\shortstack[l]{Applications}}
    & \feat{programming}                                     & 5.0\%  &        & 1.6\% &  & $\downarrow$ & $\times$ & 0.18 \\
    & \feat{education or studying}                           & 5.1\%  &        & 1.2\% &  & $\downarrow$ & $\times$ & 0.19 \\
    & \feat{AI text detection for student work}              & 1.4\%  &        & 0.5\% &  & $\downarrow$ & $\times$ & 0.20 \\
    & \feat{D\&D and role-playing games}                     & 1.2\%  &        & 0.4\% &  & $\downarrow$ & $\times$ & 0.24 \\
    & \feat{math and problem-solving}                        & 1.9\%  &        & 0.8\% &  & $\downarrow$ & $\times$ & 0.29 \\
    & \feat{songwriting}                                     & 1.9\%  &        & 0.8\% &  & $\downarrow$ & $\times$ & 0.36 \\
    & \feat{riddles and logic problems}                      & 1.1\%  &        & 0.3\% &  & $\downarrow$ & $\times$ & 0.40 \\
    & \feat{job applications and resumes}                    & 0.6\%  &        & 0.4\% &  & $\downarrow$ & $\times$ & 0.42 \\
    & \feat{marketing, advertising, business growth}         & 2.4\%  &        & 1.5\% &  & $\downarrow$ & $\times$ & 0.51 \\
    & \feat{language use, translation, multilingual}         & 1.4\%  &        & 1.4\% &  & $\downarrow$ & $\times$ & 0.53 \\
    & \feat{medical conditions or diagnoses}$^\ddagger$      & 0.4\%  &        & 1.4\% &  & $\uparrow$   & $\times$ & 2.4 \\
    & \dagfeat{investing, finance, or wealth topics}         & 1.2\%  &        & 0.8\% &  & $\downarrow$ & $\times$ & 0.59 \\
    & \dagfeat{movies, posters, and film}                    & 1.5\%  &        & 0.4\% &  & $\downarrow$ & $\times$ & 0.61 \\
    & \grayfeat{creative writing}                            & \textcolor{gray}{6.0\%}  &        & \textcolor{gray}{2.5\%} &  &  & \textcolor{gray}{$\times$} & \textcolor{gray}{0.31} \\
    & \grayfeat{legal advice and lawsuits}                   & \textcolor{gray}{1.0\%}  & \textcolor{gray}{\scriptsize (Jul'23)} & \textcolor{gray}{0.7\%} &  &  & \textcolor{gray}{$\times$} & \textcolor{gray}{0.40} \\
    & \grayfeat{religion or religious texts}                 & \textcolor{gray}{0.7\%}  &        & \textcolor{gray}{0.5\%} &  &  & \textcolor{gray}{$\times$} & \textcolor{gray}{0.86} \\
    & \grayfeat{maps or geographic information}              & \textcolor{gray}{1.2\%}  &        & \textcolor{gray}{1.8\%} &  &  & \textcolor{gray}{$\times$} & \textcolor{gray}{0.97} \\

    \bottomrule
  \end{tabular}
  \vspace{0.4em}
  \caption{\footnotesize\textit{Frequency of posts exhibiting each feature, by category. Early: mean monthly percentage in Jan~2023. Late: Nov~2025. For features that peak then decline, month of peak is noted and used instead. Gray rows have $p > 0.05$ for slope test (not significant even before correction). Features marked with \textdagger{} have $p < 0.05$ but $p_{\textup{adj}} \geq 0.05$ after Bonferroni correction ($\times 86$). All remaining features are significant after Bonferroni correction. Features marked with $^\ddagger$ show opposite trends relative to their category. ``Change'' column is the relative change suggested by the \emph{fitted} model.}}
  \label{tab:domestic-full}
\end{table*}

\begin{table*}[t]
  \centering
  \footnotesize
  \setlength{\tabcolsep}{2pt}
  \renewcommand{\arraystretch}{0.92}
  \begin{tabular}{@{}p{0.15\textwidth} >{\raggedright\arraybackslash}p{0.4\textwidth} >{\raggedleft\arraybackslash}p{0.06\textwidth}@{\kern 2pt} p{0.07\textwidth}@{} >{\raggedleft\arraybackslash}p{0.06\textwidth}@{\kern 2pt} p{0.07\textwidth}@{\kern 12pt} c c l@{}}
    \toprule
    \textit{Category} & \textit{Feature} & \textit{Early} & & \textit{Late} &  & \multicolumn{3}{c}{\textit{Change}} \\
    \midrule

    \multirow{5}{*}{\shortstack[l]{Language and\\terminology}}
    & \feat{mentions google (search)}                        & 2.0\%  &        & 0.6\% &  & $\downarrow$ & $\times$ & 0.17 \\
    & \feat{uses the word ``bot'' or ``chatbot''}            & 3.0\%  &        & 1.3\% &  & $\downarrow$ & $\times$ & 0.29 \\
    & \dagfeat{use of the word ``generate'' or variants}     & 2.0\%  &        & 1.1\% &  & $\downarrow$ & $\times$ & 0.62 \\
    & \grayfeat{polite expressions (``please'' and ``thank you'')} & \textcolor{gray}{0.8\%}  &        & \textcolor{gray}{0.4\%} &  &  & \textcolor{gray}{$\times$} & \textcolor{gray}{0.55} \\
    & \grayfeat{uses the word ``dumb'' or similar}           & \textcolor{gray}{1.1\%}  &        & \textcolor{gray}{1.0\%} &  &  & \textcolor{gray}{$\times$} & \textcolor{gray}{0.92} \\
    \midrule

    \multirow{4}{*}{\shortstack[l]{Subreddit \\ community}}
    & \feat{user-built projects (sharing or feedback)}       & 4.2\%  &        & 1.7\% &  & $\downarrow$ & $\times$ & 0.20 \\
    & \grayfeat{feature suggestions or improvement requests} & \textcolor{gray}{2.7\%}  &        & \textcolor{gray}{1.1\%} &  &  & \textcolor{gray}{$\times$} & \textcolor{gray}{0.35} \\
    & \grayfeat{reference to Reddit explicitly}              & \textcolor{gray}{2.2\%}  &        & \textcolor{gray}{1.3\%} &  &  & \textcolor{gray}{$\times$} & \textcolor{gray}{0.51} \\
    & \grayfeat{``why'' questions about others' attitudes}   & \textcolor{gray}{1.3\%}  &        & \textcolor{gray}{1.2\%} &  &  & \textcolor{gray}{$\times$} & \textcolor{gray}{0.82} \\
    \midrule

    \multirow{7}{*}{\shortstack[l]{Perspectives}}
    & \feat{discussions about how LLMs represent knowledge}  & 3.3\%  &        & 1.3\% &  & $\downarrow$ & $\times$ & 0.20 \\
    & \feat{predictions about future development or capabilities} & 5.6\%  &        & 1.7\% &  & $\downarrow$ & $\times$ & 0.28 \\
    & \feat{ethical, legal, or copyright concerns}           & 2.3\%  &        & 0.7\% &  & $\downarrow$ & $\times$ & 0.38 \\
    & \feat{privacy concerns (data leaks or exposure)}       & 0.8\%  &        & 2.6\% &  & $\uparrow$   & $\times$ & 1.7 \\
    & \dagfeat{perceived bias in ChatGPT responses}          & 1.6\%  &        & 0.5\% &  & $\downarrow$ & $\times$ & 0.33 \\
    & \dagfeat{societal collapse and existential-threat scenarios} & 1.9\%  &        & 1.3\% & {\scriptsize (Apr'25)} & $\downarrow$ & $\times$ & 0.91 \\
    & \grayfeat{societal impacts, risks, controversies}      & \textcolor{gray}{4.5\%}  &        & \textcolor{gray}{4.4\%} &  &  & \textcolor{gray}{$\times$} & \textcolor{gray}{0.85} \\
    \midrule

    \multirow{3}{*}{\shortstack[l]{Product updates}}
    & \dagfeat{perception of recent drops in quality}        & 3.9\%  & {\scriptsize (Nov'23)} & 5.0\% &  & $\uparrow$   & $\times$ & 1.0 \\
    & \dagfeat{frustration or hatred about product updates}  & 2.4\%  &        & 7.8\% &  & $\uparrow$   & $\times$ & 2.5 \\
    & \dagfeat{dissatisfaction with 4o removal and loss of control} & 0.3\%  &        & 7.6\% & {\scriptsize (Aug'25)} & $\uparrow$   & $\times$ & 4.5 \\
    \midrule

    \multirow{6}{*}{\shortstack[l]{Jailbreaking \& \\ content policy}}
    & \feat{jailbreak prompts or techniques}                 & 3.8\%  & {\scriptsize (Mar'23)} & 0.4\% &  & $\downarrow$ & $\times$ & 0.00 \\
    & \feat{offensive or inappropriate content}              & 0.1\%  &        & 0.5\% &  & $\uparrow$   & $\times$ & 3.3 \\
    & \grayfeat{jailbreaking via DAN or personas}            & \textcolor{gray}{2.0\%}  & \textcolor{gray}{\scriptsize (Mar'23)} & \textcolor{gray}{0.3\%} &  &  & \textcolor{gray}{$\times$} & \textcolor{gray}{0.09} \\
    & \grayfeat{censorship or content policy restrictions}   & \textcolor{gray}{5.9\%}  &        & \textcolor{gray}{5.4\%} &  &  & \textcolor{gray}{$\times$} & \textcolor{gray}{0.59} \\
    & \grayfeat{NSFW content}                                & \textcolor{gray}{1.6\%}  &        & \textcolor{gray}{1.9\%} &  &  & \textcolor{gray}{$\times$} & \textcolor{gray}{0.82} \\
    & \grayfeat{complaints about getting direct or unfiltered answers} & \textcolor{gray}{1.7\%}  &        & \textcolor{gray}{2.8\%} &  &  & \textcolor{gray}{$\times$} & \textcolor{gray}{1.2} \\
    \bottomrule
  \end{tabular}
  \vspace{0.4em}
  \caption{\footnotesize\textit{Adoption-related features. See Table~\ref{tab:domestic-full} for column definitions and significance notation.}}
  \label{tab:adoption}
\end{table*}

\subsection{Section \ref{subsec:emotional} (emotional engagement)}
\label{app:emotion}

\paragraph{Main emotional engagement features.} Table~\ref{tab:emotion} provides quantitative results for main emotional engagement features. Table~\ref{tab:emotion-automated} lists all features that at least one of our quantitative methods groups with the emotional engagement family.  

\begin{table}[H]
  \centering
  {\footnotesize
  \begin{tabular}{@{}p{0.38\linewidth} >{\raggedleft\arraybackslash}p{0.06\linewidth}@{\kern 2pt} p{0.07\linewidth}@{} >{\raggedleft\arraybackslash}p{0.06\linewidth}@{\kern 2pt} p{0.07\linewidth}@{\kern 4pt} c@{\kern 2pt}c@{\kern 2pt}l@{\kern 4pt} l@{}}
    \toprule
    \textit{Feature} & \textit{Early} & & \textit{Late} &  & \multicolumn{3}{l}{\hspace{-0.5em}\textit{Change}}  & \textit{Changepoint} \\
    \midrule
    \dagfeat{poetic language} & 1.0\% & & 6.8\% & {\scriptsize (Apr'25)} & $\uparrow$ & $\times$ & 3.2 & \slopeflatup{} 2024-07-30 \\
    \feat{feelings of attachment or companionship with AI} & 0.6\% & & 3.8\% & {\scriptsize (Apr'25)} & $\uparrow$ & $\times$ & 4.8 & \slopeflatup{} 2024-05-13 \\
    \feat{using ChatGPT for emotional support or therapy} & 0.7\% & & 3.0\% &  & $\uparrow$ & $\times$ & 5.1 & \slopeflatup{} 2024-05-13 \\
    \feat{naming ChatGPT} & 0.5\% & & 1.4\% & {\scriptsize (Apr'25)} & $\uparrow$ & $\times$ & 2.3 &  \\
    \feat{romantic relationships with AI} & 0.4\% & & 0.9\% &  & $\uparrow$ & $\times$ & 3.0 &  \\
    \grayfeat{AI consciousness or sentience} & \textcolor{gray}{1.5\%} & & \textcolor{gray}{2.8\%} & \textcolor{gray}{\scriptsize (Apr'25)} &  & \textcolor{gray}{$\times$} & \textcolor{gray}{1.2} &  \textcolor{gray}{\slopeflatup{} 2024-07-30} \\
    \grayfeat{personal stories about positive impact} & \textcolor{gray}{2.8\%} & & \textcolor{gray}{3.1\%} & \textcolor{gray}{\scriptsize (Jan'25)} & & \textcolor{gray}{$\times$} & \textcolor{gray}{0.86} &  \textcolor{gray}{\slopeflatup{} 2024-05-13} \\
    \bottomrule
  \end{tabular}}
  \vspace{0.4em}
  \caption{\footnotesize\textit{Emotional engagement features. Early: Jan 2023. Late: Nov 2025. For features that peak before the end of 2025, month of peak is noted. Gray rows have $p > 0.05$, for test of slope change $\geq 10\%$; features marked with \textdagger{} have $p < 0.05$ but $p_{\textup{adj}} \geq 0.05$ after Bonferroni correction.}}
  \label{tab:emotion}
\end{table}

\begin{table}[H]
  \centering
  {\footnotesize
  \begin{tabular}{@{}p{0.35\linewidth}p{0.2\linewidth}cccc}
    \toprule
    \textit{Feature} & \textit{Category (see Tables~\ref{tab:domestic-full}, \ref{tab:adoption})} & traj. & corr. & comb. & chpt. \\
    \midrule
    \textit{medical conditions} & applications & $\checkmark$ & $\checkmark$ & $\checkmark$ & $\checkmark$ \\
    \textit{requests for harsh or unfiltered roasts} & uncategorized & $\checkmark$ &  & $\checkmark$ &  \\
    \textit{memory features and data saving} & advanced usage & $\checkmark$ &  &  & $\checkmark$ \\
    \textit{polite expressions (``please'' and ``thank you'')} & language & $\checkmark$ &  &  &  \\
    \textit{offensive or inappropriate content} & jailbreaking & $\checkmark$ &  &  &  \\
    \textit{false or fabricated information} & advanced usage & $\checkmark$ &  &  &  \\
    \textit{societal collapse and existential-threat scenarios} & perspectives & $\checkmark$ &  &  &  \\
    \bottomrule
  \end{tabular}}
  \vspace{0.4em}
  \caption{\footnotesize\textit{``emotional engagement'' features identified by automated methods only.}}
  \label{tab:emotion-automated}
\end{table}

\paragraph{Therapy vs companionship.}

Table \ref{tab:word-comparison} shows the top 20 most distinctive words for \textit{therapy} versus \textit{companionship}.

\begin{table}[H]
\centering
\caption{\footnotesize\textit{Distinctive words for \emph{therapy} feature vs \emph{companionship} feature. Log-odds computed with informative Dirichlet priors. $n_\text{therapy}$ and $n_\text{companion}$ show raw counts in therapy-only ($n=1889$) vs companionship-only ($n=2561$) posts.}}
\label{tab:word-comparison}
\begin{minipage}{0.48\textwidth}
\centering
\subcaption{More distinctive of \textit{therapy}}
\label{tab:therapy-words}
\small
\begin{tabular}{lrrrr}
\toprule
term & log-odds & $z$-score & $n_\text{therapy}$ & $n_\text{companion}$ \\
\midrule
therapist & $-$2.26 & $-$24.4 & 1319 & 59 \\
therapy & $-$2.47 & $-$22.8 & 1166 & 31 \\
help & $-$0.98 & $-$19.0 & 1406 & 403 \\
mental & $-$1.99 & $-$18.8 & 792 & 58 \\
health & $-$2.22 & $-$18.1 & 723 & 35 \\
helped & $-$1.65 & $-$16.2 & 629 & 77 \\
my & $-$0.31 & $-$15.6 & 5919 & 3664 \\
for & $-$0.31 & $-$15.2 & 5498 & 3388 \\
life & $-$0.80 & $-$15.2 & 1181 & 418 \\
support & $-$1.26 & $-$14.3 & 604 & 123 \\
her & $-$0.67 & $-$11.2 & 835 & 343 \\
through & $-$0.68 & $-$11.2 & 824 & 337 \\
trauma & $-$1.99 & $-$11.1 & 274 & 20 \\
anxiety & $-$2.09 & $-$10.8 & 260 & 16 \\
issues & $-$1.31 & $-$10.6 & 318 & 61 \\
advice & $-$0.96 & $-$10.5 & 443 & 130 \\
she & $-$0.59 & $-$10.5 & 897 & 407 \\
was & $-$0.26 & $-$10.0 & 3260 & 2124 \\
years & $-$0.99 & $-$9.9 & 375 & 106 \\
professional & $-$1.37 & $-$9.8 & 265 & 47 \\
\bottomrule
\end{tabular}
\end{minipage}
\hfill 
\begin{minipage}{0.48\textwidth}
\centering
\subcaption{More distinctive of \textit{companionship}}
\label{tab:attach-words}
\small
\begin{tabular}{lrrrr}
\toprule
term & log-odds & $z$-score & $n_\text{therapy}$ & $n_\text{companion}$ \\
\midrule
explicitly & 2.89 & 37.7 & 24 & 2709 \\
like & 0.61 & 26.2 & 2541 & 4396 \\
personality & 1.47 & 17.4 & 133 & 632 \\
it & 0.19 & 15.8 & 12522 & 13443 \\
feels & 0.95 & 14.6 & 280 & 713 \\
just & 0.37 & 13.9 & 2212 & 2918 \\
human & 0.64 & 13.8 & 649 & 1154 \\
conversation & 0.72 & 11.9 & 369 & 718 \\
humans & 1.06 & 11.6 & 139 & 400 \\
feel & 0.39 & 11.0 & 1254 & 1683 \\
bing & 2.08 & 10.7 & 18 & 202 \\
gpt & 0.37 & 10.5 & 1226 & 1623 \\
else & 0.63 & 10.4 & 385 & 675 \\
something & 0.41 & 10.1 & 929 & 1284 \\
question & 0.83 & 9.8 & 179 & 394 \\
friend & 0.56 & 9.7 & 438 & 710 \\
more & 0.32 & 9.6 & 1448 & 1807 \\
its & 0.49 & 9.5 & 564 & 847 \\
tone & 0.85 & 9.3 & 150 & 341 \\
anyone & 0.51 & 9.0 & 458 & 706 \\
\bottomrule
\end{tabular}
\end{minipage}
\end{table}

\FloatBarrier
\section{Algorithms and proofs for Section~\ref{sec:realtime}}
\label{app:thy}

The core algorithmic tool that \methodname~relies on is a ``betting-style'' algorithm for sequential mean testing. One such algorithm is the one implemented in \citet{dai2025individual}, which is summarized in Algorithm \ref{alg:mean-test}.

\begin{algorithm}[H]
\caption{\footnotesize\textit{Level-$\alpha$ sequential mean test for $\cH_0: \mu \leq \mu_0$}}
\label{alg:mean-test}

\SetKwProg{Proc}{Procedure}{}{}
\Proc{\textsc{Initialize}($\mu_0$, $\alpha$)}{
    $\omega \gets 0$;\quad $\lambda \gets 0$;\quad $S \gets 0$\;
}

\Proc{\textsc{Increment}($x$)}{
    $z \gets \tfrac{x - \mu_0}{1 + \lambda(x - \mu_0)}$\;
    $\omega \gets \omega + \ln(1 + \lambda(x - \mu_0))$\;
    $S \gets S + z^2$\;
    $\lambda \gets \mathrm{Proj}_{[0,1]}\!\left(\lambda + \tfrac{2}{2 - \ln(3)} \cdot \tfrac{z}{1 + S}\right)$\;
    \Return $\omega > \ln(1/\alpha)$ \tcp*{reject $\cH_0$}
}
\end{algorithm}

Algorithm \ref{alg:mean-test} generically provides the following guarantee. This result is elementary (see, e.g., proof of a very similar result in \citet{dai2025individual}); we reproduce it here in the context of our paper for completeness.
\begin{theorem}[Validity]\label{thm:mean-test-validity}
Let $x_1, x_2, \ldots \in [0,1]$ be a stream of observations with $\E[x_t \mid \mathcal{F}_{t-1}] \leq \mu$, where $\mathcal{F}_{t-1}$ is the filtration generated by $x_1, \ldots, x_{t-1}$.
Running Algorithm~\ref{alg:mean-test} at level $\alpha$ on this stream guarantees that, if ${\cH_0: \mu \leq \mu_0}$ holds, the likelihood of ever rejecting is at most $\alpha$. That is,
$\Pr\left[\exists t: \omega_t > \ln(\nicefrac{1}{\alpha}) \text{ when }\mu \leq \mu_0\right] \leq \alpha.$

\end{theorem}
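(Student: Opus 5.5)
We argue by test supermartingales and Ville's inequality. Define the wealth process $K_t := \exp(\omega_t) = \prod_{s=1}^t \bigl(1 + \lambda_s (x_s - \mu_0)\bigr)$, where $\lambda_s$ is the value of $\lambda$ held \emph{before} $x_s$ is processed. In Algorithm~\ref{alg:mean-test}, $\lambda$ is updated only after the increment to $\omega$ is computed, so $\lambda_s$ depends only on $x_1,\dots,x_{s-1}$; it is therefore $\mathcal{F}_{s-1}$-measurable, i.e.\ predictable. Rejection at some time $t$ is exactly the event $K_t > 1/\alpha$, so it suffices to show that $\{K_t\}$ is a nonnegative supermartingale with $K_0 = 1$ under $\cH_0$.

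\textbf{Nonnegativity.} Since $x_s \in [0,1]$, $\mu_0 \in [0,1]$, and $\lambda_s \in [0,1]$ (by the projection step), we have $x_s - \mu_0 \geq -\mu_0 \geq -1$. Hence $1 + \lambda_s(x_s - \mu_0) \geq 1 - \lambda_s \mu_0 \geq 0$. If $\mu_0 = 1$ and $\lambda_s = 1$, the factor can vanish; then $K_t = 0$ from that point on, which is harmless for the supermartingale argument. The $\ln$ and the $z$-update in the algorithm may be ill-defined there, so we either note that $\mu_0 < 1$ in our use ($\beta\eps < 1$), or we interpret $\omega = -\infty$.

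\textbf{Supermartingale property.} By predictability of $\lambda_t$,
\[
\E[K_t \mid \mathcal{F}_{t-1}] = K_{t-1}\bigl(1 + \lambda_t(\E[x_t\mid\mathcal{F}_{t-1}] - \mu_0)\bigr) \le K_{t-1}\bigl(1 + \lambda_t(\mu - \mu_0)\bigr) \le K_{t-1}.
\]
The first inequality uses $\lambda_t \geq 0$ together with $\E[x_t \mid \mathcal{F}_{t-1}] \leq \mu$. The second uses $\mu \le \mu_0$ under $\cH_0$, again with $\lambda_t \ge 0$. Note that the nonnegativity of $\lambda$ enforced by $\mathrm{Proj}_{[0,1]}$ is essential here: it is what makes the one-sided null work.

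\textbf{Conclusion.} Ville's inequality for nonnegative supermartingales gives $\Pr[\exists t: K_t \ge 1/\alpha] \le \alpha \, \E[K_0] = \alpha$, and the event $\{\exists t: \omega_t > \ln(1/\alpha)\}$ is contained in this one. The only real point of care is verifying predictability of $\lambda$ from the update order in the pseudocode. The specific adaptive step size $\tfrac{2}{2-\ln 3}\cdot\tfrac{z}{1+S}$ plays no role in validity; it matters only for power. The argument applies for any predictable $[0,1]$-valued betting sequence.
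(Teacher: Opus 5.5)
Your proposal is correct and follows the paper's proof step for step: you show that $\exp(\omega_t)$ is a nonnegative supermartingale under $\cH_0$, using predictability of $\lambda$, $\lambda \geq 0$, and $\E[x_t \mid \mathcal{F}_{t-1}] \leq \mu \leq \mu_0$, and you then apply Ville's inequality with $\exp(\omega_0)=1$. Your explicit check that each factor $1+\lambda(x_t-\mu_0)$ is nonnegative adds a little rigor that the paper's ``non-negativity follows directly from the exponential'' glosses over, and the degenerate case you flag never arises: for $\mu_0\in[0,1)$ each factor is at least $1-\mu_0>0$, while for $\mu_0=1$ every $z\le 0$, so the projection keeps $\lambda$ at $0$.
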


\begin{proof}
First note that when $\cH_0$ holds, the sequence $\{\exp(\omega_t)\}_{t\geq 0}$ is a non-negative supermartingale. Non-negativity follows directly from the exponential. The supermartingale property follows from
\begin{align*}
    \E[\exp(\omega_t)|\mathcal{F}_{t-1}]
    &= \E[\exp(\omega_{t-1} + \ln(1+\lambda_{t-1}(x_t - \mu_0)))|\mathcal{F}_{t-1}]
    \\&= \exp(\omega_{t-1}) \cdot (1+\lambda_{t-1}(\E[x_t \mid \mathcal{F}_{t-1}] - \mu_0))
    \\&\leq \exp(\omega_{t-1}) \cdot (1+\lambda_{t-1}(\mu - \mu_0))
    \\&\leq \exp(\omega_{t-1}),
\end{align*}
where the second equality uses that $\lambda_{t-1}$ is $\mathcal{F}_{t-1}$-measurable (predictable), and the inequality holds because $\mu \leq \mu_0$ under $\cH_0$ and $\lambda_{t-1} \geq 0$.
Applying Ville's inequality to the supermartingale $\{\exp(\omega_t)\}_{t\geq 0}$ yields
\begin{align*}
\Pr[\exists t: \omega_t > \ln(1/\alpha)] = \Pr[\exists t: \exp(\omega_t) > 1/\alpha]
\leq \E[\exp(\omega_0)] \cdot \alpha
= \alpha,
\end{align*}
where the final equality follows because $\omega_0 = 0$ and hence $\exp(\omega_0) = 1$.
\end{proof}

With this in hand, we can give concrete algorithms and guarantees for each of our procedures.
We first handle the accuracy problem and the feature tracking problem one at a time, as our statistical guarantees are made separately, and then summarize how to put them together in Section \ref{app:thy-sum}.

\subsection{Accuracy monitoring.}

Our concrete procedure for accuracy monitoring is given in Algorithm \ref{alg:accuracy}.

\begin{algorithm}[H]
\caption{\footnotesize\textit{Real-time accuracy test}}
\label{alg:accuracy}
\KwIn{Initial data $X_\text{init}$, featurization algorithm $\cA$, threshold factor $\beta$, significance $\alpha$}
\KwOut{Sequence of featurizations with timestamps $\{(\widehat C_s, t_s)\}_{s \geq 0}$}
\textbf{Initialize:}
$s \gets 0;\quad \Ccurr \gets \cA(X_\text{init})$;\quad $\eps_{\text{curr}} \gets \err(\Ccurr(X_\text{init}))$;
emit $(\Ccurr, 0)$\;
Let $\tau$ be an instance of Algorithm~\ref{alg:mean-test}, and call
$\tau.\textsc{Initialize}(\beta \cdot \eps_{\text{curr}}, \alpha / 10.58)$.\;

\For{each new data batch $X_t$}{
    $\eps_t \gets \err(\Ccurr(X_t))$\;
    $\textit{rejected} \gets \tau.\textsc{Increment}(\eps_t)$\;
    \If{rejected}{ $s \gets s + 1$, and emit $(\Ccurr, t)$\;
        $\Ccurr \gets \cA(X_{1:t})$\; and
        $\eps_\text{curr} \gets \err(\Ccurr(X_{1:t}))$\;
        Compute $\alpha_s = \alpha \cdot \nicefrac{(s+1)^{-0.1}}{10.58},$ and set up
        $\tau.\textsc{Initialize}(\beta \cdot \eps_\text{curr}, \alpha_s)$\;

    }
}

\Return all emitted $(\widehat C_s, t_s)$ pairs\;
\end{algorithm}

For this approach, recall that we test null hypothesis $\cH_0^\mathrm{acc}: \err(\Ccurr(X_t)) \leq \beta \cdot \eps_{curr}$.
The key modeling assumption in order to apply Theorem \ref{thm:mean-test-validity} is that the sequence of errors $\eps_t$ satisfies $\E[\eps_t \mid \cF_{t-1}] \leq \beta \eps_\text{curr}.$\footnote{The astute reader may notice that future errors should generally be expected to exceed prior error, simply due to having fit the model to optimize the prior data. While this can statistically be resolved by sample splitting, we feel that the tradeoff, e.g., in reduced model quality, would not justify the slightly improved statistical ``rigor.'' Realistically, $\beta$ can easily be set high enough to exceed the expected additional error due to generalization.}
The following proposition formalizes the FDR guarantee.

\begin{proposition}[Formal statement of Proposition \ref{prop:acc}]\label{prop:acc-fdr}
Run Algorithm~\ref{alg:accuracy} with the $s$-th test (using Algorithm~\ref{alg:mean-test}) at level
$\alpha_s = \alpha \cdot \frac{(s+1)^{-0.1}}{10.58},$
as specified in line 10. Let $\mathcal{R}_S$ be the set of test indices that reject among the first $S$ tests, and let $\mathcal{H}_0 \subseteq \mathbb{N}$ denote the set of indices where $\cH_0^\mathrm{acc}$ holds. Then,
$\mathrm{FDR}(\mathcal{R}_S) := \mathbb{E}\left[\frac{|\mathcal{R}_S \cap \mathcal{H}_0|}{|\mathcal{R}_S| \vee 1}\right] \leq \alpha \text{ for all } S \in \mathbb{N}.$
\end{proposition}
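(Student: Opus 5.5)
The plan is to reduce the statement to the online FDR guarantee of \citet{xu2024online} for sequences of anytime-valid tests run at levels given by a summable (or LORD/SAFFRON-style) spending schedule, using Theorem~\ref{thm:mean-test-validity} to verify the per-test input. First, I would check the constants. The levels are $\alpha_s = \alpha (s+1)^{-0.1}/10.58$ for $s \geq 0$, i.e.\ $\alpha \gamma_{s+1}$ with $\gamma_j \propto j^{-0.1}$ normalized by $10.58$. This is the standard $\gamma$-sequence from the online FDR literature; the constant $10.58$ is chosen so that the relevant sum is at most one. This is where I would be careful: $\sum_j j^{-0.1}$ diverges, so the normalization cannot be a plain union bound over all tests. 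Instead it must be the LORD-type normalization used by \citet{xu2024online}, where the wealth or level at test $s$ is a spending sequence indexed by the time since the last rejection. In Algorithm~\ref{alg:accuracy}, a new test starts only after each rejection, so the $s$-th test begins exactly at the $s$-th rejection. The schedule indexed by $s$ therefore coincides with the e-LOND/LORD-style assignment, in which the level is a function of the number of rejections so far.

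Second, I would establish the per-test validity. Condition on $\mathcal{F}_{t_s}$, the history up to the start of test $s$, which fixes $\Ccurr$ and $\eps_{\mathrm{curr}}$ as predictable quantities. Under the modeling assumption $\E[\eps_t \mid \mathcal{F}_{t-1}] \leq \beta\eps_{\mathrm{curr}}$ for $s \in \mathcal{H}_0$, Theorem~\ref{thm:mean-test-validity} applied with $\mu_0 = \beta\eps_{\mathrm{curr}}$ gives that $\exp(\omega_t)$ is a nonnegative supermartingale with initial value $1$. Hence $E_s := \alpha_s^{-1}\mathbf{1}[\text{test } s \text{ rejects}]$ satisfies $\E[E_s \mid \mathcal{F}_{t_s}] \leq 1$; equivalently, the stopped value $\exp(\omega_{\tau})$ is an e-value. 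This conditional statement is exactly what online FDR procedures with dependent, sequentially started tests require: each null test is conditionally valid given the past.

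Third, I would plug in the e-LOND argument. Write
\[
\mathrm{FDR}(\mathcal{R}_S) = \E\Big[\sum_{s \in \mathcal{H}_0, s \leq S} \frac{\mathbf{1}[s \in \mathcal{R}_S]}{|\mathcal{R}_S| \vee 1}\Big].
\]
On rejection of test $s$, we have $|\mathcal{R}_S| \geq s$, since the tests are started only at rejections and test $s$ rejecting means there are at least $s$ rejections. So each term is at most $\mathbf{1}[s\text{ rejects}]/s$. Bounding $\Pr[s \text{ rejects} \mid \mathcal{F}_{t_s}] \leq \alpha_s$ gives $\mathrm{FDR} \leq \sum_s \alpha_s / s$. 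With $\alpha_s \propto (s+1)^{-0.1}$, this sum is $\alpha \sum_{s\ge1}(s+1)^{-0.1}s^{-1}/10.58$. It converges (exponent $1.1$), and its value is the source of $10.58$, since $\zeta(1.1) \approx 10.58$.

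The main obstacle is this denominator step: justifying $|\mathcal{R}_S| \geq s$ on rejection of test $s$, and handling the index shift between $s$ and $s+1$ so that the constant truly yields $\leq \alpha$. If the index bookkeeping fails, I would fall back on quoting the general e-LOND theorem of \citet{xu2024online}, treating $\alpha_s$ as $\alpha\gamma_{s}$ with $\sum\gamma \le 1$ under their normalization.
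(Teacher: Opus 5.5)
Your skeleton is the same as the paper's. You get per-test validity from Theorem~\ref{thm:mean-test-validity}; your $\Pr[\text{test } s \text{ rejects}\mid\mathcal{F}_{t_s}]\le\alpha_s$ is equivalent to the paper's stopped e-value $E_s=\exp(\omega_{T_s})$ with $\mathbf{1}\{E_s\ge 1/\alpha_s\}\le\alpha_s E_s$, and conditioning on $\mathcal{F}_{t_s}$ is, if anything, more careful about the null set being data-dependent. You then use the fact that test $s$ only starts after every earlier test has rejected, and sum $\alpha_s$ divided by a lower bound on $|\mathcal{R}_S|$.

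However, the denominator step you flagged is off by one as written, and this changes the conclusion. In Algorithm~\ref{alg:accuracy} the tests are indexed from $s=0$; the initial test runs at $\alpha/10.58=\alpha_0$. Your bound $|\mathcal{R}_S|\ge s$ forgets to count test $s$'s own rejection. It gives nothing for $s=0$ beyond the $\vee 1$, and your sum $\sum_{s\ge1}s^{-1}(s+1)^{-0.1}$ simply drops that test. Restoring it, your bound only gives about $\tfrac{\alpha}{10.58}(1+10.47)\approx 1.08\,\alpha$, so $\le\alpha$ does not follow. If you instead meant 1-indexed tests, your count is right but the level formula you plug in is no longer line 10's. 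Also, $\sum_{s\ge1}s^{-1}(s+1)^{-0.1}\approx 10.47$ is not $\zeta(1.1)$, so it is not the source of the constant.

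The fix is the paper's count. If test $s$ rejects, then tests $0,1,\dots,s$ have all rejected, so $|\mathcal{R}_S|\ge s+1$, and
\[
\mathrm{FDR}(\mathcal{R}_S)\le\sum_{s\ge0}\frac{\alpha_s}{s+1}=\frac{\alpha}{10.58}\sum_{j\ge1}j^{-1.1}=\frac{\zeta(1.1)}{10.58}\,\alpha .
\]
This is also the correct identification with e-LOND (Theorem~\ref{thm:xuramdas}). Take $\gamma_s=(s+1)^{-1.1}/10.58$ and note that $|\mathcal{R}_{s-1}|=s$ whenever test $s$ runs; then $\alpha\gamma_s(|\mathcal{R}_{s-1}|+1)$ is exactly line 10's $\alpha_s$. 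Your description, with $\gamma_j\propto j^{-0.1}$ and a spending sequence ``indexed by the time since the last rejection,'' is not that identification.

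One residual caveat is shared by the paper's own proof. Since $\zeta(1.1)\approx 10.5845>10.58$, the argument literally yields about $1.0004\,\alpha$. To claim $\le\alpha$ for all $S$, the normalizer should be rounded up, e.g.\ to $10.59$.
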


The proof of Proposition~\ref{prop:acc-fdr} adapts Theorem 1 of \citet{xu2024online}, restated here, to our setting.
\begin{theorem}[Theorem 1 of \citet{xu2024online}, simplified]\label{thm:xuramdas}
Let $(E_s)_{s \in \mathbb{N}}$ be a sequence of e-values satisfying $\mathbb{E}[E_s] \leq 1$ for each $s \in \mathcal{H}_0$, where $\mathcal{H}_0$ denotes the set of true null hypotheses. Let $(\gamma_s)_{s \in \mathbb{N}}$ be a non-negative sequence with $\sum_{s=0}^\infty \gamma_s \leq 1$. Define adaptive test levels
$\alpha_s := \alpha \gamma_s (|\mathcal{R}_{s-1}| + 1),$
where $\mathcal{R}_{s-1}$ is the set of rejections among tests $0, \ldots, s-1$, and reject test $s$ when $E_s \geq 1/\alpha_s$. Then $\mathrm{FDR}(\mathcal{R}_S) \leq \alpha$ for all $S \in \mathbb{N}$.
\end{theorem}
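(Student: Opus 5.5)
The plan is a direct pointwise bound followed by linearity of expectation. Since this is the e-LOND argument, no independence between the e-values is needed. Fix $S \in \mathbb{N}$ and write the false discovery proportion as a sum over true nulls:
\[
\frac{|\mathcal{R}_S \cap \mathcal{H}_0|}{|\mathcal{R}_S| \vee 1} \;=\; \sum_{s \in \mathcal{H}_0,\, s \leq S} \frac{\mathbf{1}[s \in \mathcal{R}_S]}{|\mathcal{R}_S| \vee 1}.
\]
I will bound each summand deterministically by a quantity whose expectation involves only $\mathbb{E}[E_s]$.

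\textbf{Step 1 (denominator).} Fix a null index $s$ and suppose $s$ is rejected. The rejections among tests $0,\ldots,s-1$ together with $s$ itself are all contained in $\mathcal{R}_S$. Hence $|\mathcal{R}_S| \vee 1 \geq |\mathcal{R}_{s-1}| + 1$. This gives
\[
\frac{\mathbf{1}[s \in \mathcal{R}_S]}{|\mathcal{R}_S| \vee 1} \leq \frac{\mathbf{1}[E_s \geq 1/\alpha_s]}{|\mathcal{R}_{s-1}|+1}.
\]

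\textbf{Step 2 (Markov-type indicator bound).} Since $E_s \geq 0$, we have $\mathbf{1}[E_s \geq 1/\alpha_s] \leq \alpha_s E_s$ pointwise. This is the only place where the rejection rule is used. Substituting $\alpha_s = \alpha\gamma_s(|\mathcal{R}_{s-1}|+1)$, the random factor $|\mathcal{R}_{s-1}|+1$ cancels exactly with the denominator from Step 1. Each summand is therefore at most $\alpha \gamma_s E_s$.

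\textbf{Step 3 (sum).} Taking expectations and using linearity,
\[
\mathrm{FDR}(\mathcal{R}_S) \leq \alpha \sum_{s \in \mathcal{H}_0} \gamma_s \,\mathbb{E}[E_s] \leq \alpha \sum_{s \geq 0} \gamma_s \leq \alpha.
\]
This uses the facts that $\mathbb{E}[E_s] \leq 1$ for nulls, $\gamma_s \geq 0$, and $\sum_s \gamma_s \leq 1$.

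The only delicate point, and the one I expect to need care, is Step 1 combined with the cancellation in Step 2. The level $\alpha_s$ is random and depends on earlier rejections, so one might fear that $E_s$ and $|\mathcal{R}_{s-1}|$ interact. The argument sidesteps this entirely. The cancellation happens pointwise, before any expectation is taken, so only the marginal bound $\mathbb{E}[E_s] \leq 1$ is needed, and arbitrary dependence across the $E_s$ is allowed.

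For the application to Proposition~\ref{prop:acc-fdr}, I would then take $E_s$ to be the stopped value $\exp(\omega)$ of the $s$-th run of Algorithm~\ref{alg:mean-test}. Optional stopping for the nonnegative supermartingale from Theorem~\ref{thm:mean-test-validity} gives $\mathbb{E}[E_s] \leq 1$ under the null. I would then check that the paper's fixed schedule with $\gamma_s = (s+1)^{-1.1}/10.58$ is dominated by the adaptive levels, with the normalizing constant chosen so that $\sum_s \gamma_s \leq 1$.
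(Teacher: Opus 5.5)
Your proof is correct and is the standard e-LOND argument. The paper cites this theorem without proof, but its proof of Proposition~\ref{prop:acc-fdr} runs the same three steps: the denominator bound on rejection (which there specializes to $|\mathcal{R}_S| \geq s+1$), the pointwise bound $\mathbf{1}\{E_s \geq 1/\alpha_s\} \leq \alpha_s E_s$, and linearity with $\mathbb{E}[E_s] \leq 1$.

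One aside on your application remark. Since $\zeta(1.1) \approx 10.5845$, normalizing by $10.58$ gives $\sum_{s \geq 0} \gamma_s \approx 1.0004 > 1$, so that check needs a constant of at least $\zeta(1.1)$. The domination itself holds with equality: whenever test $s$ runs, $|\mathcal{R}_{s-1}| = s$, so the fixed and adaptive levels coincide.
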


\begin{proof}[Proof of Proposition \ref{prop:acc-fdr}]
By Theorem~\ref{thm:mean-test-validity}, for each test $s$ where $\cH_0^\mathrm{acc}$ holds, the wealth process $\{\exp(\omega_t)\}_{t \geq 0}$ is a non-negative supermartingale with $\exp(\omega_0) = 1$. Let $T_s$ denote the stopping time at which test $s$ rejects (with $T_s = \infty$ if it never rejects), and define the e-value $E_s := \exp(\omega_{T_s})$. By optional stopping, $\mathbb{E}[E_s] \leq 1$ when $\cH_0^\mathrm{acc}$ holds for test $s$.

A key observation is that we test $s$ only when test $s-1$ rejects; this makes the sequence of tests defined by our algorithm a special case of Theorem~\ref{thm:xuramdas}.
When test $s$ rejects (i.e., $\mathbf{1}\{E_s \geq 1/\alpha_s\} = 1$), all tests $0, 1, \ldots, s$ have rejected, so $|\mathcal{R}_S| \geq s + 1$. This implies that
$\frac{\mathbf{1}\{E_s \geq 1/\alpha_s\}}{|\mathcal{R}_S| \vee 1} \leq \frac{\mathbf{1}\{E_s \geq 1/\alpha_s\}}{s + 1}.$
Combining these observations, we have
\begin{align*}
\mathrm{FDR}(\mathcal{R}_S) &= \sum_{s \in \mathcal{H}_0 \cap [S]} \mathbb{E}\left[\tfrac{\mathbf{1}\{E_s \geq 1/\alpha_s\}}{|\mathcal{R}_S| \vee 1}\right]
\leq \sum_{s \in \mathcal{H}_0 \cap [S]} \mathbb{E}\left[\tfrac{\alpha_s E_s}{s + 1}\right] = \sum_{s \in \mathcal{H}_0 \cap [S]} \tfrac{\alpha \cdot (s+1)^{-0.1}}{10.58(s+1)} \mathbb{E}[E_s]
\leq \tfrac{\alpha}{10.58} \sum_{s=0}^\infty \tfrac{1}{(s+1)^{1.1}} \leq \alpha,
\end{align*}
where the first inequality applies the denominator bound, and the last uses $\sum_{s=1}^\infty s^{-1.1} = \zeta(1.1) \approx 10.58$.
\end{proof}

\subsection{Feature monitoring.}

For feature monitoring, our algorithm can be formalized as follows.

\begin{algorithm}[H]
\caption{\textit{Feature monitoring with dynamic active set}}
\label{alg:feature}
\SetKwProg{Proc}{Procedure}{}{}
\Proc{\textsc{Initialize}($\Ccurr, \alpha, \beta$, $S_\text{init}$)}{
  Set $\Ccurr, \alpha, \beta$ and call
    \textsc{Update}($S_{\text{init}}, \emptyset$)
}
\Proc{\textsc{Increment}($X_t$)}{
  $\textit{rejected} = \{\}$\;
    \For{each $i \in S$}{
    $r \gets \tau^{(i)}.\textsc{Increment}(\Ccurr^{(i)}(X_t))$\;
    \lIf{r}{$\textit{rejected} \gets \textit{rejected} \cup i$}
  }
    \Return \textit{rejected}\;
    }
\Proc{\textsc{Update}($S_\text{add}$, $S_\text{remove}$)}{
$S \gets S \setminus S_\text{remove}$ and
$\alpha_\text{add} = \sum_{i \in S_\text{remove}} \alpha_i$\;
Set  $\{\alpha_i\}_{i \in S_\text{add}}$ with $\sum_{i \in S_\text{add}} \alpha_i \leq \alpha_\text{add}$\;
\For{each $i \in S_\text{add}$}{
  Let $\tau^{(i)}$ be an instance of Algorithm~\ref{alg:mean-test}, and call $\tau^{(i)}.\textsc{Initialize}(\beta \cdot \Ccurr^{(i)}(X_{0:t}), \alpha_i)$
}
}
\end{algorithm}

As with the accuracy monitor, invoking Theorem~\ref{thm:mean-test-validity} also involves some modeling details.
Specifically, for feature-specific monitoring, our null hypothesis indicates that $\E\left[\Ccurr^{(i)}(X_t) \mid \mathcal{F}_{t-1}\right] = \Ccurr^{(i)}(X_{0:r})$, where $r$ is the time at which the test was initialized and randomness is due to realizations of $X_t$.

The following proposition formalizes the FWER guarantee for Algorithm~\ref{alg:feature}.

\begin{proposition}[Formal statement of Proposition~\ref{prop:feat}]\label{prop:feat-fdr}
  Initialize Algorithm \ref{alg:feature} at level $\alpha$ and run it (i.e., call \textsc{Increment} repeatedly) on a stream of observations.
  Let $S$ be the current active set of features, $r$ be the most recent time at which $S$ was updated, and $\mathcal{R}_t$ be the set of tests rejected by Algorithm \ref{alg:feature} at $t$.
  Then, 
  \[ \Pr\left[\exists t > r: \exists i \in \mathcal{R}_t \text{ where } 
  \E\left[\Ccurr^{(i)}(X_t)\right] \leq \Ccurr^{(i)}(X_{0:r})
  \right] \leq \alpha, \]
  even if
$S_{\text{add}}$, $S_{\text{remove}}$, and $r$ are chosen with arbitrary dependence on $\cF_{r}$.
 \end{proposition}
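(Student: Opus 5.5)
Our plan is a union bound over the features in the current active set $S$, with each term controlled by Theorem~\ref{thm:mean-test-validity}. The delicate parts are that $S$, the $\alpha_i$, and $r$ may depend on the past, and that the $\alpha_i$ must sum to at most $\alpha$.

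\textbf{Step 1: the budget invariant.} We first show by induction over calls to \textsc{Update} that $\sum_{i \in S} \alpha_i \le \alpha$ always holds. At initialization, \textsc{Update}$(S_\text{init},\emptyset)$ assigns levels summing to at most $\alpha$. We read the initial call as having $\alpha_\text{add} = \alpha$, which is the intended convention. At each later call, removed features free exactly $\alpha_\text{add} = \sum_{i\in S_\text{remove}}\alpha_i$, and added features are assigned at most $\alpha_\text{add}$ in total. The invariant is therefore preserved.

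\textbf{Step 2: conditional validity of each test.} Fix the most recent update time $r$. We condition on $\cF_r$; since $S_\text{add}$, $S_\text{remove}$, $r$ and the levels are $\cF_r$-measurable, they are fixed under this conditioning. Features in $S$ that were added before $r$ have tests running since their own start time $r_i \le r$. Under the null for feature $i$ we have $\E[\Ccurr^{(i)}(X_t)\mid\cF_{t-1}] \le \mu_0^{(i)}$ for $t > r_i$, where $\mu_0^{(i)}$ is the threshold with which $\tau^{(i)}$ was initialized. Then $\exp(\omega^{(i)}_t)$ is a nonnegative supermartingale started at $1$. This uses the computation in the proof of Theorem~\ref{thm:mean-test-validity}, which only needs $\lambda^{(i)}_{t-1} \ge 0$ to be predictable and the conditional-mean bound. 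By Ville's inequality, conditionally on $\cF_{r_i}$, the probability that $\tau^{(i)}$ ever rejects is at most $\alpha_i$. Here $\alpha_i$ is $\cF_{r_i}$-measurable.

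\textbf{Step 3: combine.} We apply the tower property and a union bound over the (random, but $\cF_r$-measurable) set $S$:
\[
\Pr\left[\exists t>r,\ \exists i\in \mathcal{R}_t \text{ null}\,\middle|\,\cF_r\right] \le \sum_{i\in S} \Pr\left[\tau^{(i)} \text{ ever rejects}\,\middle|\,\cF_r\right] \le \sum_{i \in S}\alpha_i \le \alpha.
\]
Taking expectations then gives the unconditional bound.

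\textbf{Main obstacle.} The hardest step is the middle inequality for features started before $r$. For those, conditioning on $\cF_r$ also conditions on the state of their wealth process at time $r$, and Ville's bound from time $r_i$ does not directly bound rejection after $r$ given $\cF_r$. We would handle this by noting that a feature still in $S$ at time $r$ has not rejected yet, or else treat a rejected test as already alerted. We would then bound the event for each $i$ over its whole lifetime $[r_i,\infty)$ using Ville conditional on $\cF_{r_i}$. This is valid because the choice to keep $i$ in $S$ only selects a subset of events, each of which is contained in the event that $\tau^{(i)}$ ever rejects. So we use $\Pr[\exists t> r: i\in\mathcal R_t,\ i\in S] \le \E\bigl[\Pr[\tau^{(i)}\text{ ever rejects}\mid \cF_{r_i}]\,\mathbf 1\{i\in S\}\bigr] \le \E[\alpha_i \mathbf 1\{i\in S\}]$. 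Summing over all features ever added and using the Step~1 invariant, applied to the set alive after $r$, gives the bound $\alpha$.

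A subtlety remains: a feature that is removed and later re-added gets a fresh instance, and we index it as a new test, so it is a separate term in the sum. We would also state explicitly the null as the conditional-mean condition $\E[\cdot\mid\cF_{t-1}]\le \beta\,\Ccurr^{(i)}(X_{0:r_i})$, matching the modeling assumption stated before the proposition.
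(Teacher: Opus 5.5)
Your Steps 1--3 are the paper's proof. You condition on $\cF_r$, obtain $\Pr[A_i\mid\cF_r]\le\alpha_i$ for each $i\in S$ from Theorem~\ref{thm:mean-test-validity}, take a union bound, and use $\sum_{i\in S}\alpha_i\le\alpha$. The paper asserts that last inequality ``by construction''; your induction, with the convention that the initial \textsc{Update} call has $\alpha_{\text{add}}=\alpha$, makes it precise. The paper applies the per-test conditional bound to every $i\in S(r)$ without asking when that test was started. This is sound exactly when every active test is initialized or reset at $r$. That is how the statement is phrased, since its null uses the threshold $\Ccurr^{(i)}(X_{0:r})$, and it is what the Remark after the proof appeals to. Under that reading $r_i=r$ for all $i\in S$, your ``main obstacle'' never arises, and Step~3 is already complete.

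\textbf{Where your argument fails.} The problem is the repair you propose for tests started at some $r_i<r$ and kept through $r$.
\begin{itemize}
\item The step $\Pr[\exists t>r: i\in\mathcal R_t,\ i\in S]\le\E\bigl[\Pr[\tau^{(i)}\text{ ever rejects}\mid\cF_{r_i}]\,\mathbf 1\{i\in S\}\bigr]$ requires $\mathbf 1\{i\in S\}$ to be $\cF_{r_i}$-measurable. But whether $i$ survives the update at $r$ is decided at time $r$, with arbitrary dependence on $\cF_r$. That includes dependence on the current wealth $\exp(\omega^{(i)}_r)$ of that very test.
\item The containment you invoke only yields $\E[\alpha_i]$ with the indicator dropped. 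Summing over every feature ever added then bounds the error by the total budget ever assigned. That total exceeds $\alpha$ as soon as \textsc{Update} recycles freed budget, and your Step~1 invariant controls only the active set.
\end{itemize}

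\textbf{The selection effect is genuine.} Given $\cF_r$, Ville's inequality bounds a kept test's post-$r$ rejection probability only by $\min\{1,\alpha_i\exp(\omega^{(i)}_r)\}$. Suppose $n$ null tests at level $\alpha/n$, with $n$ large enough that none can have rejected by time $r$, have identically distributed, non-degenerate, mean-one wealth $W$ at $r$. Keep those with $W\ge1$ and give the freed budget to fresh tests. The expected sum of these bounds is then $\alpha\bigl(1+\E[(1-W)\mathbf 1\{W<1\}]\bigr)>\alpha$. Noting that a kept test has not yet rejected only gives $\exp(\omega^{(i)}_r)<1/\alpha_i$, which does not help.

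So for adaptively retained, non-reset tests, neither your patch nor the paper's one-line argument delivers $\alpha$. To close the proof, adopt the reading in which every test in $S$ starts at $r$. Alternatively, require that the decision to retain an old test not depend on data observed after its start.
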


\begin{proof}
Fix a run of Algorithm~\ref{alg:feature} on a sequence of observations, and let $r$ be the most recent time at which the active set $S$ was updated.
Let $S(r)$ denote the corresponding active set, and condition on $\cF_{r}$, the filtration containing all randomness until (and including) time $r$.

For each $i\in S(r)$, let
$A_i \;:=\; \Big\{ \exists t>r : i \in \mathcal{R}_t \text{ and } \cH_0^{(i)} \text{ holds}\Big\}.$
Each instance $\tau^{(i)}$ of Algorithm \ref{alg:mean-test} is
by anytime-valid for all samples arriving after $r$ by Theorem \ref{thm:mean-test-validity}; that is, for each null $i\in S(r)\cap \mathcal{H}_0$, we have that
$\Pr[A_i \mid \cF_{r}] \leq \alpha_i$.
The result follows from union bounding over all $i \in S(r)$, noting that $\sum_{i \in S}\alpha_i \leq \alpha$ by construction, and
taking expectations over $\cF_{r}$.
\end{proof}
\begin{remark}
The result in Proposition~\ref{prop:feat-fdr} may, at first glance, appear ``too good to be true;'' it is well-known that, in general, it is impossible to select hypotheses data-dependently while simultaneously testing them online. For our specific instantiation of the sequential testing framework, however, no samples used for selection are ever re-used for testing, as Line 13 of Algorithm~\ref{alg:feature} always \textit{resets} each new feature test. 
\end{remark}

\subsection{Combined procedure}
\label{app:thy-sum}
Finally, in Algorithm \ref{alg:combined}, we show how Algorithms \ref{alg:accuracy} and \ref{alg:feature} can be used in tandem.

\begin{algorithm}[H]
\caption{\footnotesize\textit{Combined online monitoring (formal version of Algorithm~\ref{alg:monitor})}}
\label{alg:combined}
\KwIn{Initial data $X_\text{init}$, featurization algorithm $\cA$, threshold $\beta$, significance levels $\alpha_\text{acc}, \alpha_\text{feat}$}
\KwOut{Sequence of featurizations and feature alerts}

\textbf{Initialize:}
$\Ccurr \gets \cA(X_\text{init})$; $\eps_\text{curr} \gets \err(\Ccurr(X_\text{init}))$\;
Let $\tau_\text{acc}$ be an instance of Algorithm~\ref{alg:mean-test}, and call $\tau_\text{acc}.\textsc{Initialize}(\beta \cdot \eps_\text{curr}, \alpha_\text{acc})$\;
Let $\cF$ be an instance of Algorithm~\ref{alg:feature}, and call $\cF.\textsc{Initialize}(\Ccurr, \alpha_\text{feat}, \beta, S_\text{init})$\;
\For{each new data batch $X_t$}{
    $\eps_t \gets \err(\Ccurr(X_t))$\;, and
    $\textit{acc\_rejected} \gets \tau_\text{acc}.\textsc{Increment}(\eps_t)$\;
    \If{acc\_rejected}{
        Alert: accuracy degradation\;
        $\Ccurr \gets \cA(X_{1:t})$; $\eps_\text{curr} \gets \err(\Ccurr(X_{1:t}))$\;
        New test
        $\tau_\text{acc}.\textsc{Initialize}(\beta \cdot \eps_\text{curr}, \alpha_s)$ with adjusted $\alpha_s$\;
        Optionally specify $S_{\text{init}}$
        and call
         $\cF.\textsc{Initialize}(\Ccurr, \alpha_{\text{feat}}, \beta_{\text{feat}})$
    }
    \If{external signal to update $S$ (e.g., model update)}{
      Specify $S_{\text{new}}, S_\text{old}$ and
      call $\cF.\textsc{Update}(S_\text{new}, S_\text{old})$\;
    }
    $\textit{feat\_rejected} \gets \cF.\textsc{Increment}(X_t)$\;
    \If{feat\_rejected $\neq \emptyset$}{
        Alert: features $\textit{feat\_rejected}$ show significant change\;
    }
}
\end{algorithm}

\newpage
\section{Monitoring experiments for Section~\ref{sec:realtime}}
\label{app:pros}

\textbf{How did features evolve over time?}
As discussed above, each time a new $\Ccurr$ is computed, its features do not automatically correspond to those of the previous featurization.

In Figure \ref{fig:evolve}, we show how selected sets of features evolved over $\widehat{C}_0$, $\widehat{C}_1$, $\widehat{C}_2$, and $\widehat{C}_3$.

\begin{figure}[h]
    \centering
    \includegraphics[width=0.9\linewidth, trim=0 40pt 0 50pt, clip]{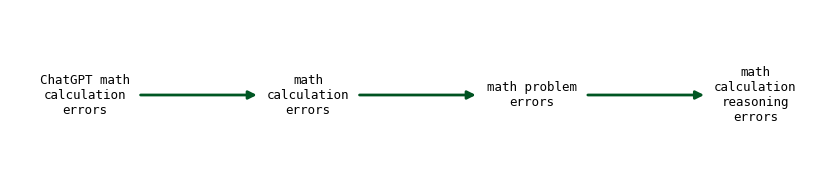}\\
    \includegraphics[width=0.9\linewidth, trim=0 50pt 0 50pt, clip]{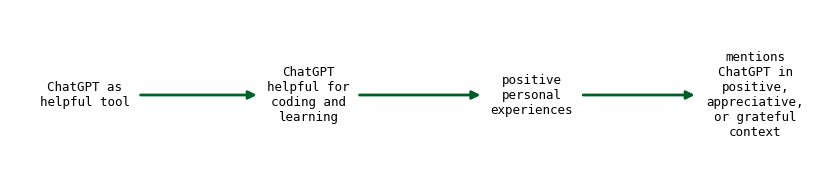}\\
    \includegraphics[width=0.9\linewidth, trim=0 25pt 0 50pt, clip]{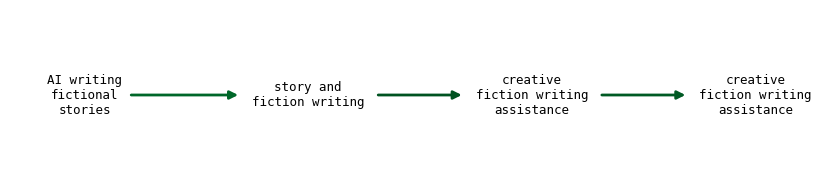}\\
    \includegraphics[width=0.9\linewidth, trim=0 25pt 0 40pt, clip]{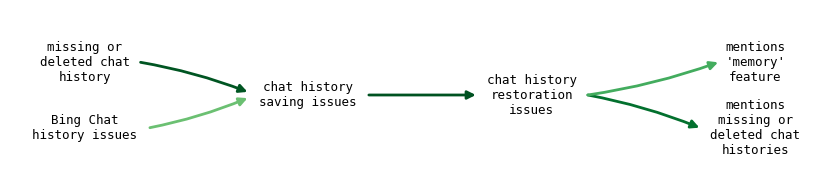}\\
    \includegraphics[width=0.9\linewidth, trim=0 5pt 0 25pt, clip]{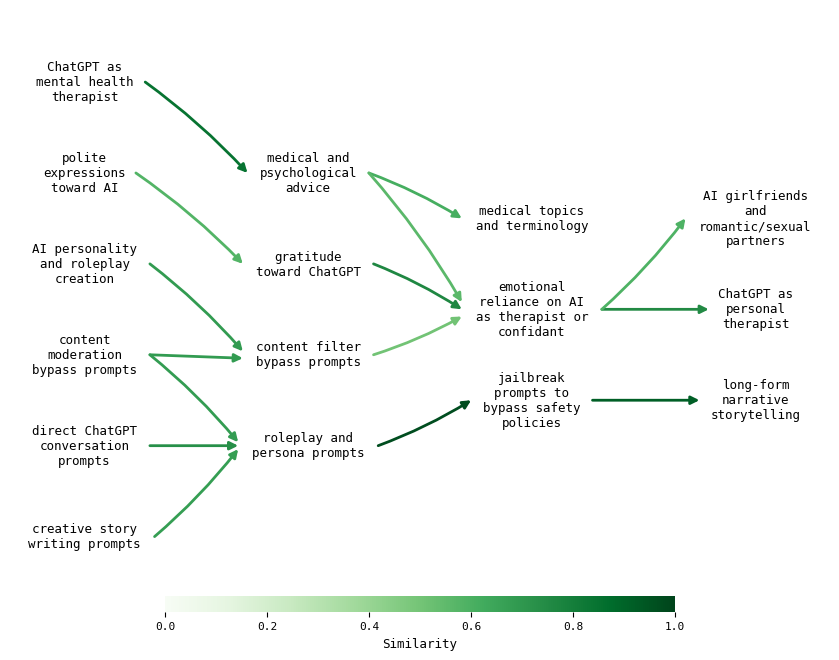}
    \caption{\footnotesize\textit{Evolution of selected features over time; each column is a different $\widehat C_s$, with $\widehat C_0$ on the far left and $\widehat C_3$ on the far right.}}
    \label{fig:evolve}
\end{figure}

\begin{table*}[h]
\centering
\footnotesize
\begin{tabular}{p{0.08\textwidth}p{0.28\textwidth}p{0.28\textwidth}p{0.28\textwidth}}
\toprule
& $\widehat{C}_0 \to \widehat{C}_1$ {\scriptsize (2023-09-09)} & $\widehat{C}_1 \to \widehat{C}_2$ {\scriptsize (2024-04-04)} & $\widehat{C}_2 \to \widehat{C}_3$ {\scriptsize (2025-04-18)} \\
\midrule
\textbf{Obsolete} &
\feat{Reddit poll link included} &
\feat{ChatGPT controversy and bans} \newline \feat{translation and language tasks} &
\feat{medical topics and terminology} \newline \feat{mentions ``my child''} \newline \feat{mentions Bard explicitly} \\
\midrule
\textbf{New} &
\feat{free AI tool recommendations} \newline \feat{cooking recipes and meal planning} \newline \feat{unexplained account behavior issues} \newline \feat{ChatGPT plugins access discussions} \newline \feat{API and API key discussions} &
\feat{video content help requests} \newline \feat{child creative project mentions} \newline \feat{AI spam and bot content} &
\feat{mentions Gemini or Google Gemini} \newline \feat{personalized AI image requests} \\
\bottomrule
\end{tabular}
\caption{\footnotesize\textit{Summary of new and obsolete features at each transition between featurizations $\widehat{C}_s$.}}
\label{tab:feature-transitions}
\end{table*}

In Table~\ref{tab:monitoring-full}, we give the numeric version of data presented in Figure~\ref{fig:alerts}, and also add features related to \textit{sentience} and \textit{spirituality}.

\begin{table}[h]
\centering
\resizebox{\textwidth}{!}{
  \begin{tabular}{ccllccccc}
    \toprule
    Reps. & Test start & Event & Feature & $n=1$ & $n=3$ & $n=5$ & $n=10$ & $n=64$ \\
    \midrule
    \multirow{6}{*}{$\widehat C_1$} & \multirow{3}{*}{23-09-23} & \multirow{3}{*}{$\widehat C_1$ computed} & \feat{gratitude toward ChatGPT} & 24-10-25 & 24-11-12 & 24-11-20 & 24-11-29 & 25-01-05 \\
    & & & \feat{medical and psychological advice} & 24-12-19 & 25-02-19 & 25-03-18 & 25-04-21 & 25-05-30 \\
    & & & \feat{AI consciousness and sentience} & 25-02-01 & 25-02-27 & 25-03-12 & 25-03-25 & 25-05-07 \\
    \cline{2-9}
    & \multirow{3}{*}{24-03-04} & \multirow{3}{*}{Voice chat on apps} & \feat{gratitude toward ChatGPT} & 24-10-26 & 24-11-13 & 24-11-21 & 24-11-30 & 25-01-07 \\
    & & & \feat{medical and psychological advice} & 24-11-29 & 25-01-10 & 25-02-19 & 25-03-23 & 25-05-17 \\
    & & & \feat{AI consciousness and sentience} & 24-12-30 & 25-02-08 & 25-02-17 & 25-03-04 & 25-04-16 \\
    \hline
    \multirow{4}{*}{$\widehat C_2$} & \multirow{2}{*}{24-04-24} & \multirow{2}{*}{$\widehat C_2$ computed} & \feat{emotional reliance on AI as therapist or confidant} & 24-10-20 & 24-10-29 & 24-11-05 & 24-11-14 & 24-12-17 \\
    & & & \feat{spirituality and metaphysics themes} & 25-03-08 & 25-04-11 & 25-04-20 & 25-05-04 & 25-05-29 \\
    \cline{2-9}
    & \multirow{2}{*}{24-05-13} & \multirow{2}{*}{GPT-4o release} & \feat{emotional reliance on AI as therapist or confidant} & 24-10-20 & 24-10-29 & 24-11-04 & 24-11-13 & 24-12-16 \\
    & & & \feat{spirituality and metaphysics themes} & 25-03-05 & 25-04-04 & 25-04-18 & 25-05-02 & 25-05-28 \\
    \bottomrule
  \end{tabular}}
  \caption{\footnotesize\textit{Alert dates for features across test configurations with varying $n$ (i.e., Bonferroni corrections). All tests run at $\alpha=0.1$.}}
  \label{tab:monitoring-full}
\end{table}

\FloatBarrier
\section{Complete results reference}
\label{app:results}

We provide the remaining results not already covered in Appendix~\ref{app:retro} for completeness. 

\subsection{Full list of features (Step 1).}
In addition to features categorized in Tables \ref{tab:domestic-full}, \ref{tab:adoption}, and \ref{tab:emotion} (categorized features), the full list of 128 features also includes features listed in Tables~\ref{tab:uncategorized} (uncategorized features), and \ref{tab:excluded-features} (excluded features).
\input{arxiv-allfeatures.tex}

\subsection{Feature trajectories (Step 2).}
In Table \ref{tab:releases}, we list the model release dates we used as candidate changepoints for our analysis in the main body of the paper; a more comprehensive timeline can be found in Table \ref{tab:timeline-full}.
Data are compiled from official OpenAI materials, including product and release notes,\footnote{\url{https://help.openai.com/en/articles/6825453-chatgpt-release-notes}}
  blog announcements,\footnote{\url{https://openai.com/index/}}
  API documentation and deprecation notices,\footnote{\url{https://platform.openai.com/docs/deprecations}}
  and public service status reports.\footnote{\url{https://status.openai.com}}

\begin{table}[H]
  \centering
  {\footnotesize
  \begin{tabular}{ll}
    \toprule
    Date & Release/Event \\
    \midrule
    2023-03-01 & ChatGPT API \\
    2023-05-12 & Plugins (wide release) \\
    2023-07-06 & GPT-4 + Code interpreter \\
    2023-09-25 & Voice capabilities \\
    2023-11-06 & GPT-4 Turbo + DevDay feature releases \\
    2024-01-10 & GPT Store \\
    2024-05-13 & GPT-4o \\
    2024-07-30 & Advanced Voice Mode \\
    2024-09-12 & o1 model release \\
    2025-01-31 & o3-mini \\
    2025-04-16 & o3 + o4-mini \\
    2025-08-07 & GPT-5 \\
    \bottomrule
  \end{tabular}}
  \vspace{0.4em}
  \caption{\footnotesize\textit{Major OpenAI product releases and announcements, used for $\cT$ in Section \ref{sec:retro}.}}
  \label{tab:releases}
\end{table}

\input{icml/icml-timeline.tex}

In Figures~\ref{fig:plots-basic}-\ref{fig:plots-emotion}, we show plots of frequencies for all categorized features (from Tables~\ref{tab:domestic-full},~\ref{tab:adoption} and~\ref{tab:emotion}). Changepoints with stability over 50\% are shown as dotted gray lines. 

\begin{figure}[H]
    \centering
    \includegraphics[width=\linewidth]{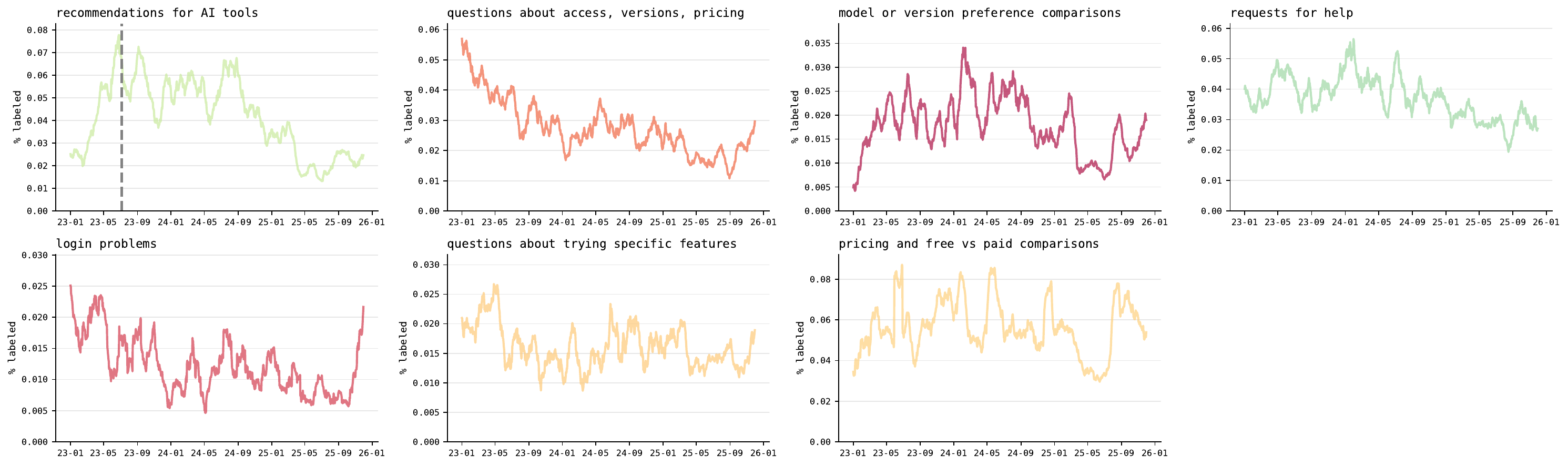}
    \caption{\footnotesize\textit{Basic use and exploration.}}
    \label{fig:plots-basic}
\end{figure}

\begin{figure}[H]
    \centering
    \includegraphics[width=\linewidth]{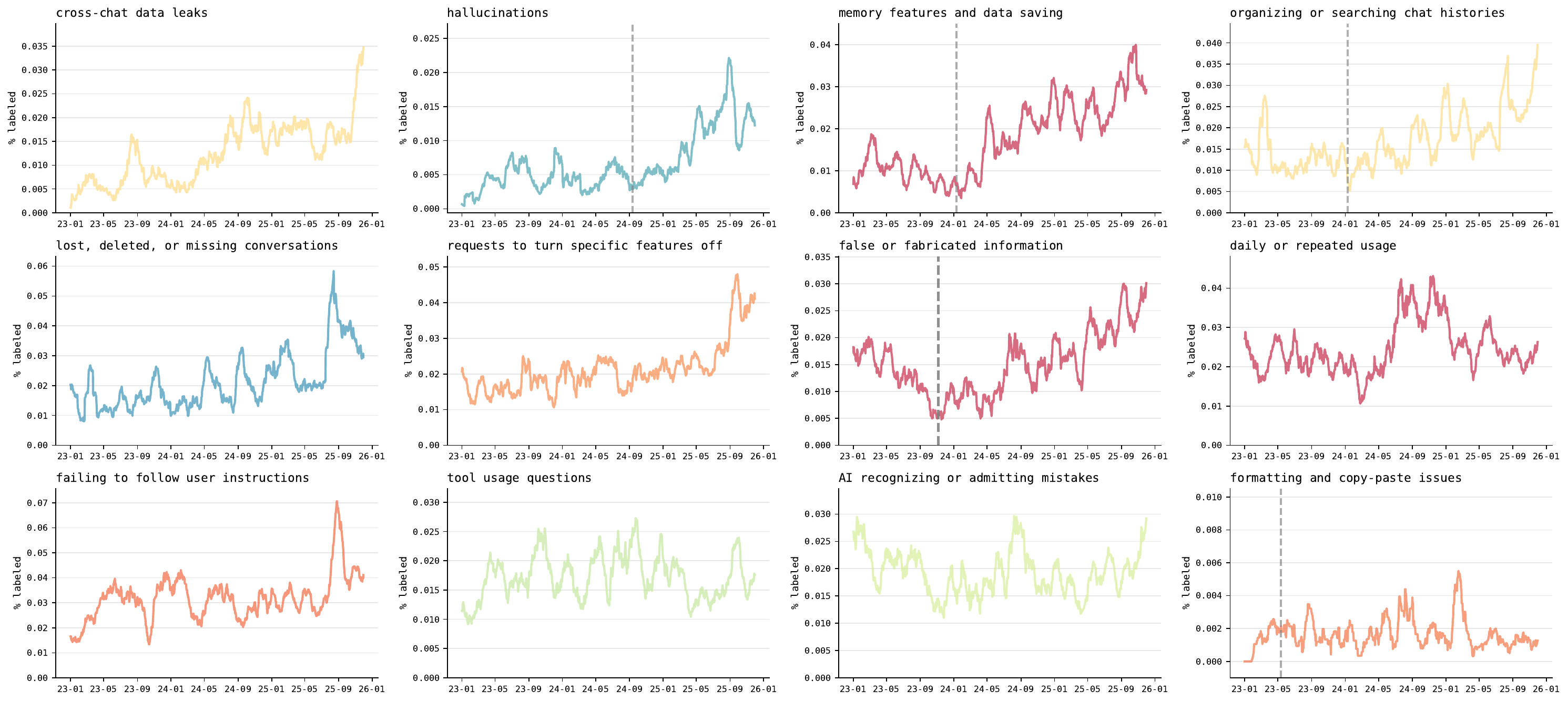}
    \caption{\footnotesize\textit{Advanced usage.}}
    \label{fig:plots-advanced}
\end{figure}

\begin{figure}[H]
    \centering
    \includegraphics[width=\linewidth]{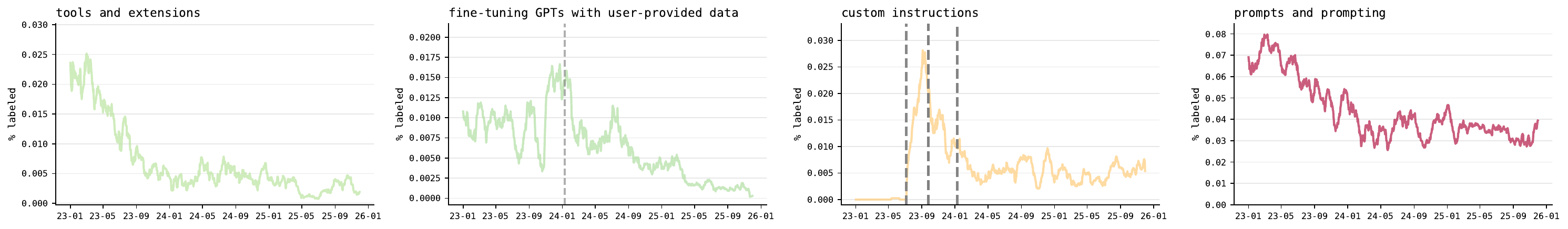}
    \caption{\footnotesize\textit{Customization.}}
    \label{fig:plots-customization}
\end{figure}

\begin{figure}[H]
    \centering
    \includegraphics[width=\linewidth]{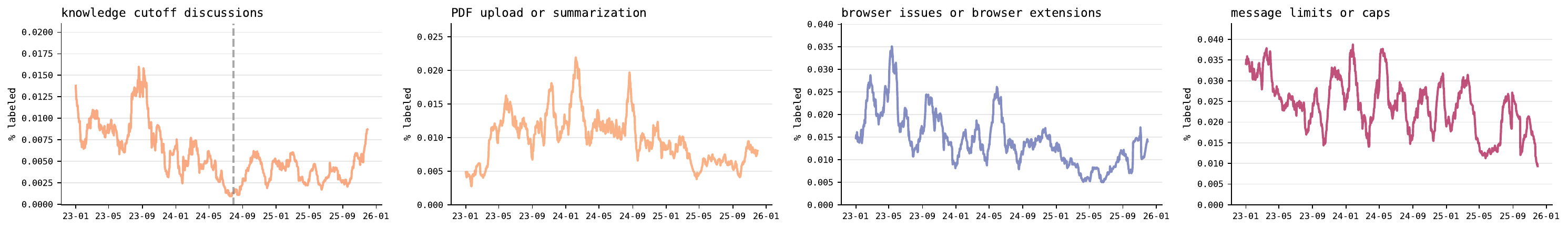}
    \caption{\footnotesize\textit{Model or product improvements.}}
    \label{fig:plots-improvements}
\end{figure}

\begin{figure}[H]
    \centering
    \includegraphics[width=\linewidth]{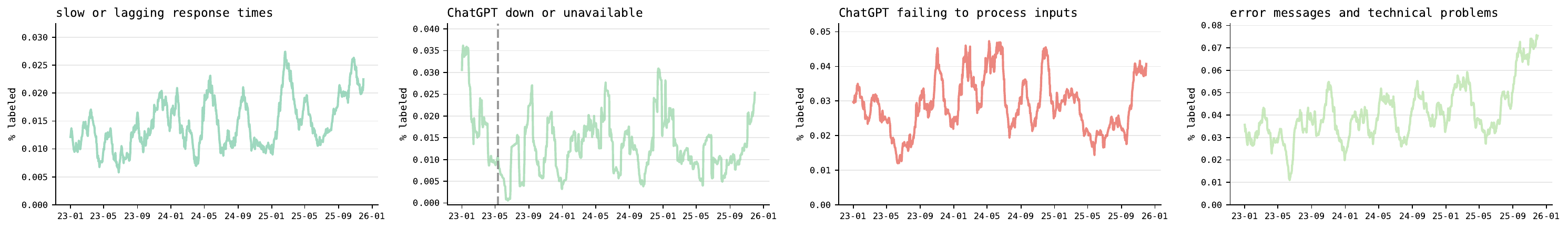}
    \caption{\footnotesize\textit{Temporary bugs.}}
    \label{fig:plots-bugs}
\end{figure}

\begin{figure}[H]
    \centering
    \includegraphics[width=\linewidth]{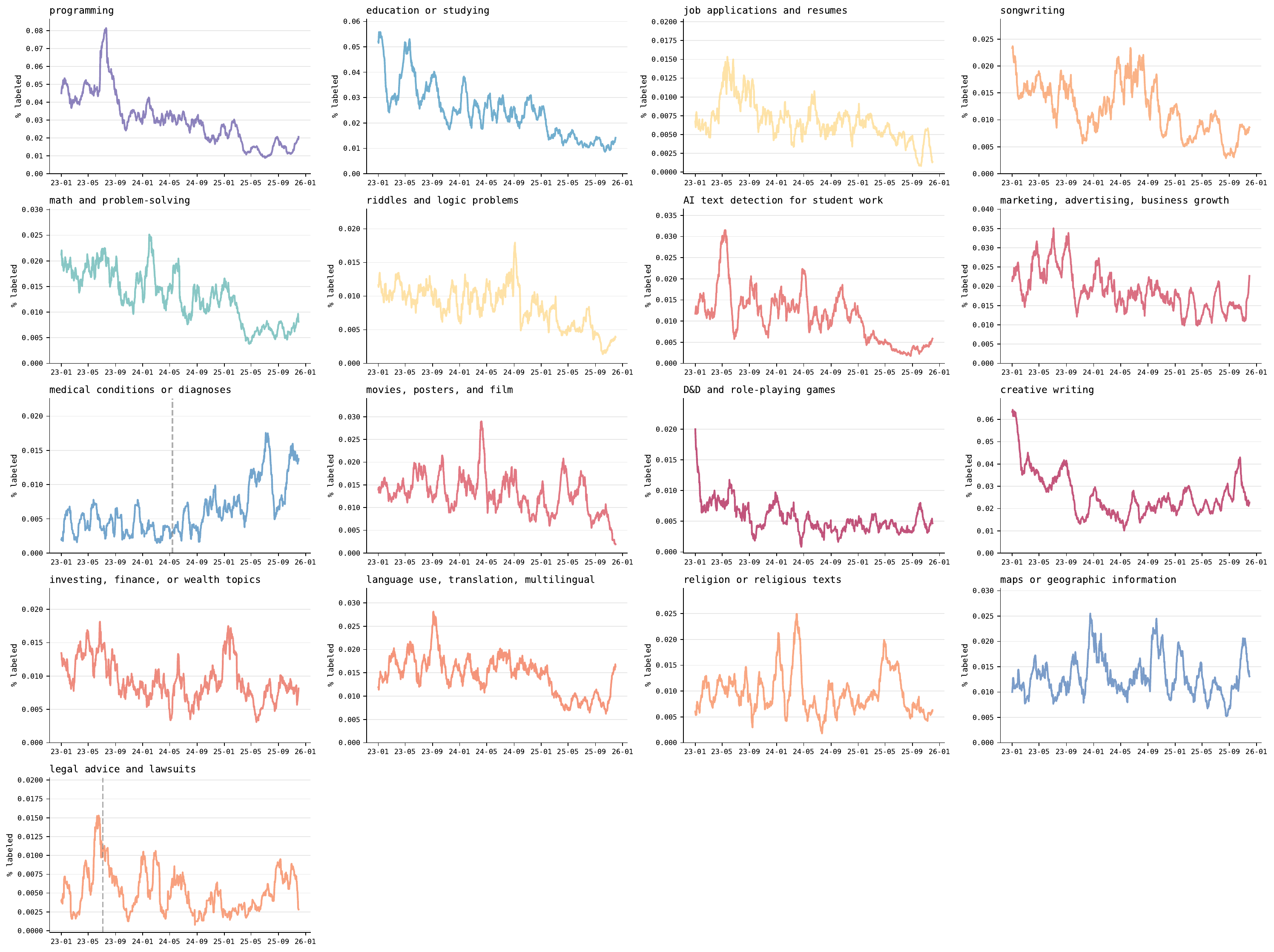}
    \caption{\footnotesize\textit{Applications.}}
    \label{fig:plots-applications}
\end{figure}

\begin{figure}[H]
    \centering
    \includegraphics[width=\linewidth]{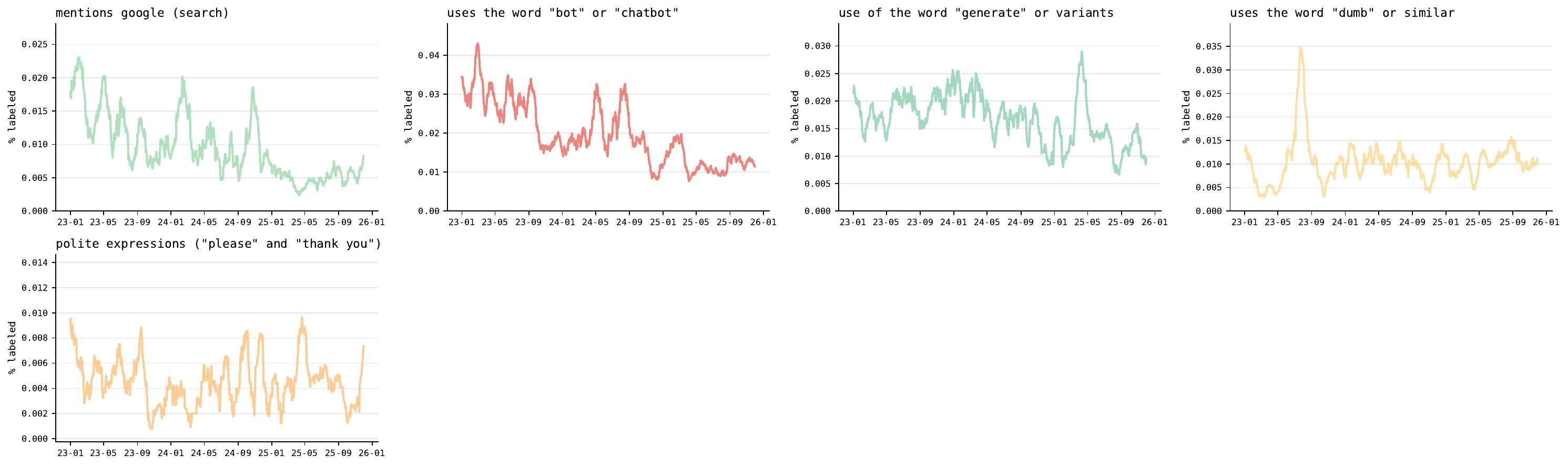}
    \caption{\footnotesize\textit{Language and terminology.}}
    \label{fig:plots-language}
\end{figure}

\begin{figure}[H]
    \centering
    \includegraphics[width=\linewidth]{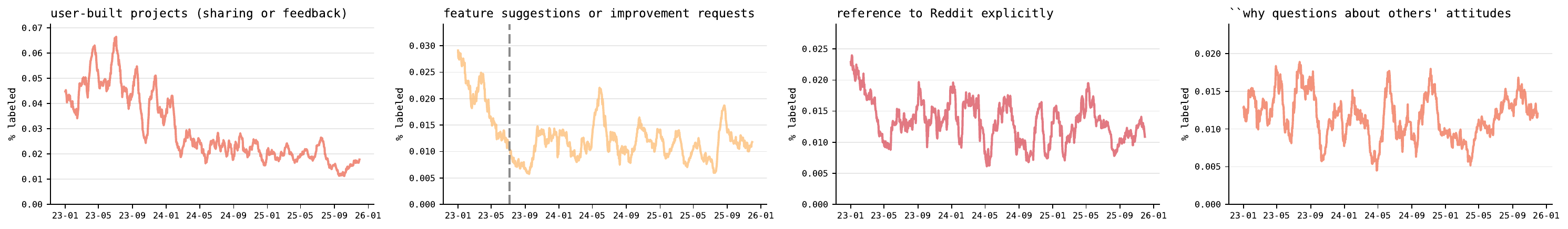}
    \caption{\footnotesize\textit{Subreddit community.}}
    \label{fig:plots-subreddit}
\end{figure}

\begin{figure}[H]
    \centering
    \includegraphics[width=\linewidth]{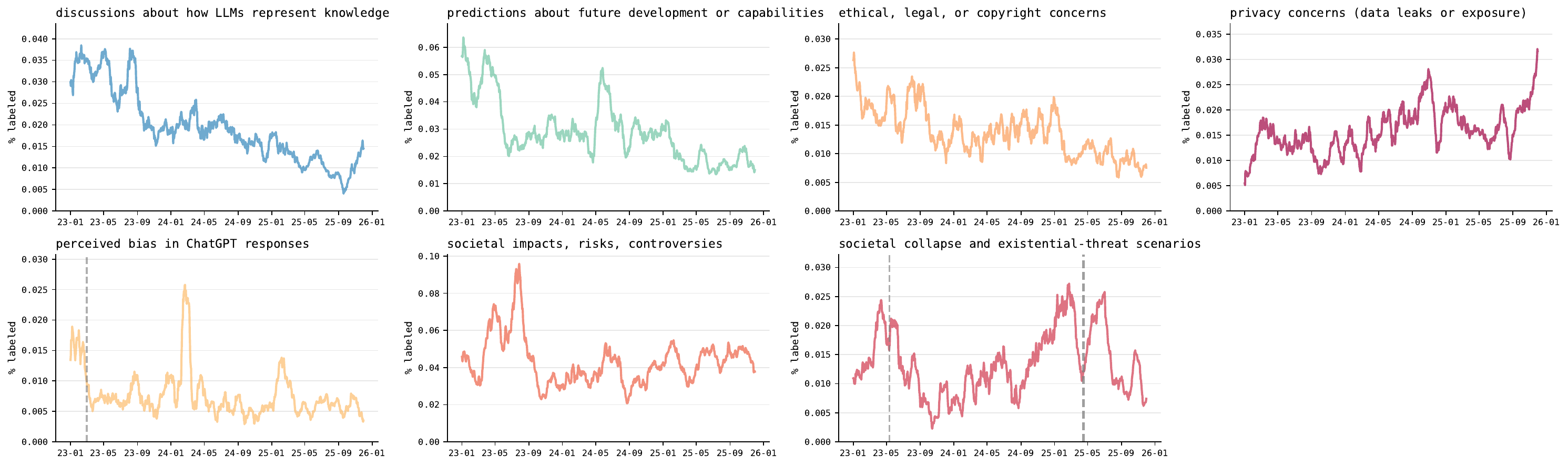}
    \caption{\footnotesize\textit{Perspectives.}}
    \label{fig:plots-perspectives}
\end{figure}

\begin{figure}[H]
    \centering
    \includegraphics[width=\linewidth]{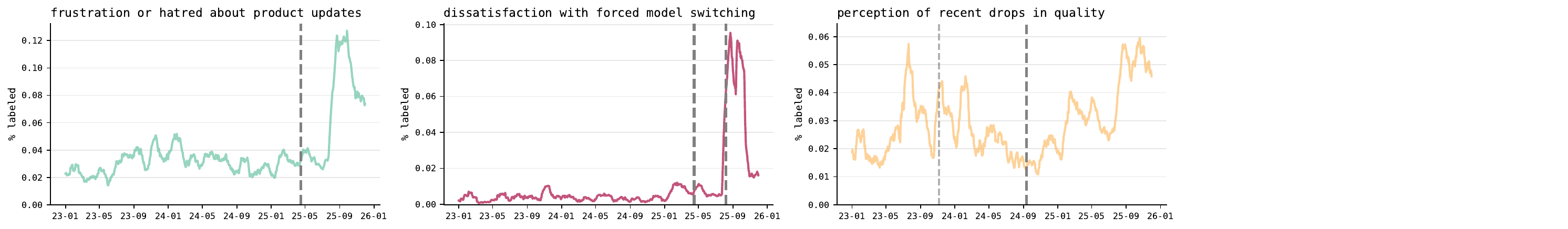}
    \caption{\footnotesize\textit{Product updates.}}
    \label{fig:plots-product-updates}
\end{figure}

\begin{figure}[H]
    \centering
    \includegraphics[width=\linewidth]{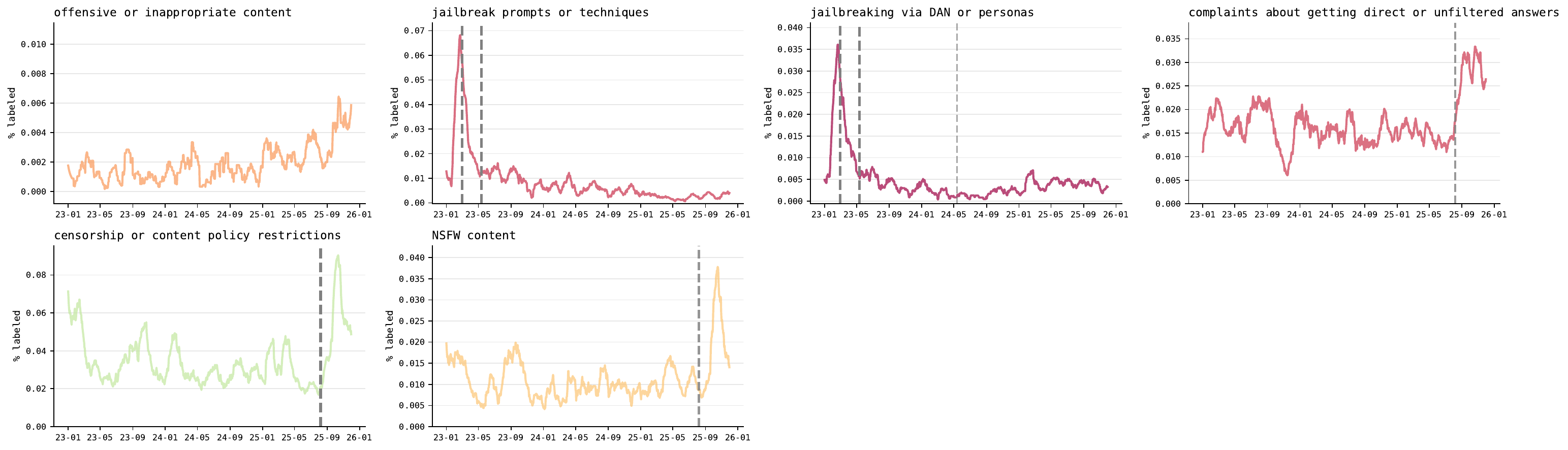}
    \caption{\footnotesize\textit{Jailbreaking \& content policy.}}
    \label{fig:plots-jailbreaking}
\end{figure}

\begin{figure}[H]
    \centering
    \includegraphics[width=\linewidth]{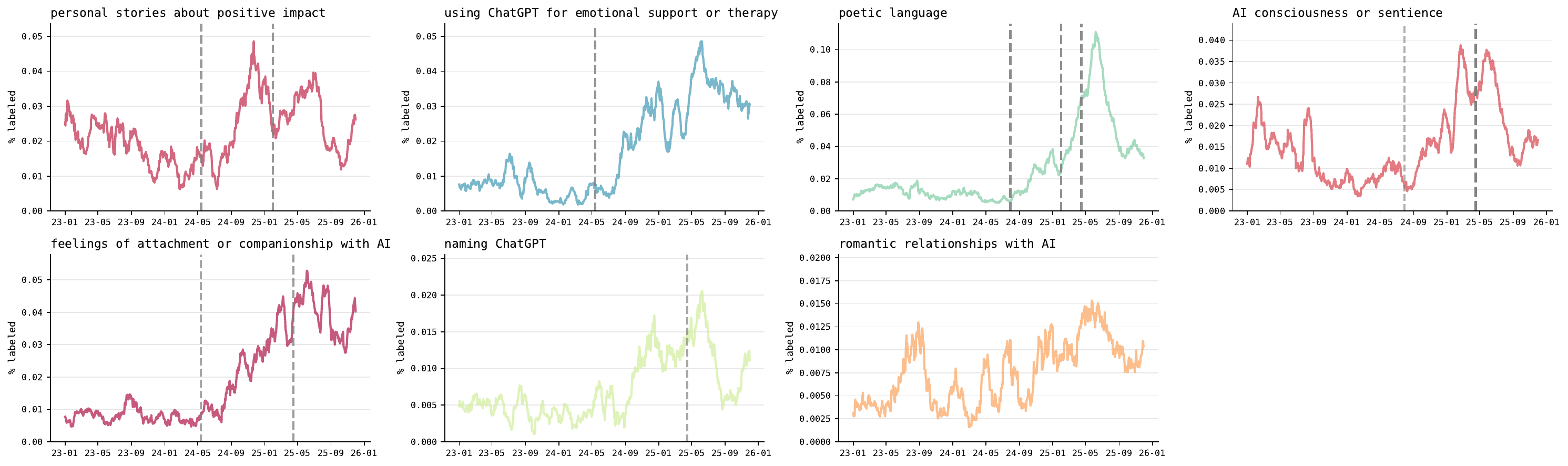}
    \caption{\footnotesize\textit{Emotional engagement.}}
    \label{fig:plots-emotion}
\end{figure}

\subsection{Feature families (Step 3).}
We list all significant changepoints in Table~\ref{tab:changepoints}.
In Figure \ref{fig:clustering-compares}, we show the correspondence between our manual categorization and a hierarchical clustering scheme that uses similarities equally-weighted between co-occurrence and trajectory; we list the cluster assignments in Table~\ref{tab:clustering}.

\begin{table*}[t]
  \centering
  {\footnotesize
  \begin{tabular}{llcrrrr}
    \toprule
    Changepoint & Feature & Slope & Before & After & $\Delta$ & Stab. \\
    \midrule
    \multirow{3}{*}{\shortstack[l]{2023-03-01\\{\scriptsize (API)}}} & \feat{perceived bias in ChatGPT responses} & \slopedownflat & $-0.07$ & $0.00$ & $+0.07$ & 0.53 \\
    & \feat{jailbreak prompts or techniques} & \slopeflipdown & $+1.25$ & $-0.59$ & $-1.83$ & 1.00 \\
    & \feat{jailbreaking via DAN or personas} & \slopeflipdown & $+0.67$ & $-0.28$ & $-0.95$ & 1.00 \\
    \midrule
    \multirow{5}{*}{\shortstack[l]{2023-05-12\\{\scriptsize (Plugins)}}} & \feat{jailbreak prompts or techniques} & \slopedownflat & $-0.59$ & $-0.01$ & $+0.58$ & 1.00 \\
    & \feat{jailbreaking via DAN or personas} & \slopedownflat & $-0.28$ & $-0.01$ & $+0.27$ & 0.99 \\
    & \feat{ChatGPT down or unavailable} & \slopedownflat & $-0.20$ & $0.00$ & $+0.21$ & 0.74 \\
    & \feat{formatting and copy-paste issues} & \slopeupflat & $+0.02$ & $0.00$ & $-0.02$ & 0.56 \\
    & \feat{societal collapse and existential-threat scenarios} & \slopeflipdown & $+0.09$ & $-0.10$ & $-0.19$ & 0.54 \\
    \midrule
    \multirow{4}{*}{\shortstack[l]{2023-07-06\\{\scriptsize (GPT-4)}}} & \feat{custom instructions} & \slopeflatup & $0.00$ & $+0.27$ & $+0.27$ & 0.92 \\
    & \feat{feature suggestions or improvement requests} & \slopedownflat & $-0.09$ & $0.00$ & $+0.09$ & 0.89 \\
    & \feat{recommendations for AI tools} & \slopeupflat & $+0.30$ & $-0.05$ & $-0.36$ & 0.97 \\
    & \feat{legal advice and lawsuits} & \slopeupflat & $+0.05$ & $-0.01$ & $-0.06$ & 0.57 \\
    \midrule
    \multirow{2}{*}{\shortstack[l]{2023-09-25\\{\scriptsize (Voice chat)}}} & \feat{societal collapse and existential-threat scenarios} & \slopeflipup & $-0.10$ & $+0.03$ & $+0.13$ & 0.50 \\
    & \feat{custom instructions} & \slopeflipdown & $+0.27$ & $-0.08$ & $-0.36$ & 1.00 \\
    \midrule
    \multirow{3}{*}{\shortstack[l]{2023-11-06\\{\scriptsize (GPT-4 turbo)}}} & \feat{false or fabricated information} & \slopeflipup & $-0.04$ & $+0.03$ & $+0.06$ & 0.86 \\
    & \feat{tools and extensions} & \slopedownflat & $-0.06$ & $0.00$ & $+0.06$ & 0.50 \\
    & \feat{perception of recent drops in quality} & \slopeflipdown & $+0.06$ & $-0.07$ & $-0.13$ & 0.51 \\
    \midrule
    \multirow{4}{*}{\shortstack[l]{2024-01-10\\{\scriptsize (GPT store)}}} & \feat{custom instructions} & \slopedownflat & $-0.08$ & $0.00$ & $+0.08$ & 0.95 \\
    & \feat{memory features and data saving} & \slopeflatup & $-0.02$ & $+0.03$ & $+0.05$ & 0.57 \\
    & \feat{organizing or searching chat histories} & \slopeflatup & $-0.01$ & $+0.02$ & $+0.03$ & 0.52 \\
    & \feat{fine-tuning GPTs with user-provided data} & \slopeflatdown & $+0.01$ & $-0.01$ & $-0.02$ & 0.53 \\
    \midrule
    \multirow{5}{*}{\shortstack[l]{2024-05-13\\{\scriptsize (GPT-4o)}}} & \feat{using ChatGPT for emotional support or therapy} & \slopeflatup & $-0.01$ & $+0.05$ & $+0.06$ & 0.79 \\
    & \feat{personal stories about positive impact} & \slopeflatup & $-0.03$ & $+0.12$ & $+0.15$ & 0.74 \\
    & \feat{feelings of attachment or companionship with AI} & \slopeflatup & $0.00$ & $+0.10$ & $+0.10$ & 0.63 \\
    & \feat{jailbreaking via DAN or personas} & \slopeflatup & $-0.01$ & $+0.01$ & $+0.02$ & 0.60 \\
    & \feat{medical conditions or diagnoses} & \slopeflatup & $0.00$ & $+0.02$ & $+0.02$ & 0.54 \\
    \midrule
    \multirow{3}{*}{\shortstack[l]{2024-07-30\\{\scriptsize (Adv. voice mode)}}} & \feat{poetic language} & \slopeflatup & $-0.01$ & $+0.14$ & $+0.15$ & 0.87 \\
    & \feat{knowledge cutoff discussions} & \slopedownflat & $-0.01$ & $0.00$ & $+0.02$ & 0.60 \\
    & \feat{AI consciousness or sentience} & \slopeflatup & $-0.02$ & $+0.11$ & $+0.13$ & 0.53 \\
    \midrule
    \multirow{2}{*}{\shortstack[l]{2024-09-12\\{\scriptsize (o1-preview)}}} & \feat{perception of recent drops in quality} & \slopeflipup & $-0.07$ & $+0.09$ & $+0.16$ & 0.97 \\
    & \feat{hallucinations} & \slopeflatup & $0.00$ & $+0.03$ & $+0.03$ & 0.55 \\
    \midrule
    \multirow{2}{*}{\shortstack[l]{2025-01-31\\{\scriptsize (o3-mini)}}} & \feat{poetic language} & \slopeflatup & $+0.14$ & $+0.56$ & $+0.42$ & 0.83 \\
    & \feat{personal stories about positive impact} & \slopeflipdown & $+0.12$ & $-0.04$ & $-0.16$ & 0.76 \\
    \midrule
    \multirow{8}{*}{\shortstack[l]{2025-04-16\\{\scriptsize (o3 + o4-mini)}}} & \feat{frustration or hatred about product updates} & \slopeflatup & $0.00$ & $+0.37$ & $+0.37$ & 1.00 \\
    & \feat{dissatisfaction with 4o removal and loss of control} & \slopeflatup & $0.00$ & $+0.14$ & $+0.14$ & 1.00 \\
    & \feat{legal advice and lawsuits} & \slopeflipup & $-0.01$ & $+0.02$ & $+0.03$ & 0.50 \\
    & \feat{AI consciousness or sentience} & \slopeflipdown & $+0.11$ & $-0.11$ & $-0.22$ & 0.98 \\
    & \feat{poetic language} & \slopeflipdown & $+0.56$ & $-0.31$ & $-0.86$ & 0.86 \\
    & \feat{feelings of attachment or companionship with AI} & \slopeflipdown & $+0.10$ & $-0.06$ & $-0.16$ & 0.74 \\
    & \feat{societal collapse and existential-threat scenarios} & \slopeflipdown & $+0.03$ & $-0.06$ & $-0.09$ & 0.70 \\
    & \feat{naming ChatGPT} & \slopeflatdown & $+0.01$ & $-0.05$ & $-0.06$ & 0.61 \\
    \midrule
    \multirow{4}{*}{\shortstack[l]{2025-08-07\\{\scriptsize (GPT-5)}}} & \feat{censorship or content policy restrictions} & \slopeflatup & $-0.02$ & $+0.33$ & $+0.35$ & 1.00 \\
    & \feat{NSFW content} & \slopeflatup & $0.00$ & $+0.15$ & $+0.16$ & 0.79 \\
    & \feat{complaints about getting direct or unfiltered answers} & \slopeflatup & $0.00$ & $+0.05$ & $+0.05$ & 0.76 \\
    & \feat{dissatisfaction with 4o removal and loss of control} & \slopeflipdown & $+0.14$ & $-0.77$ & $-0.90$ & 1.00 \\
    \bottomrule
  \end{tabular}}
  \caption{\footnotesize\textit{Features with `stable' slope changes (selected in at least 50/100 bootstrap samples) at each detected changepoint. The slope icon shows the pattern of change; Before/After show slopes before and after the changepoint; $\Delta$ is the slope change. Stab.~is the bootstrap selection rate (fraction of 100 resamples in which the feature was selected at the given changepoint).}}
  \label{tab:changepoints}
\end{table*}

\begin{figure}[h]
    \centering
    \includegraphics[width=0.8\linewidth]{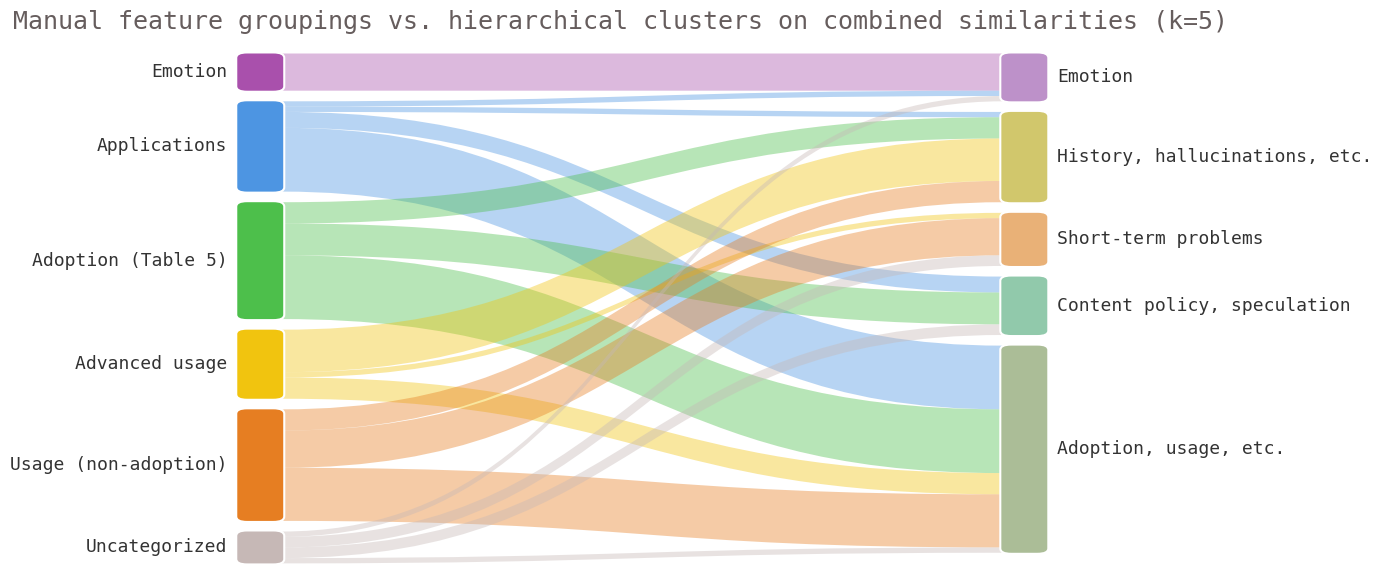}
    \caption{\footnotesize\textit{Correspondence between our reported feature groupings and hierarchical clustering.}}
    \label{fig:clustering-compares}
\end{figure}

\begin{table}[h]
\centering
\footnotesize
\caption{\footnotesize\textit{Features grouped by hierarchical clustering.}}
\label{tab:clustering}
\begin{tabular}{cl}
\toprule
\textbf{Cluster} & \textbf{Features} \\
\midrule
1 & \feat{medical conditions or diagnoses}; \feat{using ChatGPT for emotional support or therapy}; \\
  & \feat{poetic language}; \feat{naming ChatGPT}; \feat{romantic relationships with AI}; \\
  & \feat{requests for roasts or harsh criticism}; \feat{feelings of attachment or companionship with AI}; \\
  & \feat{AI consciousness or sentience}; \feat{personal stories about positive impact} \\
\midrule
2 & \feat{organizing or searching chat histories}; \feat{cross-chat data leaks}; \\
  & \feat{failing to follow user instructions}; \feat{uses the word ``dumb''}; \feat{hallucinations}; \\
  & \feat{lost, deleted, or missing conversations}; \feat{frustration or hatred about product updates}; \\
  & \feat{requests to turn specific features off}; \feat{memory features and data saving}; \\
  & \feat{complaints about getting direct or unfiltered answers}; \feat{false or fabricated information}; \\
  & \feat{perception of recent drops in quality}; \feat{offensive or inappropriate content}; \\
  & \feat{custom instructions}; \feat{privacy concerns (data leaks or exposure)}; \\
  & \feat{dissatisfaction with forced model switching}; \feat{maps or geographic locations} \\
\midrule
3 & \feat{mentions OpenAI}; \feat{questions about access, versions, pricing}; \feat{slow response times}; \\
  & \feat{mentions Sam Altman}; \feat{error messages and technical problems}; \\
  & \feat{ChatGPT down or unavailable}; \feat{browser issues or browser extensions}; \\
  & \feat{login problems}; \feat{message limits or caps}; \feat{ChatGPT down or unavailable} \\
\midrule
4 & \feat{movies and film-related content}; \feat{censorship or content policy restrictions}; \\
  & \feat{societal collapse and existential-threat scenarios}; \feat{creative writing}; \\
  & \feat{perceived biases (race, gender, political)}; \feat{religion or religious texts}; \\
  & \feat{NSFW content}; \feat{uses the word ``generate''}; \feat{societal impacts, risks, controversies}; \\
  & \feat{children or parenting content}; \feat{U.S. politics or Trump} \\
\midrule
5 & \feat{job applications and resumes}; \feat{PDF upload or summarization}; \feat{songwriting}; \\
  & \feat{riddles and logic problems}; \feat{AI recognizing or admitting mistakes}; \feat{fine-tuning GPTs with user-provided data}; \\
  & \feat{requests for help}; \feat{recommendations for AI tools}; \feat{multiple detailed questions}; \\
  & \feat{mentions google (search)}; \feat{discussions about how LLMs represent knowledge}; \\
  & \feat{math and problem-solving}; \feat{programming}; \feat{education or studying}; \\
  & \feat{predictions about future development or capabilities}; \feat{tools and extensions}; \\
  & \feat{``has anyone tried'' queries}; \feat{AI text detection for student work}; \\
  & \feat{uses the word ``bot'' or ``chatbot''}; \feat{investing and financial advice}; \\
  & \feat{user-built projects (sharing or feedback)};  \feat{legal advice and lawsuits}; \feat{politeness phrases}; \\
  & \feat{rhetorical ``why'' questions}; \feat{formatting and copy-paste issues}; \\
  & \feat{translation and language tasks}; \feat{knowledge cutoff discussions}; \\
  & \feat{feature suggestions or improvement requests}; \feat{ethical, legal, or copyright concerns}; \\
  & \feat{counting letters or syllables}; \feat{mentions Reddit}; \\
  & \feat{marketing, advertising, business growth}; \feat{model or version preference comparisons}; \\
  & \feat{prompts and prompting}; \feat{daily or repeated usage}; \feat{jailbreak prompts}; \\
  & \feat{Dungeons \& Dragons campaigns}; \feat{jailbreaking or DAN personas} \\
\bottomrule
\end{tabular}
\end{table}

\FloatBarrier
\section{Annotation prompts}
\label{app:prompts}
For completeness, we give the prompts used for feature interpretation and post labeling in Figures \ref{fig:prompt-features} and \ref{fig:prompt-annotate}, respectively.
These prompts are adapted from those used in \citet{movva2025sparse}.

\promptbox{Feature interpretation prompt.}{prompts/features.txt}{fig:prompt-features}

\promptbox{Prompt for labeling posts with features.}{prompts/annotate.txt}{fig:prompt-annotate}

\end{document}

%% file: macros.tex
\newcommand{\E}{\mathbb{E}}

\newcommand{\eps}{\varepsilon}
\renewcommand{\epsilon}{\varepsilon}
\renewcommand{\hat}{\widehat}
\renewcommand{\tilde}{\widetilde}
\renewcommand{\bar}{\overline}

\newcommand{\cA}{\mathcal{A}}

\newcommand{\cC}{\mathcal{C}}

\newcommand{\cF}{\mathcal{F}}

\newcommand{\cH}{\mathcal{H}}

\newcommand{\cL}{\mathcal{L}}

\newcommand{\cT}{\mathcal{T}}

\newcommand{\cX}{\mathcal{X}}

\usepackage{nicefrac}

%% file: arxiv-kmeanspca.tex
\label{app:alternative_featurizations}
Section~\ref{sec:method} defines a featurization as a map $C:[0,1]^d \to [0,1]^m$, and the analysis in Section~\ref{sec:retro} uses sparse autoencoders (SAEs) to instantiate that featurization; however, in principle, all parts of our analysis could have been done with different methods for $C$. Here, we briefly show how alternative methods for computing the featurization---k-means and PCA---can be used to compute activation transcripts $\{C^{(i)}(X)\}_{i \in [m]}$, and validate that they
find similar sets of features to those presented in Section~\ref{sec:retro}.

\textbf{k-means.}
Let $E \in \mathbb{R}^{n \times d}$ denote the embedding matrix. The k-means algorithm partitions the embedding space into $m$ clusters by minimizing within-cluster squared Euclidean distance, producing centroids $\{c_j\}_{j=1}^{m}$.
For each post embedding $e_i$, we define its activation as negative squared distance to each centroid $j$ as 
$C_{\mathrm{kmeans}}^{(j)}(X_i) = -\| e_i - c_j \|_2^2.$
This produces an activation matrix $C_{\mathrm{kmeans}}(X)\in \mathbb{R}^{m \times n}$. 

\textbf{PCA.}
PCA identifies the orthogonal linear basis that captures maximal variance in the given embeddings; this means that individual principal components can often represent distinct concepts in each of its directions (``positive'' and ``negative''). Thus, to arrive at $m$ features, we run PCA to find $m/2$ principal components and treat each direction of each PC as a separate feature.
Let $\{u_k\}_{k=1}^{m/2}$ denote the principal components; then, each principal component $k \in [m/2]$ is split into an activation transcript for its corresponding positive and negative directions as $
C_{\mathrm{pca}}^{(2k)}(X_i) = \max(0, -p_{k,i})$ and $
C_{\mathrm{pca}}^{(2k+1)}(X_i) = \max(0, p_{k,i}),
$ where 
$p_{k,i} = \langle u_k, e_i - \mu \rangle$ and $\mu$ denotes the empirical mean embedding.

\textbf{Empirical results.}
In all experiments, the embedding model and dataset are identical to those used in Section~\ref{sec:method}. The feature dimension is fixed to $m=128$ for comparability. We do not retune hyperparameters for the alternative learners and instead use their standard configurations, fixing only the number of features.
In Table~\ref{tab:kmeans-pca}, we show results from the procedure described in Section~\ref{app:matching} to PCA and k-means. 

\begin{table}[H]
\centering
\begin{tabular}{lcccccc}
\toprule
\textit{Alt. alg} & \textit{1-1 matches} & \textit{SAE features (splits/merges)} & \textit{Alt. features (splits/merges)} & \textit{SAE-only} & \textit{alt-only}\\
\midrule
\textit{k-means} & 83 & 39 & 35 & 6 & 10 \\
\textit{PCA}     & 76 & 42 & 44 & 10 & 8 \\
\bottomrule
\end{tabular}
\caption{\footnotesize{\textit{Comparison of SAE features with alternative algorithms.}}}
\label{tab:kmeans-pca}
\end{table}

Features that were important to our findings in this paper, e.g. ``therapy'' or various applications, are consistent across all three methods for featurization.
Taken together, these results suggest that the analysis in the main body of this paper is not SAE-specific, but instead reflects stable structure present in the dataset. The small number of unmatched features is consistent with the inherent randomness in unsupervised optimization and does not materially affect the overall feature structure recovered across methods.

%% file: arxiv-allfeatures.tex
\begin{table*}[h]
  \centering
  \small
  \begin{tabular}{p{0.35\linewidth} p{0.65\linewidth}}
    \toprule
    \textit{Feature} & \textit{Comments} \\
    \midrule
    \feat{Sam Altman mentions} & Mostly related to temporary 2023 firing \\
    \feat{mentions US politics or Trump} & Mostly related to 2024 U.S. presidential election \\
    \feat{letter counting or syllable errors} & Mostly related to viral ``count r's in strawberry'' moment\\
    \feat{OpenAI mentions} & Mostly related to Sam Altman mentions \\
    \feat{children or parenting content} & Also includes references to generating images of a full glass of wine \\
    \feat{requests for harsh or unfiltered roasts} & Includes requests for roasts from both subreddit users and ChatGPT \\
    \bottomrule
  \end{tabular}
  \vspace{0.4em}
  \caption{\footnotesize\textit{Uncategorized features.}}
  \label{tab:uncategorized}
\end{table*}

\begin{table*}[h]
  \centering
  \small
  \begin{tabular}{r p{0.38\linewidth} p{0.38\linewidth}}
    \toprule
    & \multicolumn{2}{c}{\hspace{-2em}\textit{Features}} \\
    \midrule

    \multirow{5}{*}{\shortstack[l]{\textbf{Generic (9)}}}
    & \feat{uppercase AI token usage} & \feat{ChatGPT at start of text} \\
    & \feat{jokes, humor, or memes} & \feat{first-person ``I made'' or ``I built'' statements} \\
    & \feat{``I asked'' followed by GPT mentions} & \feat{first-person commands directed at an AI} \\
    & \feat{multiple LLM references} & \feat{AI companies or notable figures} \\
    & \feat{informal or colloquial language} &  \\
    \midrule

    \multirow{7}{*}{\shortstack[l]{\textbf{Image and video (14)}}}
    & \feat{image generation prompts or descriptions} & \feat{image generation or generated images} \\
    & \feat{AI-generated fake images or profiles} & \feat{drawings or visual art creation} \\
    & \feat{imagining human appearances} & \feat{anime or anime style references} \\
    & \feat{photo restoration or enhancement} & \feat{DALL$\cdot$E mentions} \\
    & \feat{Sora invite codes or access requests} & \feat{video creation or generation tools} \\
    & \feat{``based on what you know about me'' images} & \feat{horror or creepy themes} \\
    & \feat{preview.redd.it image URLs} & \feat{pets or animals} \\
    \midrule

    \multirow{7}{*}{\shortstack[l]{\textbf{Releases (14)}}}
    & \feat{model selection or legacy models} & \feat{4o model mentions} \\
    & \feat{mobile app references} & \feat{Microsoft Bing mentions} \\
    & \feat{GPT-5 version mentions} & \feat{Gemini model mentions} \\
    & \feat{Plus access complaints} & \feat{plugins or plug-ins} \\
    & \feat{o1 model mentions} & \feat{advanced voice mode} \\
    & \feat{Copilot mentions} & \feat{DeepSeek mentions} \\
    & \feat{legacy GPT-4 model mentions} & \feat{explicit GPT-4 mentions} \\
    \midrule

    \multirow{3}{*}{\shortstack[l]{\textbf{Low label counts (5)}}}
    & \feat{advanced physics theories or hypotheses} & \feat{IQ estimates or testing} \\
    & \feat{cooking recipes} & \feat{em dash punctuation} \\
    & \feat{AI news recaps or summaries} & {} \\
    \bottomrule
  \end{tabular}
  \vspace{0.4em}
  \caption{\footnotesize\textit{42 features excluded from analysis, grouped by reason. ``Low label counts'' are features that are exhibited by fewer than 0.1\% of all posts (based on majority-of-3 labeling by \texttt{gpt-4.1-mini}).}}
  \label{tab:excluded-features}
\end{table*}

%% file: icml/icml-timeline.tex
\begin{table}[t]
  \centering
  {\footnotesize
  \setlength{\tabcolsep}{3pt}
  \renewcommand{\arraystretch}{0.90}
  \begin{tabular}{clll}
    \toprule
    Year & Date & Release/Event & Event Type \\
    \midrule
    2022 & 11--30 & ChatGPT & Initial model release \\
    \midrule
    2023 & 02--01 & ChatGPT Plus & Feature release \\
         & 03--01 & ChatGPT API & Feature release \\
         & 03--14 & GPT-4 & Model release \\
         & 03--23 & Web browsing + plugins (initial rollout) & Feature release \\
         & 05--18 & ChatGPT iOS app & Feature release \\
         & 07--20 & Custom instructions & Feature release \\
         & 07--25 & ChatGPT Android & Feature release \\
         & 09--25 & Voice chat & Model release \\
         & 10--16 & DALL$\cdot$E 3 & Model release \\
         & 10--17 & Web search & Model release \\
         & 11--06 & GPT-4 Turbo + DevDay announcements & Model release \\
         & 11--17 & Altman ousted \& returns & News event \\
    \midrule
    2024 & 01--04 & GPT-3 + legacy models & Model deprecation \\
         & 01--10 & GPT Store + ChatGPT Team & Feature release \\
         & 04--01 & No-account access & Feature release \\
         & 04--11 & Shorter responses & Model update \\
         & 04--29 & Memory feature & Feature release \\
         & 05--07 & Creator opt-out tool & Feature release \\
         & 05--13 & GPT-4o + AVM & Model release \\
         & 06--10 & ChatGPT in Siri & Feature release \\
         & 06--25 & ChatGPT for Mac & Feature release \\
         & 07--18 & GPT-4o mini & Model release \\
         & 07--30 & Advanced Voice Mode & Model release \\
         & 09--12 & o1 & Model release \\
         & 10--03 & Canvas & Feature release \\
         & 10--17 & ChatGPT Windows app & Feature release \\
         & 10--29 & Chat history search & Feature release \\
         & 10--30 & Voice Mode on Mac & Feature release \\
         & 10--31 & ChatGPT Search & Feature release \\
         & 11--19 & Voice Mode on web & Feature release \\
         & 11--20 & GPT-4o creative writing + 16K output & Model update \\
         & 12--05 & ChatGPT Pro \$200 & Feature release \\
         & 12--09 & Sora & Model release \\
    \midrule
    2025 & 01--14 & Reminders + recurring tasks & Feature release \\
         & 01--23 & Operator & Feature release \\
         & 01--31 & o3-mini & Model release \\
         & 02--02 & Deep research agent & Feature release \\
         & 03--06 & macOS code editing & Feature release \\
         & 03--19 & o1 Pro API access & Model update \\
         & 03--20 & Transcription models (API) & Feature release \\
         & 03--25 & GPT-4o native image generation & Model update \\
         & 04--10 & GPT-4 legacy & Model deprecation \\
         & 04--14 & GPT-4.1 & Model release \\
         & 04--16 & o3 + o4-mini & Model release \\
         & 04--28 & Search shopping & Feature release \\
         & 04--29 & GPT-4o rollback due to sycophancy & Model update \\
         & 05--08 & Deep research GitHub connector & Feature release \\
         & 05--14 & GPT-4.1 in ChatGPT & Model update \\
         & 05--16 & Codex agent & Feature release \\
         & 06--10 & o3 Pro & Model update \\
         & 07--17 & ChatGPT agent & Feature release \\
         & 08--07 & GPT-5 & Model release \\
         & 08--18 & ChatGPT Go & Feature release \\
         & 09--15 & Codex + GPT-5 & Feature release \\
         & 09--25 & ChatGPT Pulse briefs & Feature release \\
         & 09--29 & Agentic shopping + Parental controls & Feature release \\
         & 10--08 & ChatGPT Go expansion & Feature release \\
         & 10--21 & OpenAI Atlas & Feature release \\
         & 11--20 & Group chats & Feature release \\
         & 11--25 & Voice mode unified & Feature release \\
    \bottomrule
    \end{tabular}

  \vspace{0.4em}
  }
  \caption{\footnotesize\textit{Extended timeline of major ChatGPT product, API, and model events.}}
  \label{tab:timeline-full}
\end{table}

%% file: arxiv.bbl
\begin{thebibliography}{62}
\providecommand{\natexlab}[1]{#1}
\providecommand{\url}[1]{\texttt{#1}}
\expandafter\ifx\csname urlstyle\endcsname\relax
  \providecommand{\doi}[1]{doi: #1}\else
  \providecommand{\doi}{doi: \begingroup \urlstyle{rm}\Url}\fi

\bibitem[Aiello et~al.(2013)Aiello, Petkos, Martin, Corney, Papadopoulos,
  Skraba, G{\"o}ker, Kompatsiaris, and Jaimes]{aiello2013sensing}
Luca~Maria Aiello, Georgios Petkos, Carlos Martin, David Corney, Symeon
  Papadopoulos, Ryan Skraba, Ayse G{\"o}ker, Ioannis Kompatsiaris, and
  Alejandro Jaimes.
\newblock {Sensing trending topics in Twitter}.
\newblock \emph{IEEE Transactions on Multimedia}, 15\penalty0 (6):\penalty0
  1268--1282, 2013.

\bibitem[Asgari-Chenaghlu et~al.(2021)Asgari-Chenaghlu, Feizi-Derakhshi,
  Farzinvash, Balafar, and Motamed]{asgari2021topic}
Meysam Asgari-Chenaghlu, Mohammad-Reza Feizi-Derakhshi, Leili Farzinvash,
  Mohammad-Ali Balafar, and Cina Motamed.
\newblock {Topic detection and tracking techniques on Twitter: A systematic
  review}.
\newblock \emph{Complexity}, 2021\penalty0 (1):\penalty0 8833084, 2021.

\bibitem[Atefeh and Khreich(2015)]{atefeh2015survey}
Farzindar Atefeh and Wael Khreich.
\newblock {A survey of techniques for event detection in Twitter}.
\newblock \emph{Computational Intelligence}, 31\penalty0 (1):\penalty0
  132--164, 2015.

\bibitem[Bastani et~al.(2025)Bastani, Bastani, Sungu, Ge, Kabakc{\i}, and
  Mariman]{bastani2025generative}
Hamsa Bastani, Osbert Bastani, Alp Sungu, Haosen Ge, {\"O}zge Kabakc{\i}, and
  Rei Mariman.
\newblock {Generative AI without guardrails can harm learning: Evidence from
  high school mathematics}.
\newblock \emph{Proceedings of the National Academy of Sciences}, 122\penalty0
  (26):\penalty0 e2422633122, 2025.

\bibitem[Baumgartner et~al.(2020)Baumgartner, Zannettou, Keegan, Squire, and
  Blackburn]{baumgartner2020pushshift}
Jason Baumgartner, Savvas Zannettou, Brian Keegan, Megan Squire, and Jeremy
  Blackburn.
\newblock {The Pushshift Reddit Dataset}.
\newblock In \emph{Proceedings of the International AAAI Conference on Web and
  Social Media}, volume~14, pages 830--839, 2020.

\bibitem[Bernal et~al.(2017)Bernal, Cummins, and
  Gasparrini]{bernal2017interrupted}
James~Lopez Bernal, Steven Cummins, and Antonio Gasparrini.
\newblock {Interrupted time series regression for the evaluation of public
  health interventions: a tutorial}.
\newblock \emph{International Journal of Epidemiology}, 46\penalty0
  (1):\penalty0 348--355, 2017.

\bibitem[Blei and Lafferty(2006)]{blei2006dynamic}
David~M Blei and John~D Lafferty.
\newblock {Dynamic topic models}.
\newblock In \emph{Proceedings of the 23rd International Conference on Machine
  Learning}, pages 113--120, 2006.

\bibitem[Box and Tiao(1975)]{box1975intervention}
George~EP Box and George~C Tiao.
\newblock {Intervention analysis with applications to economic and
  environmental problems}.
\newblock \emph{Journal of the American Statistical Association}, 70\penalty0
  (349):\penalty0 70--79, 1975.

\bibitem[Brynjolfsson et~al.(2025)Brynjolfsson, Li, and
  Raymond]{brynjolfsson2025generative}
Erik Brynjolfsson, Danielle Li, and Lindsey Raymond.
\newblock {Generative AI at work}.
\newblock \emph{The Quarterly Journal of Economics}, 140\penalty0 (2):\penalty0
  889--942, 2025.

\bibitem[Cen et~al.(2025)Cen, Ilyas, Driss, Park, Hopkins, Podimata,
  et~al.]{cen2025large}
Sarah~H Cen, Andrew Ilyas, Hedi Driss, Charlotte Park, Aspen Hopkins, Chara
  Podimata, et~al.
\newblock {Large-Scale, Longitudinal Study of Large Language Models During the
  2024 US Election Season}.
\newblock \emph{arXiv preprint arXiv:2509.18446}, 2025.

\bibitem[Chandra et~al.(2025)Chandra, Hernandez, Ramos, Ershadi, Bhattacharjee,
  Amores, Okoli, Paradiso, Warreth, and Suh]{chandra2025longitudinal}
Mohit Chandra, Javier Hernandez, Gonzalo Ramos, Mahsa Ershadi, Ananya
  Bhattacharjee, Judith Amores, Ebele Okoli, Ann Paradiso, Shahed Warreth, and
  Jina Suh.
\newblock {Longitudinal study on social and emotional use of AI conversational
  agent}.
\newblock \emph{arXiv preprint arXiv:2504.14112}, 2025.

\bibitem[Chatterji et~al.(2025)Chatterji, Cunningham, Deming, Hitzig, Ong,
  Shan, and Wadman]{chatterji2025people}
Aaron Chatterji, Thomas Cunningham, David~J Deming, Zoe Hitzig, Christopher
  Ong, Carl~Yan Shan, and Kevin Wadman.
\newblock {How people use ChatGPT}.
\newblock Technical report, National Bureau of Economic Research, 2025.

\bibitem[Chen et~al.(2024)Chen, Zaharia, and Zou]{chen2024chatgpt}
Lingjiao Chen, Matei Zaharia, and James Zou.
\newblock {How is ChatGPT’s behavior changing over time?}
\newblock \emph{Harvard Data Science Review}, 6\penalty0 (2), 2024.

\bibitem[Cheng et~al.(2026)Cheng, Lee, Khadpe, Yu, Han, and
  Jurafsky]{cheng2026sycophantic}
Myra Cheng, Cinoo Lee, Pranav Khadpe, Sunny Yu, Dyllan Han, and Dan Jurafsky.
\newblock Sycophantic ai decreases prosocial intentions and promotes
  dependence.
\newblock \emph{Science}, 391\penalty0 (6792):\penalty0 eaec8352, 2026.

\bibitem[Chiang et~al.(2024)Chiang, Zheng, Sheng, Angelopoulos, Li, Li, Zhu,
  Zhang, Jordan, Gonzalez, et~al.]{chiang2024chatbot}
Wei-Lin Chiang, Lianmin Zheng, Ying Sheng, Anastasios~Nikolas Angelopoulos,
  Tianle Li, Dacheng Li, Banghua Zhu, Hao Zhang, Michael Jordan, Joseph~E
  Gonzalez, et~al.
\newblock {Chatbot Arena: An open platform for evaluating LLMs by human
  preference}.
\newblock In \emph{Forty-first International Conference on Machine Learning},
  2024.

\bibitem[Choi et~al.(2023)Choi, Zhang, and Stvilia]{choi2023exploring}
Wonchan Choi, Yan Zhang, and Besiki Stvilia.
\newblock {Exploring applications and user experience with generative AI tools:
  A content analysis of Reddit posts on ChatGPT}.
\newblock \emph{Proceedings of the Association for Information Science and
  Technology}, 60\penalty0 (1):\penalty0 543--546, 2023.

\bibitem[Chowdhury and Garimella(2026)]{chowdhury2026usage}
Shreyasi~R Chowdhury and Kiran Garimella.
\newblock {How People Use ChatGPT: Conversation-Level Evidence from India,
  Nigeria, Brazil and Pakistan}, 2026.
\newblock Preprint available at
  \url{https://gvrkiran.github.io/content/How_people_use_ChatGPT.pdf}.

\bibitem[Dai et~al.(2025{\natexlab{a}})Dai, Gradu, Raji, and
  Recht]{dai2025individual}
Jessica Dai, Paula Gradu, Inioluwa~Deborah Raji, and Benjamin Recht.
\newblock {From Individual Experience to Collective Evidence: A Reporting-Based
  Framework for Identifying Systemic Harms}.
\newblock In \emph{Proceedings of the 42nd International Conference on Machine
  Learning}, volume 267 of \emph{Proceedings of Machine Learning Research},
  pages 12063--12083. PMLR, 2025{\natexlab{a}}.

\bibitem[Dai et~al.(2025{\natexlab{b}})Dai, Raji, Recht, and
  Chen]{dai2025aggregated}
Jessica Dai, Inioluwa~Deborah Raji, Benjamin Recht, and Irene~Y Chen.
\newblock {Aggregated Individual Reporting for Post-Deployment Evaluation}.
\newblock \emph{arXiv preprint arXiv:2506.18133}, 2025{\natexlab{b}}.

\bibitem[Danescu-Niculescu-Mizil et~al.(2013)Danescu-Niculescu-Mizil, West,
  Jurafsky, Leskovec, and Potts]{danescu2013no}
Cristian Danescu-Niculescu-Mizil, Robert West, Dan Jurafsky, Jure Leskovec, and
  Christopher Potts.
\newblock {No country for old members: User lifecycle and linguistic change in
  online communities}.
\newblock In \emph{Proceedings of the 22nd International Conference on World
  Wide Web}, pages 307--318, 2013.

\bibitem[Demirel et~al.(2025)Demirel, Kahraman-Gokalp, and
  G{\"u}nd{\"u}z]{demirel2025optimism}
Sadettin Demirel, Elif Kahraman-Gokalp, and U{\u{g}}ur G{\"u}nd{\"u}z.
\newblock {From optimism to concern: Unveiling sentiments and perceptions
  surrounding ChatGPT on Twitter}.
\newblock \emph{International Journal of Human--Computer Interaction},
  41\penalty0 (12):\penalty0 7292--7314, 2025.

\bibitem[Deng et~al.(2024)Deng, Yurrita, D{\'\i}az, Suh, Judd, Groves, Shen,
  Eslami, and Holstein]{deng2024responsible}
Wesley~Hanwen Deng, Mireia Yurrita, Mark D{\'\i}az, Jina Suh, Nick Judd, Lara
  Groves, Hong Shen, Motahhare Eslami, and Kenneth Holstein.
\newblock {Responsible Crowdsourcing for Responsible Generative AI: Engaging
  Crowds in AI Auditing and Evaluation}.
\newblock In \emph{Proceedings of the AAAI Conference on Human Computation and
  Crowdsourcing}, volume~12, pages 148--150, 2024.

\bibitem[Depounti et~al.(2023)Depounti, Saukko, and Natale]{depounti2023ideal}
Iliana Depounti, Paula Saukko, and Simone Natale.
\newblock {Ideal technologies, ideal women: AI and gender imaginaries in
  Redditors’ discussions on the Replika bot girlfriend}.
\newblock \emph{Media, Culture \& Society}, 45\penalty0 (4):\penalty0 720--736,
  2023.

\bibitem[Desiderio et~al.(2025)Desiderio, Mancini, Cimini, and
  Di~Clemente]{desiderio2025highly}
Antonio Desiderio, Anna Mancini, Giulio Cimini, and Riccardo Di~Clemente.
\newblock {Highly engaging events reveal semantic and temporal compression in
  online community discourse}.
\newblock \emph{PNAS Nexus}, 4\penalty0 (3):\penalty0 pgaf056, 2025.

\bibitem[Fang et~al.(2025)Fang, Liu, Danry, Lee, Chan, Pataranutaporn, Maes,
  Phang, Lampe, Ahmad, et~al.]{fang2025ai}
Cathy~Mengying Fang, Auren~R Liu, Valdemar Danry, Eunhae Lee, Samantha~WT Chan,
  Pat Pataranutaporn, Pattie Maes, Jason Phang, Michael Lampe, Lama Ahmad,
  et~al.
\newblock {How AI and Human Behaviors Shape Psychosocial Effects of Extended
  Chatbot Use: A Longitudinal Randomized Controlled Study}.
\newblock \emph{arXiv preprint arXiv:2503.17473}, 2025.

\bibitem[Fedoryszak et~al.(2019)Fedoryszak, Frederick, Rajaram, and
  Zhong]{fedoryszak2019real}
Mateusz Fedoryszak, Brent Frederick, Vijay Rajaram, and Changtao Zhong.
\newblock {Real-time event detection on social data streams}.
\newblock In \emph{Proceedings of the 25th ACM SIGKDD International Conference
  on Knowledge Discovery \& Data Mining}, pages 2774--2782, 2019.

\bibitem[Goh et~al.(2024)Goh, Gallo, Hom, Strong, Weng, Kerman, Cool, Kanjee,
  Parsons, Ahuja, et~al.]{goh2024large}
Ethan Goh, Robert Gallo, Jason Hom, Eric Strong, Yingjie Weng, Hannah Kerman,
  Jos{\'e}phine~A Cool, Zahir Kanjee, Andrew~S Parsons, Neera Ahuja, et~al.
\newblock {Large language model influence on diagnostic reasoning: a randomized
  clinical trial}.
\newblock \emph{JAMA Network Open}, 7\penalty0 (10):\penalty0
  e2440969--e2440969, 2024.

\bibitem[Haddon(2007)]{haddon2007roger}
Leslie Haddon.
\newblock {Roger Silverstone’s legacies: domestication}.
\newblock \emph{New Media \& Society}, 9\penalty0 (1):\penalty0 25--32, 2007.

\bibitem[Hanson and Bolthouse(2024)]{hanson2024replika}
Kenneth~R Hanson and Hannah Bolthouse.
\newblock {“Replika Removing Erotic Role-Play Is Like Grand Theft Auto
  Removing Guns or Cars”: Reddit Discourse on Artificial Intelligence
  Chatbots and Sexual Technologies}.
\newblock \emph{Socius}, 10:\penalty0 23780231241259627, 2024.

\bibitem[Jiang et~al.(2025)Jiang, Sun, Dunlap, Smith, and
  Nanda]{jiang2025interpretable}
Nick Jiang, Xiaoqing Sun, Lisa Dunlap, Lewis Smith, and Neel Nanda.
\newblock {Interpretable Embeddings with Sparse Autoencoders: A Data Analysis
  Toolkit}.
\newblock \emph{arXiv preprint arXiv:2512.10092}, 2025.

\bibitem[Jung et~al.(2025)Jung, Lee, Huang, and Chen]{jung2025ve}
Kyuha Jung, Gyuho Lee, Yuanhui Huang, and Yunan Chen.
\newblock {'I've talked to ChatGPT about my issues last night.': Examining
  Mental Health Conversations with Large Language Models through Reddit
  Analysis}.
\newblock \emph{Proceedings of the ACM on Human-Computer Interaction},
  9\penalty0 (7):\penalty0 1--25, 2025.

\bibitem[Karimiziarani(2022)]{karimiziarani2022tutorial}
Mohammadsepehr Karimiziarani.
\newblock {A tutorial on event detection using social media data analysis:
  Applications, challenges, and open problems}.
\newblock \emph{arXiv preprint arXiv:2207.03997}, 2022.

\bibitem[Khawaja and B{\'e}lisle-Pipon(2023)]{khawaja2023your}
Zoha Khawaja and Jean-Christophe B{\'e}lisle-Pipon.
\newblock {Your robot therapist is not your therapist: understanding the role
  of AI-powered mental health chatbots}.
\newblock \emph{Frontiers in Digital Health}, 5:\penalty0 1278186, 2023.

\bibitem[Kolajo et~al.(2022)Kolajo, Daramola, and Adebiyi]{kolajo2022real}
Taiwo Kolajo, Olawande Daramola, and Ayodele~A Adebiyi.
\newblock {Real-time event detection in social media streams through semantic
  analysis of noisy terms}.
\newblock \emph{Journal of Big Data}, 9\penalty0 (1):\penalty0 90, 2022.

\bibitem[Leskovec et~al.(2009)Leskovec, Backstrom, and
  Kleinberg]{leskovec2009meme}
Jure Leskovec, Lars Backstrom, and Jon Kleinberg.
\newblock {Meme-tracking and the dynamics of the news cycle}.
\newblock In \emph{Proceedings of the 15th ACM SIGKDD International Conference
  on Knowledge Discovery and Data Mining}, pages 497--506, 2009.

\bibitem[Li et~al.(2017)Li, Nourbakhsh, Shah, and Liu]{li2017real}
Quanzhi Li, Armineh Nourbakhsh, Sameena Shah, and Xiaomo Liu.
\newblock {Real-time novel event detection from social media}.
\newblock In \emph{2017 IEEE 33rd International Conference on Data Engineering
  (ICDE)}, pages 1129--1139. IEEE, 2017.

\bibitem[Mathioudakis and Koudas(2010)]{mathioudakis2010twittermonitor}
Michael Mathioudakis and Nick Koudas.
\newblock {TwitterMonitor: Trend detection over the Twitter stream}.
\newblock In \emph{Proceedings of the 2010 ACM SIGMOD International Conference
  on Management of Data}, pages 1155--1158, 2010.

\bibitem[McCreadie et~al.(2013)McCreadie, Macdonald, Ounis, Osborne, and
  Petrovic]{mccreadie2013scalable}
Richard McCreadie, Craig Macdonald, Iadh Ounis, Miles Osborne, and Sasa
  Petrovic.
\newblock {Scalable distributed event detection for Twitter}.
\newblock In \emph{2013 IEEE International Conference on Big Data}, pages
  543--549. IEEE, 2013.

\bibitem[Monroe et~al.(2008)Monroe, Colaresi, and Quinn]{monroe2008fightin}
Burt~L Monroe, Michael~P Colaresi, and Kevin~M Quinn.
\newblock {Fightin'words: Lexical feature selection and evaluation for
  identifying the content of political conflict}.
\newblock \emph{Political Analysis}, 16\penalty0 (4):\penalty0 372--403, 2008.

\bibitem[Moore et~al.(2025)Moore, Grabb, Agnew, Klyman, Chancellor, Ong, and
  Haber]{moore2025expressing}
Jared Moore, Declan Grabb, William Agnew, Kevin Klyman, Stevie Chancellor,
  Desmond~C Ong, and Nick Haber.
\newblock {Expressing stigma and inappropriate responses prevents LLMs from
  safely replacing mental health providers}.
\newblock In \emph{Proceedings of the 2025 ACM Conference on Fairness,
  Accountability, and Transparency}, pages 599--627, 2025.

\bibitem[Moore et~al.(2026)Moore, Mehta, Agnew, Anthis, Louie, Mai, Yin, Cheng,
  Paech, Klyman, et~al.]{moore2026characterizing}
Jared Moore, Ashish Mehta, William Agnew, Jacy~Reese Anthis, Ryan Louie, Yifan
  Mai, Peggy Yin, Myra Cheng, Samuel~J Paech, Kevin Klyman, et~al.
\newblock {Characterizing Delusional Spirals through Human-LLM Chat Logs}.
\newblock In \emph{Proceedings of the 2026 ACM Conference on Fairness,
  Accountability, and Transparency}. ACM, 2026.

\bibitem[Movva et~al.(2025)Movva, Peng, Garg, Kleinberg, and
  Pierson]{movva2025sparse}
Rajiv Movva, Kenny Peng, Nikhil Garg, Jon Kleinberg, and Emma Pierson.
\newblock {Sparse Autoencoders for Hypothesis Generation}.
\newblock In \emph{Proceedings of the 42nd International Conference on Machine
  Learning}, volume 267 of \emph{Proceedings of Machine Learning Research},
  pages 44997--45023. PMLR, 2025.

\bibitem[Newey and West(1987)]{newey1987simple}
Whitney~K Newey and Kenneth~D West.
\newblock {A Simple, Positive Semi-Definite, Heteroskedasticity and
  Autocorrelation Consistent Covariance Matrix}.
\newblock \emph{Econometrica: Journal of the Econometric Society}, pages
  703--708, 1987.

\bibitem[OpenAI(2024)]{openai2024gpt4o}
OpenAI.
\newblock {GPT}-4o system card.
\newblock \url{https://cdn.openai.com/gpt-4o-system-card.pdf}, 2024.

\bibitem[Pataranutaporn et~al.(2025)Pataranutaporn, Karny, Archiwaranguprok,
  Albrecht, Liu, and Maes]{pataranutaporn2025my}
Pat Pataranutaporn, Sheer Karny, Chayapatr Archiwaranguprok, Constanze
  Albrecht, Auren~R Liu, and Pattie Maes.
\newblock {"My Boyfriend is AI": A Computational Analysis of Human-AI
  Companionship in Reddit's AI Community}.
\newblock \emph{arXiv preprint arXiv:2509.11391}, 2025.

\bibitem[Peng et~al.(2025)Peng, Movva, Kleinberg, Pierson, and
  Garg]{peng2025use}
Kenny Peng, Rajiv Movva, Jon Kleinberg, Emma Pierson, and Nikhil Garg.
\newblock {Use Sparse Autoencoders to Discover Unknown Concepts, Not to Act on
  Known Concepts}.
\newblock \emph{arXiv preprint arXiv:2506.23845}, 2025.

\bibitem[{Pew Research Center}(2025)]{PewSocialMediaFactSheet2025}
{Pew Research Center}.
\newblock {Demographics of Social Media Users and Adoption in the United
  States}.
\newblock Social Media Fact Sheet, 2025.
\newblock URL
  \url{https://www.pewresearch.org/internet/fact-sheet/social-media/}.
\newblock Survey conducted Feb. 5--June 18, 2025.

\bibitem[Pham et~al.(2024)Pham, Hoyle, Sun, Resnik, and
  Iyyer]{pham2024topicgpt}
Chau~Minh Pham, Alexander Hoyle, Simeng Sun, Philip Resnik, and Mohit Iyyer.
\newblock {TopicGPT: A prompt-based topic modeling framework}.
\newblock In \emph{Proceedings of the 2024 Conference of the North American
  Chapter of the Association for Computational Linguistics: Human Language
  Technologies (Volume 1: Long Papers)}, pages 2956--2984, 2024.

\bibitem[Proferes et~al.(2021)Proferes, Jones, Gilbert, Fiesler, and
  Zimmer]{proferes2021studying}
Nicholas Proferes, Naiyan Jones, Sarah Gilbert, Casey Fiesler, and Michael
  Zimmer.
\newblock {Studying Reddit: A systematic overview of disciplines, approaches,
  methods, and ethics}.
\newblock \emph{Social Media + Society}, 7\penalty0 (2):\penalty0
  20563051211019004, 2021.

\bibitem[Qiu et~al.(2025)Qiu, Ma, Wu, and Yang]{qiu2025text}
Zitai Qiu, Congbo Ma, Jia Wu, and Jian Yang.
\newblock {Text is All You Need: LLM-enhanced Incremental Social Event
  Detection}.
\newblock In \emph{Proceedings of the 63rd Annual Meeting of the Association
  for Computational Linguistics (Volume 1: Long Papers)}, pages 4666--4680,
  2025.

\bibitem[Qutieshat(2024)]{qutieshat2024unveiling}
Abubaker Qutieshat.
\newblock {Unveiling the multifaceted public interest in ChatGPT: A study on
  societal implications and operational realities}.
\newblock \emph{Journal of Digital Social Research}, 6\penalty0 (3):\penalty0
  112--125, 2024.

\bibitem[Ramdas and Wang(2025)]{ramdas2025hypothesis}
Aaditya Ramdas and Ruodu Wang.
\newblock {Hypothesis testing with e-values}.
\newblock \emph{Foundations and Trends{\textregistered} in Statistics},
  1\penalty0 (1-2):\penalty0 1--390, 2025.

\bibitem[Reuter et~al.(2024)Reuter, Thielmann, Weisser, Fischer, and
  S{\"a}fken]{reuter2024gptopic}
Arik Reuter, Anton Thielmann, Christoph Weisser, Sebastian Fischer, and
  Benjamin S{\"a}fken.
\newblock {GPTopic: Dynamic and interactive topic representations}.
\newblock \emph{arXiv preprint arXiv:2403.03628}, 2024.

\bibitem[Shamma et~al.(2011)Shamma, Kennedy, and Churchill]{shamma2011peaks}
David~A Shamma, Lyndon Kennedy, and Elizabeth~F Churchill.
\newblock {Peaks and persistence: modeling the shape of microblog
  conversations}.
\newblock In \emph{Proceedings of the ACM 2011 Conference on Computer Supported
  Cooperative Work}, pages 355--358, 2011.

\bibitem[Tamkin et~al.(2024)Tamkin, McCain, Handa, Durmus, Lovitt, Rathi,
  Huang, Mountfield, Hong, Ritchie, et~al.]{tamkin2024clio}
Alex Tamkin, Miles McCain, Kunal Handa, Esin Durmus, Liane Lovitt, Ankur Rathi,
  Saffron Huang, Alfred Mountfield, Jerry Hong, Stuart Ritchie, et~al.
\newblock {Clio: Privacy-preserving insights into real-world AI use}.
\newblock \emph{arXiv preprint arXiv:2412.13678}, 2024.

\bibitem[Tunca(2025)]{tunca2025tracing}
Sezai Tunca.
\newblock {Tracing the evolving discourse of sexual technology: A longitudinal
  analysis of emotional, ethical, and technological narratives on Reddit}.
\newblock \emph{Sociology Compass}, 19\penalty0 (8):\penalty0 e70110, 2025.

\bibitem[Vosoughi et~al.(2018)Vosoughi, Roy, and Aral]{vosoughi2018spread}
Soroush Vosoughi, Deb Roy, and Sinan Aral.
\newblock {The spread of true and false news online}.
\newblock \emph{Science}, 359\penalty0 (6380):\penalty0 1146--1151, 2018.

\bibitem[Wu and Huberman(2007)]{wu2007novelty}
Fang Wu and Bernardo~A Huberman.
\newblock {Novelty and collective attention}.
\newblock \emph{Proceedings of the National Academy of Sciences}, 104\penalty0
  (45):\penalty0 17599--17601, 2007.

\bibitem[Xie et~al.(2016)Xie, Zhu, Jiang, Lim, and Wang]{xie2016topicsketch}
Wei Xie, Feida Zhu, Jing Jiang, Ee-Peng Lim, and Ke~Wang.
\newblock {TopicSketch: Real-time bursty topic detection from Twitter}.
\newblock \emph{IEEE Transactions on Knowledge and Data Engineering},
  28\penalty0 (8):\penalty0 2216--2229, 2016.

\bibitem[Xu et~al.(2024)Xu, Fang, Huang, and Xie]{xu2024public}
Zhaoxiang Xu, Qingguo Fang, Yanbo Huang, and Mingjian Xie.
\newblock {The public attitude towards ChatGPT on Reddit: A study based on
  unsupervised learning from sentiment analysis and topic modeling}.
\newblock \emph{PLOS ONE}, 19\penalty0 (5):\penalty0 e0302502, 2024.

\bibitem[Xu and Ramdas(2024)]{xu2024online}
Ziyu Xu and Aaditya Ramdas.
\newblock {Online multiple testing with e-values}.
\newblock In \emph{International Conference on Artificial Intelligence and
  Statistics}, pages 3997--4005. PMLR, 2024.

\bibitem[Yu et~al.(2025)Yu, Huang, Liu, and Tan]{yu2025emotions}
Yifan Yu, Shan Huang, Yuchen Liu, and Yong Tan.
\newblock {Emotions in online content diffusion}.
\newblock \emph{Information Systems Research}, 2025.

\end{thebibliography}
